\documentclass[conference,onecolumn]{IEEEtran}

\IEEEoverridecommandlockouts                              
\overrideIEEEmargins

\usepackage{enumerate}

\usepackage{graphicx, epsfig}\usepackage{epstopdf}

\newtheorem{Remark}{Remark}

\newtheorem{Lemma}{Lemma}

\usepackage{algorithmic}
\usepackage{algorithm}

\usepackage{amsmath}
\usepackage{wasysym}

\usepackage{stfloats} 


\usepackage{url}

\usepackage{amsfonts}
\usepackage{amssymb}

\usepackage{multirow}

\makeatletter

\newcommand{\Rmnum}[1]{\expandafter\@slowromancap\romannumeral #1@}
\makeatother

\hyphenation{inter-ference align-ment}

\usepackage{authblk}
\title{\LARGE \bf On Achievable Schemes of Interference Alignment in Constant Channels via Finite Amplify-and-Forward Relays}
\author{Haichuan Zhou, Tharm Ratnarajah$\dag$\\
 The Institute of Electronics, Communications and Information Technology (ECIT),\\
Queen's University Belfast, Queen's Road, Belfast, UK\\
$\dag$The University of Edinburgh, Edinburgh, UK\\
Email: hzhou01@qub.ac.uk }

\begin{document}

\maketitle
\thispagestyle{empty}
\pagestyle{empty}

\begin {abstract}
This paper elaborates on the achievable schemes of interference
alignment in constant channels via finite amplify-and-forward (AF)
relays. Consider $K$ sources communicating with $K$ destinations
without direct links besides the relay connections. The total number
of relays is finite. The objective is to achieve interference
alignment for all user pairs to obtain half of their
interference-free degrees of freedom. In general, two strategies are
employed: \textit{coding at the edge} and \textit{coding in the
middle}, in which relays show different roles. The contributions are
that two fundamental and critical elements are captured to enable
interference alignment in this network: channel \textit{randomness}
or \textit{relativity}; subspace dimension suppression.
\end {abstract}

\IEEEpeerreviewmaketitle

\section{Introduction}


Since interference alignment (IA) is a new multiplexing gain
maximizing technique \cite{IA-DOF-Kuser-Interference} and relay is
considered a cost-effective solution for coverage extension and
capacity enhancement, recently many researchers have been
investigating schemes to combine these two techniques
\cite{can-increase-DOF,DOF-Multisource-Relay}. Various scenarios
have been intensively studied, e.g. decode-and-forward relays
\cite{interference-relay-dynamic-DF,relay-beamforming-Interference-Pricing-two-hop},
multi-user broadcast channel
\cite{linear-precoding-AF-MU-twoway-relay}, clustered relays
\cite{DoF-parallel-relay}, two-way relay selection
\cite{optimal-relay-selection-allocation-MU-analog-twoway}. Among
all the scenarios, the model of multi-user peer-peer two-hop
interference channel with amplify-and-forward relays without direct
links has received increasing attention due to its wide application
in practice
\cite{IA-MIMO-AF-IFC,IA-MU-TwoWay-Relay,achievable-two-hop-interference-conferencing-relay,IA-aided-relay-quasi-static-X},
which show that relays help the system to obtain higher degrees of
freedom (DoF) in high signal-to-noise ratio (SNR) regime.

While peer-peer multi-user interference channel has been
traditionally regarded as a challenging scenario due to channel
\textit{inseparability} \cite{Ergodic-IA}, not to mention with
relays in between. Furthermore, current relay networks are
categorized into two main sets: one is that relays are auxiliary
links in addition to direct links between both ends
\cite{can-increase-DOF,relay-IA-feasibility-algorithm,RelayAided-IA-QuasiStaticX,Relay-IA-QuasiStatic-IFC,achieve-DoF-K-user-MIMO-IFC-relay};
the other is that relays are the only connection path for end nodes
without direct links \cite{DOF-Multisource-Relay}. In general, relay
networks with direct links are well structured to construct IA.
However, this work considers relay networks without direct links
where relay generated equivalent channels are quite complicated and
poorly structured
\cite{DOF-Multisource-Relay,DoF-region-class-multisource-Gaussian-relay,aligh-Int-neutra-222}.
In that case, many approaches are proposed, e.g.
\cite{DOF-Multisource-Relay} requires an infinite number of relays
to eliminate interference;
\cite{DoF-region-class-multisource-Gaussian-relay} exploits ergodic
nature in fading channels; \cite{aligh-Int-neutra-222} illustrates
interference \textit{neutralization} in a special $2\times2\times2$
network; \cite{IA-MIMO-AF-IFC} uses mean squared error (MSE)
numerical method to minimize interference.

Three important features are highlighted in the scenario of this
work: the number of relays is finite so that it is impossible to
have the unpractical solution in \cite{DOF-Multisource-Relay} to
eliminate interference with infinite number of relays; each node
could only have single antenna to achieve IA; time-invariant
channels are also available to achieve IA even with single antenna
nodes \cite{IA-DOF-Kuser-Interference}. This work borrows the two
strategies generalized in multiple unicast problem
\cite{NC-Three-Unicast-IA}, which are coding \textit{at the edge}
and coding \textit{in the middle}. In the first strategy of coding
at the edge, relays randomly construct {equivalent} channels while
end nodes proceed with conventional interference alignment schemes.
Compared with other research, the most unfavorable conditions are
set in this work: all end and relay nodes are single antenna; all
channels are time-invariant; relay number is set to be finite, e.g
one or two. Max-flow-min-cut theorem is not directly applicable in
this scenario. In the second strategy of coding in the middle,
optimization techniques are applied to numerically approach
interference alignment with all nodes set to be multi-antenna.

Our contributions are in two folds:


A new fact is unveiled that when the network has only one relay,
Cadambe-Jafar scheme is not applicable in signal \textit{vector}
alignment; Motahari-Khandani scheme is not applicable in signal
\textit{level} alignment; asymmetric complex signaling is not
applicable in \textit{phase} alignment. The reason is that
relay-emulated channels lose {randomness} or {relativity}. Then the
problem is settled by two ways. On one hand, at least two relays are
necessary to emulate qualified channels to generate randomness. On
the other hand, to generate randomness still in the one-relay
channel, space-time type of precoding methods are applicable at the
edge; the conceptual idea of \textit{blind} interference alignment
is also applicable to {fluctuate} channels with the only one relay.
By all these analysis and schemes, a novel unique role of relays in
the network is revealed to overcome the channel randomness issue.



A novel solution is proposed for IA via relay coding based on a
non-trivial application of existing rank constrained rank
minimization (RCRM) method \cite{IA-RCRM,IA-RCRM-Globcom}.
Conventional optimization and numerical algorithms are non-convex or
hard for interference alignment
\cite{approaching-capacity-IA,IA-MIMO-AF-IFC,relay-beamforming-Interference-Pricing-two-hop}.
Since RCRM method is originally not for relay networks, the
convexity is proved for the new application. This novel solution has
three advantages: a) it considers DoF at high SNR by subspace
dimension minimization while other numerical methods could only
consider sum rate and mean squared error (MSE); b) it is universal
to be conveniently applied to a number of scenarios, e.g. MIMO
amplify-and-forward(AF) relay channel, MIMO two-way AF relay
channel, MIMO Y channel, and MIMO multi-hop relay channel. c)
actually conventional analytical and numerical solutions are hard to
obtain for the mentioned scenarios, while this novel method
accomplishes the design and is robust to poor conditions. Numerical
results show its effectiveness.

\section{Problem Statement and Model Description}\label{chater-relay-section-definition}
\label{sect:probdef}

\subsection{Basic Model} Consider $K$ sources and $K$ destinations connected by $L$
relays as shown in Fig. \ref{relay-IA}. Direct links between end
users are not available. All nodes are single antenna, denote the
channel from source $i$ to relay $l$ as $\mathrm{h}_{li}$ and the
channel from relay $l$ to destination $j$ as $\mathrm{f}_{jl}$.
Denote the sets $\mathcal{K}=\{1,2,\ldots,K\}$,
$\mathcal{L}=\{1,2,\ldots,L\}$, so that $i,j\in\mathcal{K}$,
$l\in\mathcal{L}$. All channels are quasi-static, i.e.
$\mathrm{h}_{li}$ and $\mathrm{f}_{jl}$ are scalar constants.

\begin{figure}[htpb]
  \centering
    \includegraphics[width=3.0in]{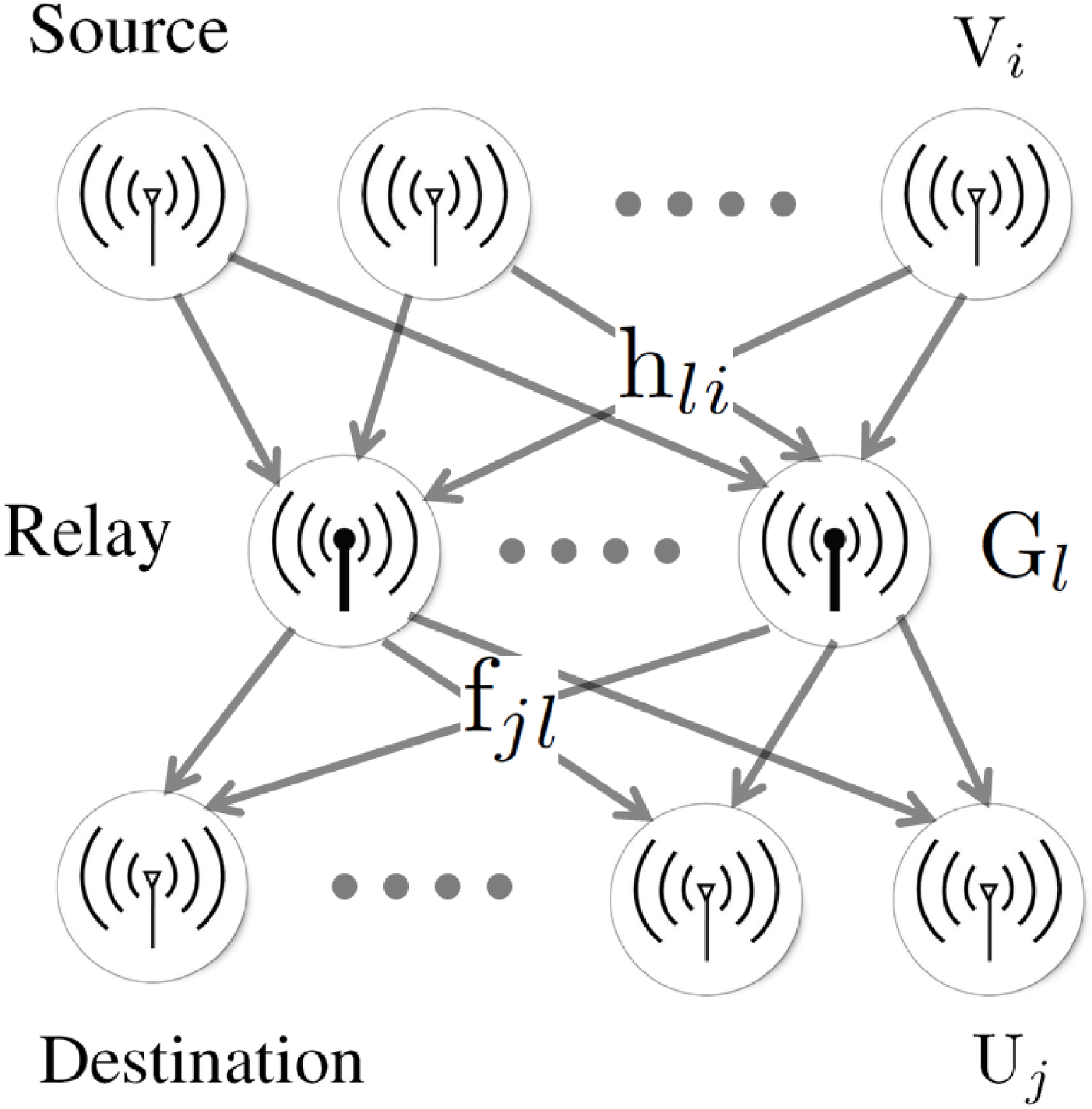}\\
  \caption{Basic model of sources, relays, and destinations}\label{relay-IA}
\end{figure}

Time-extended MIMO scheme in \cite{DOF-Multisource-Relay} is used so
that $T$ consecutive symbols form one signal and the relay coding
matrix at $l$-th relay is
\begin{equation}\label{G}
\begin{aligned}
\mathrm{G}_{l}=\Big[\mathrm{g}_{l}(p,q)\Big]_{T\times T}
\end{aligned}
\end{equation}
in which $\mathrm{g}_{l}(p,q)$ is the relay gain from the $q$-th
time slot to the $p$-the time slot at the $l$-th relay. Then the
received signal at $j$-th user can be written as


\begin{equation}\label{output-HD-bf}
\begin{aligned}
\mathrm{Y}_{j}&=\Theta_{jj}\mathrm{V}_j\mathrm{X}_j+\sum_{i=1,i\neq
j}^{K}\Theta_{ji}\mathrm{V}_i\mathrm{X}_i+\mathrm{Z}_{j}
\end{aligned}
\end{equation}

\begin{equation}\label{Theta}
\begin{aligned}
\Theta_{ji}=\sum_{l=1}^{L}\mathrm{f}_{jl}\mathrm{G}_{l}\mathrm{h}_{li}
\end{aligned}
\end{equation}

where $\mathrm{X}_{i}\in\mathbb{C}^{d\times 1}$ is the $d$ input
streams at $i$-th source and $\mathrm{V}_{i}\in\mathbb{C}^{T\times
d}$ is precoding matrix. $\mathrm{Z}_{j} \in\mathbb{C}^{T\times 1}$
is the noise at $j$-th receiver, and
$\mathrm{Y}_{j}\in\mathbb{C}^{T\times 1}$. The equivalent channel
matrix from the $i$-th source to the $j$-th destination is defined
as $\Theta_{ji}$ for simplicity. The holistic system equation is
also written as:

\begin{equation}\label{output-HD-SYS}
\left[
\begin{aligned}
&\mathrm{Y}_{1}\\
&\mathrm{Y}_{2}\\
&\ \ \vdots \\
&\mathrm{Y}_{K}
\end{aligned}
\right]
=
\sum_{l=1}^{L}
\left[
\begin{aligned}
&\mathrm{f}_{1l}\mathrm{h}_{l1}\mathrm{G}_{l}\cdots\mathrm{f}_{1l}\mathrm{h}_{lK}\mathrm{G}_{l}\\
&\mathrm{f}_{2l}\mathrm{h}_{l1}\mathrm{G}_{l}\cdots\mathrm{f}_{2l}\mathrm{h}_{lK}\mathrm{G}_{l}\\
&\ \ \ \ddots\ \ \ \mathrm{f}_{jl}\mathrm{h}_{li}\mathrm{G}_{l}\ \ \ddots\\
&\mathrm{f}_{Kl}\mathrm{h}_{l1}\mathrm{G}_{l}\cdots\mathrm{f}_{Kl}\mathrm{h}_{lK}\mathrm{G}_{l}
\end{aligned}
\right]
\left[
\begin{aligned}
&\mathrm{V}_1\mathrm{X}_1\\
&\mathrm{V}_2\mathrm{X}_2\\
&\ \ \ \vdots\\
&\mathrm{V}_K\mathrm{X}_K
\end{aligned}
\right]
+
\left[
\begin{aligned}
&\mathrm{Z}_{1}\\
&\mathrm{Z}_{2}\\
&\ \ \vdots \\
&\mathrm{Z}_{K}
\end{aligned}
\right]
\end{equation}

\subsection{General Model}

For more general cases, the model is illustrated in Fig.
\ref{relay-IA-pic2}. Denote $K$ nodes on one end as ${I}_1, {I}_2,
\cdots, {I}_K$; $L$ relay nodes in the middle as ${R}_1, {R}_2,
\cdots, {R}_L$; $K$ nodes on the other end as ${J}_1, \cdots,
{J}_{K-1}, {J}_K$. Define sets $\mathcal{K}=\{1,2,\ldots,K\}$,
$\mathcal{L}=\{1,2,\ldots,L\}$. All channels are {time-invariant}.
Assume all nodes have all channel knowledge to cooperate.

\vspace{-0mm}
\begin{figure}[htpb]
  \begin{center}
    \includegraphics[width=3.0in]{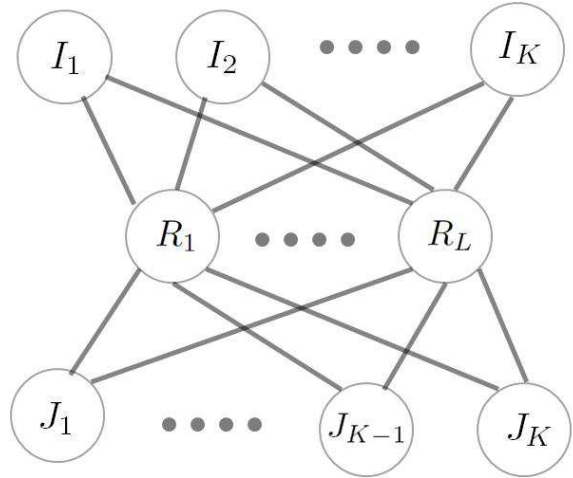}\\
  \end{center}
  \caption{General model of sources, relays, destinations}\label{relay-IA-pic2}
\end{figure}

In the case when all nodes are single antenna, denote
$\mathrm{h}_{R_lI_i}$, $\mathrm{h}_{J_jR_l}$, as channels from $I_i$
to $R_l$, $R_l$ to $J_j$, respectively, where $i,j\in\mathcal{K}$,
$l\in\mathcal{L}$, all channel coefficients are scalars. While in
the case when all nodes are multiple antenna, denote
$\mathrm{H}_{R_lI_i}$, $\mathrm{H}_{I_iR_l}$, $\mathrm{H}_{J_jR_l}$,
$\mathrm{H}_{R_lJ_j}$, as channels from $I_i$ to $R_l$, $R_l$ to
$I_i$, $R_l$ to $J_j$, $J_j$ to $R_l$, respectively, where
$i,j\in\mathcal{K}$, $l\in\mathcal{L}$, all channel coefficients are
matrices.
%
%
%

\section{Characteristics of Amplify-and-Forward and Finite Relays}

Before detailed design and analysis, this relay scenario needs to be
characterized and some critical features should be exposed. First,
the applicability of max-flow-min-cut theorem in this network is
actually in doubt along with the amplify-and-forward scheme. Second,
the number of relays, achievable DoF, and coding strategy are
correlated in a complicated form.

\subsection{Amplify-and-Forward Scheme and Max-Flow-Min-Cut Theorem}

A natural question arises here: Does max-flow-min-cut theorem still
work in this IA relay network? Another similar question is: in a
$K$-pair network with finite relays, what does the achievable DoF
equal to? The answers to these two questions are quite nuanced
actually.

Look into the $K$-pair network with $L$ relays. The first hop is a
$K\times L$ MIMO link and the second hop is a $L\times K$ MIMO link.
Conventionally, both the first and the second channel matrices only
have a rank of $L$, which means that for the whole $K$-pair network,
the minimum cut is $L$. However, the degrees of freedom for the
network is not necessarily $L$. Consider the $K$-pair network
regardless of the middle connections, so that there are potentially
$K$ DoF for this equivalent network as long as with joint processing
across the $K$ transmitters or receivers. The critical difference is
that the intermediate relay nodes do not necessarily process data
flows directly, and could act only as equivalent channels. In
addition, instead of joint processing, a distributed manner such as
interference alignment could still possibly achieve $K/2$ DoF.
However, it is also important to highlight that when there are
finite single-antenna relays, the equivalent channel may be rank
deficient, which is discussed detailedly in the following parts.

DoF as in the signal processing model is not simply equal to
capacity flow links as in network coding model. In the flow network
model, each node has separate inputs and outputs, and each edge
represents a separate link. While in amplify-and-forward relay
networks, arbitrary flows could overlap through same relays, and
links in each hop does not
represent separate flows.
Max-flow min-cut theorem is originally applied to wired networks
with single-letter characterizations. A recent model for wireless
relay networks is known as the linear deterministic relay network
model with a max-flow min-cut result pertaining to it
\cite{deterministic-relay}. The algorithmic framework is introduced
by Avestimehr, Diggavi and Tse, incorporating the key features of
broadcasting and superposition. The signals are elements of a finite
field and the interactions between the signals are assumed to be
linear. This model is based on linking systems and the max-flow
min-cut theorem is applicable with matroid intersection or
partition.

Moreover, \cite{NC-Three-Unicast-IA} shows that in a network with 3
unicast sessions each with min-cut of 1, whenever network alignment
can achieve rate of 1/2 per session, there exists an alternative
approach including routing, packing butterflies, random linear
network coding, or other network coding strategies instead of
alignment. However when there are more than 3 sessions, alignment is
required to obtain the maximum rate as 1/2 the min-cut, while no
other method can achieve it.

In summary, the max-flow min-cut theorem is not exactly available
for this amplify-and-forward system, and DoF is not determined
directly based on the number of relays. So that it is non-trivial to
investigate new schemes in the amplify-and-forward strategy. Before
further understanding and analyzing the DoF of this network,
different situations need to be classified as following.

\subsection{Finite/Infinite Relays, DoF Limits, and Strategies}

For the $K$-pair amplify-and-forward relay network, all the cases
are roughly classified into three categories: when infinite relays
are provided, full $K$ DoF is achievable, by using relay coding;
when finite relays are provided, only fractional DoF is achievable
as the number of users $K$ grows, by using conventional MIMO
precoding approach; when a specific range of finite relays are
provided, $K/2$ DoF is possible to be obtained, by using asymptotic
interference alignment precoding approach. The details are shown in
Table \ref{AF-3-Classification} and discussed as following.

\begin{table}[!hbp]
\centering
\begin{tabular}{|c|c|c|c|c|c|}
\hline
Case &  Relay Number  &  Target DoF  &  Coding  &  Channel  & IA Scheme  \\
\hline
  1  &  Infinite  &  $K$  &  Relay Coding  &  Generic  &   N/A   \\
\hline
  2  &  Finite  &  $\frac{2}{K+1}$  &  Precoding  &  $\begin{aligned}&\ \ \text{Generic}\\&\text{or Diagonal}\end{aligned}$  &   $\begin{aligned}&\ \ \ \text{Leakage}\\&\text{Minimization}\end{aligned}$   \\
\hline
  3  &  $\begin{aligned}&\text{Specific}\\&\text{Finite}\end{aligned}$  &  $K/2$  &  Precoding  &  Diagonal  &   Asymptotic   \\
\hline
\end{tabular}\caption{Classifications according to relay number, DoF Limits, and Strategies}\label{AF-3-Classification}
\end{table}

\subsubsection{Infinite Relays, Full DoF, and Relay Coding}

Consider the single antenna $K$-pair relay network in Fig.
\ref{relay-IA} with the system equation of (\ref{output-HD-bf}) and
the equivalent channel in (\ref{Theta}). Let the interference at any
receiver to be zero, and then the condition for $j$-th user should
satisfy the following:

\begin{equation}\label{dummy-basis-2}
\begin{aligned}
&\ \ \ \ \ \ \mathrm{\Theta}_{ji}=\mathbf{0}\ \ \ \ \forall i\neq j\\
&\text{i.e.}\ \
\sum_{l=1}^{L}\mathrm{f}_{jl}\mathrm{G}_{l}\mathrm{h}_{li}=\mathbf{0}
\ \ \ \forall i\neq j
\end{aligned}
\end{equation}

Equation (\ref{dummy-basis-2}) represents $K(K-1)$ matrix equations,
equalling to $T^2K(K-1)$ linear equations. The matrices
$\mathrm{G}_{1},\mathrm{G}_{2},\cdots,\mathrm{G}_{L}$ contain $LT^2$
variables. Generally, the equations are solvable when: $LT^2\geq
T^2K(K-1)$, i.e. $L\geq K(K-1)$. It could be also interpreted as for
each set of $L$ elements of
$\{\mathrm{G}_{1}(l_1,l_2),\mathrm{G}_{2}(l_1,l_2),\cdots,\mathrm{G}_{L}(l_1,l_2)\}$
there are $K(K-1)$ equations. Therefore in this case, the network
achieves $K$ DoF by only using the linear relay coding.

%
%
%
%
%

\subsubsection{Finite Relays, Fractional DoF, and Precoding}

Consider the same network in Fig. \ref{relay-IA}. However, there're
only finite relays in this case. Instead of relay coding as above, a
scheme of precoding on the equivalent channels is proposed, i.e.
following the \textit{coding at the edge} strategy
\cite{NC-Three-Unicast-IA}. Relays generate equivalent channels, and
the $K$ transmitters and $K$ receivers only see the equivalent
channels regardless of relays. Then conventional interference
alignment methods could be used directly for all $K$ users
\cite{IA-DOF-Kuser-Interference}, \cite{approaching-capacity-IA}.

The relay gain matrices of (\ref{G}) are randomly chosen to be in a
generic form (full elements), so that $\mathrm{G}_l$ and
corresponding $\mathrm{\Theta}_{ji}$ are equivalent to general MIMO
channels. The leakage minimization algorithm of
\cite{approaching-capacity-IA} is applicable to approach
interference alignment. However, according to the feasibility
condition in \cite{feasibility-IA-MIMO-IFC} for symmetric MIMO
channels, the equivalent system in (\ref{output-HD-bf}) must satisfy
\normalsize
\begin{equation}
T+T-d(K+1)\geq 0\label{MIMO-cond}
\end{equation}
\normalsize

So that each user obtains DoF bounded by
$\frac{d}{T}=\frac{2}{K+1}$. Numerical results are shown in Fig.
\ref{ICC-Fig-4} in which three cases denoted by `Generic' use the
leakage minimization algorithm in \cite{approaching-capacity-IA} to
achieve IA for different numbers of relays: relay number $L=1$,
$L=5$ and no relay. Total user number $K=5$, and the time extension
length $T=21$, and each user has $d=5$ streams.A MIMO network with
21 antennas for all nodes is set as a reference case. When the
channels are generated by 1 relay or 5 relays, each user could
obtain the same number of DoF as the reference MIMO channel case
roughly as $2/(K+1)$.

\subsubsection{Specific Finite Relays, Half DoF, and Precoding}

Although the feasibility condition in \cite{feasibility-IA-MIMO-IFC}
limits the DoF in the constant MIMO channels, however there is still
chance to achieve $K/2$ DoF for the network actually. The reason is
that the relays are capable of flexibly constructing desired channel
structures such as time-variant channels.

Manipulate relay gain matrices to be of diagonal structures as
following:

\normalsize
\begin{equation}\label{G-diag}
\begin{aligned}
\mathrm{G}_{l}=\mathrm{Diag}\{\mathrm{g}_{l}(1,1),
\mathrm{g}_{l}(2,2), \cdots, \mathrm{g}_{l}(T,T)\}
\end{aligned}
\end{equation}
\normalsize

where $\mathrm{Diag}\{\cdot\}$ is diagonal function which place all
inputs on diagonal line of a matrix output. Then the equivalent
channels are constructed in diagonal structure as well:

\normalsize
\begin{equation}\label{Theta-diag}
\begin{aligned}
\mathrm{\Theta}_{ji}=\sum_{l=1}^{L}\mathrm{Diag}\{\mathrm{f}_{jl}{\mathrm{g}_{l}}(1,1)\mathrm{h}_{li},
\mathrm{f}_{jl}\mathrm{g}_{l}(2,2)\mathrm{h}_{li}, \cdots,
\mathrm{f}_{jl}\mathrm{g}_{l}(T,T)\mathrm{h}_{li}\}
\end{aligned}
\end{equation}
\normalsize

If we only deal with the relay constructed diagonal channels by
using the conventional leakage minimization method as before, the
DoF results would have no improvement as shown in Fig.
\ref{ICC-Fig-4}. It shows three cases denoted by `Diagonal', which
indicate the same $2/(K+1)$ DoF curves as the equivalent generic
MIMO channels where $K=5, T=21, d=5$. Observe that the relay number
does not affect the algorithmic IA result either. Theoretically,
there has been no conclusions so far on algorithmic IA feasibility
of diagonal channels \cite{feasibility-IA-MIMO-IFC}.

\begin{figure}[htpb]
  \centering
    \includegraphics[width=4.5in]{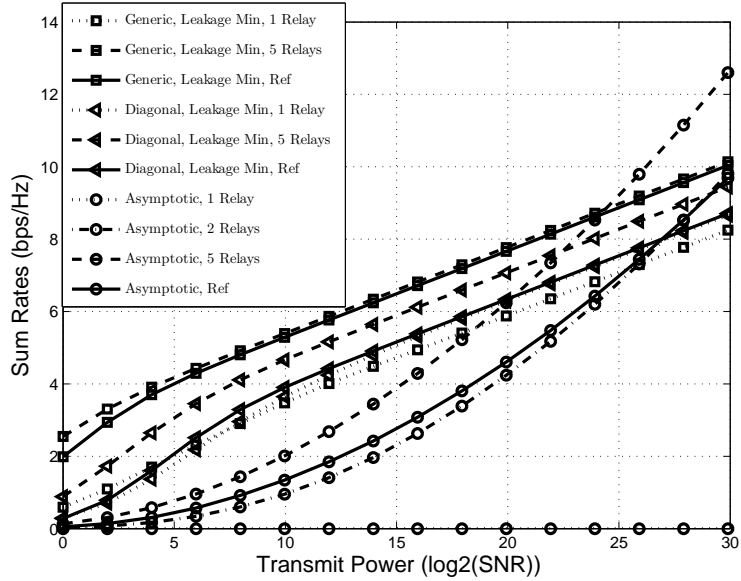}\\
  \caption{Coding on the edge with leakage minimization and asymptotic design, with different number of relays}\label{ICC-Fig-4}
\end{figure}

However, if we deal with the relay constructed diagonal channels by
using the analytical asymptotic design (Cadambe-Jafar scheme) for
frequency selective channels \cite{IA-DOF-Kuser-Interference}, each
user could approach $1/2$ DoF regardless of the number of $K$ users
in the network. Take the 3-user network as an example to apply
Cadambe-Jafar scheme in this relay network. The system equation
(\ref{output-HD-bf}) needs to be modified a little: let \normalsize
${\mathrm{{V}}_{1}}\in\mathbb{C}^{T\times (n+1)}$,
${\mathrm{{V}}_{2}}\in\mathbb{C}^{T\times n}$,
${\mathrm{{V}}_{3}}\in\mathbb{C}^{T\times n}$, $T=2n+1 \text{
}$\normalsize, where $n$ is a positive integer. Thus DoF for the
three users are not symmetric here and set as $\frac{n+1}{2n+1}$,
$\frac{n}{2n+1}$, $\frac{n}{2n+1}$ respectively. When $n$ is large
enough, each user could approach $1/2$ DoF. Design the aligned
interference subspace at each user as following:

\normalsize
\begin{equation}\label{time-extension-channel-ref}
\begin{aligned}
&\mathrm{\Theta}_{12}\mathrm{V}_{2}=\mathrm{\Theta}_{13}\mathrm{{V}}_{3},\mathrm{\Theta}_{23}\mathrm{V}_{3}\prec\mathrm{\Theta}_{21}\mathrm{{V}}_{1},
\mathrm{\Theta}_{32}\mathrm{V}_{2}\prec\mathrm{\Theta}_{31}\mathrm{{V}}_{1}
\end{aligned}
\end{equation}
\normalsize

Then \cite{IA-DOF-Kuser-Interference} proposed the following
analytical design for each precoding matrix to satisfy
(\ref{time-extension-channel-ref}):
\begin{equation}\label{time-extension-IA}
\begin{aligned}
\mathrm{{A}}&=\mathrm{\Theta}_{12}\mathrm{\Theta}_{21}^{-1}\mathrm{\Theta}_{23}\mathrm{\Theta}_{32}^{-1}\mathrm{\Theta}_{31}\mathrm{\Theta}_{13}^{-1}\\
\mathrm{{V}}_{1}&=[\mathrm{w} \ \mathrm{w}\mathrm{A} \
\mathrm{w}\mathrm{A}^2 \ \cdots \ \mathrm{w}\mathrm{A}^n]\\
\mathrm{{V}}_{3}&=\mathrm{\Theta}_{21}\mathrm{\Theta}_{23}^{-1}[\mathrm{w}\mathrm{A}
\
\mathrm{w}\mathrm{A}^2 \ \cdots \ \mathrm{w}\mathrm{A}^n]\\
\mathrm{{V}}_{2}&=\mathrm{\Theta}_{31}\mathrm{\Theta}_{32}^{-1}[\mathrm{w}
\
\mathrm{w}\mathrm{A} \ \cdots \ \mathrm{w}\mathrm{A}^{n-1}]\\
\end{aligned}
\end{equation}

where $\mathrm{w}=[1 \ 1 \ \cdots \
1]^{\dag}\in\mathbb{C}^{(2n+1)\times 1}$. In this way, the
three-user network could approach $3/2$ DoF eventually. It is
important to highlight that this scheme could be extended to
arbitrary $K$-user case. Then (\ref{time-extension-channel-ref}) and
(\ref{time-extension-IA}) are upgraded to a more complicated form
accordingly \cite{IA-DOF-Kuser-Interference}. The achievable
multiplexing gain of the network becomes
$\frac{(n+1)^N+(K-1)n^N}{(n+1)^N+n^N}$, where
\small$N=(K-1)(K-2)-1$\normalsize, $n$ is non-negative integer. As a
result each user could approach $1/2$ DoF.


Numerical results are shown in Fig. \ref{ICC-Fig-4}. In the 3-user
network, let $n=10$ so that the time extension length $T=21$. There
are four cases: relay number $L=1$, $L=2$, $L=5$, and no relay. In
addition, a case of frequency selective channel with 21 dimensions
for all nodes is introduced as a reference. The four cases using the
asymptotic design are denoted by `Asymptotic'. When the channels are
generated by 2 relays or 5 relays, each user could obtain the same
DoF as the reference frequency selective channel case. However, when
the channels are generated by only 1 relay, each user obtains zero
rate, and zero DoF, i.e. the interference alignment scheme fails.

The reason for the failure of the case of $L=1$ is investigated as
following. For $L=1$ relay, (\ref{Theta}) becomes:

\normalsize
\begin{equation}\label{Theta-SR}
\Theta_{ji}=f_{j1}\mathrm{G}_1h_{1i}
\end{equation}
\normalsize

Notice $\mathrm{G}_l$ is diagonal as in (\ref{G-diag}), then the
asymptotic design in (\ref{time-extension-IA}) degenerates to the
following form:

\normalsize
\begin{equation}\label{time-extension-IA-SR}
\begin{aligned}
\alpha&=f_{11}h_{12}(f_{21}h_{11})^{-1}f_{21}h_{13}(f_{31}h_{12})^{-1}f_{31}h_{11}(f_{11}h_{13})^{-1}\\
\mathrm{{A}}&=\alpha\mathrm{I}_T\\
\mathrm{{V}}_{1}&=[\mathrm{w} \ \alpha\mathrm{w} \
\alpha^2\mathrm{w} \ \cdots \ \alpha^n\mathrm{w}]\\
\mathrm{{V}}_{3}&=f_{21}h_{11}(f_{21}h_{13})^{-1}[\alpha\mathrm{w} \
\alpha^2\mathrm{w} \ \cdots \ \alpha^n\mathrm{w}]\\
\mathrm{{V}}_{2}&=f_{31}h_{11}(f_{31}h_{12})^{-1}[\mathrm{w} \
\alpha\mathrm{w} \ \cdots \ \alpha^{n-1}\mathrm{w}]\\
\end{aligned}
\end{equation}
\normalsize

where $\alpha$ is scalar coefficient equal to 1, and $\mathrm{I}_T$
is an identity matrix. Observe in (\ref{time-extension-IA-SR}) the
three precoding matrices $\mathrm{{V}}_{1}$, $\mathrm{{V}}_{2}$,
$\mathrm{{V}}_{3}$ have only rank 1 because all $\Theta_{ji}$ are
linear dependent to be eliminated, so that the asymptotic solution
fails due to the lack of \textit{channel randomness} or
\textit{relativity} in \cite{IA-DOF-Kuser-Interference}, which is
the core idea of the IA mechanism. Similarly, in the $K$-user case,
the asymptotic solution also fails when there is only 1 relay.
Meanwhile recall the numerical result by leakage minimization
algorithm in equivalent MIMO scheme, the achieved DoF is low but
non-zero.

Comparing with the case of $L=2$, the relay generated diagonal
channel in (\ref{Theta-diag}) becomes:
\begin{equation}\label{Theta-DR}
\Theta_{ji}=f_{j1}\mathrm{G}_1h_{1i}+f_{j2}\mathrm{G}_2h_{2i}
\end{equation}

It needs to be proved that $\Theta_{ji}$ are not linear dependent to
lose channel randomness with the following lemma.

\begin{Lemma}\label{comb}
$\Theta_{ji}$ and $\Theta_{nm}$ are linear independent almost surely
for arbitrary non-identical $(j,i)$ and $(n,m)$.
\end{Lemma}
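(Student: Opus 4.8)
The plan is to exploit the fact that, by (\ref{G-diag}), both $\mathrm{G}_1$ and $\mathrm{G}_2$ are diagonal, so each equivalent channel in (\ref{Theta-DR}) is itself diagonal with $t$-th diagonal entry
\begin{equation}\label{diag-entry}
[\Theta_{ji}]_{tt}=f_{j1}g_1(t,t)h_{1i}+f_{j2}g_2(t,t)h_{2i}.
\end{equation}
Two diagonal matrices are linearly dependent if and only if their diagonal vectors in $\mathbb{C}^T$ are proportional, so it suffices to show that no scalar $\lambda$ satisfies $[\Theta_{ji}]_{tt}=\lambda[\Theta_{nm}]_{tt}$ for every $t$, almost surely.

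First I would write this proportionality condition componentwise using (\ref{diag-entry}) and collect the terms in the two relay-gain sequences, obtaining for each $t$
\begin{equation}\label{coeff-cond}
\big(f_{j1}h_{1i}-\lambda f_{n1}h_{1m}\big)g_1(t,t)+\big(f_{j2}h_{2i}-\lambda f_{n2}h_{2m}\big)g_2(t,t)=0.
\end{equation}
For generic relay gains and $T\ge 2$, the $2\times T$ matrix whose rows are $\{g_1(t,t)\}_t$ and $\{g_2(t,t)\}_t$ has rank $2$; this is precisely the feature that a single relay lacks, and it is what causes the collapse seen in (\ref{time-extension-IA-SR}). Consequently (\ref{coeff-cond}) can hold for all $t$ only if both bracketed coefficients vanish, i.e. $f_{j1}h_{1i}=\lambda f_{n1}h_{1m}$ and $f_{j2}h_{2i}=\lambda f_{n2}h_{2m}$.

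Next I would eliminate $\lambda$. Since the channel gains are nonzero almost surely, dividing the two equations forces the cross-relation
\begin{equation}\label{cross-rel}
f_{j1}h_{1i}\,f_{n2}h_{2m}=f_{j2}h_{2i}\,f_{n1}h_{1m}.
\end{equation}
The crux is to check that (\ref{cross-rel}) is a \emph{nontrivial} polynomial identity in the channel coefficients for every admissible pair $(j,i)\neq(n,m)$. I would split into cases: if $j=n$ and $i\neq m$ it reduces to $h_{1i}h_{2m}=h_{2i}h_{1m}$; if $i=m$ and $j\neq n$ it reduces to $f_{j1}f_{n2}=f_{j2}f_{n1}$; and if both indices differ it is the full product relation. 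In each case the two sides are distinct monomials in independent channel variables, so their difference is a nonzero polynomial.

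Finally the standard genericity argument closes the proof: a nonzero polynomial in the channel coefficients vanishes only on a set of Lebesgue measure zero, so for channels drawn from any continuous distribution (\ref{cross-rel}) fails almost surely, no admissible $\lambda$ exists, and $\Theta_{ji}$ and $\Theta_{nm}$ are linearly independent almost surely. I expect the main obstacle to be the nontriviality check in the third step---verifying case by case that (\ref{cross-rel}) is not an identity---together with justifying the rank-$2$ condition on the relay gains, which is exactly what separates the working $L=2$ case from the degenerate $L=1$ case.
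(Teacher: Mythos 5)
Your proposal is correct and follows essentially the same route as the paper's proof: group the assumed dependence relation by the generic diagonal gains $\mathrm{G}_1,\mathrm{G}_2$, force both scalar coefficients to vanish, and then derive a contradiction from the independence of the channel scalars. Your version is in fact slightly more careful than the paper's, which omits both the rank-$2$ justification for why the two coefficients must vanish and the case analysis (e.g. $j=n$, $i\neq m$) needed to confirm that the resulting cross-relation is a nontrivial polynomial identity when the index pairs share a component.
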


\begin{proof} Assume

$\exists \beta_1, \beta_2\in\mathbb{C}$. $\beta_1, \beta_2\neq 0$,
such that $\beta_1\Theta_{ji}+\beta_2\Theta_{nm}=\mathrm{0}$.

Expand and group by $\mathrm{G}_1$ and $\mathrm{G}_2$ to:

\normalsize
$(\beta_1f_{j1}h_{1i}+\beta_2f_{n1}h_{1m})\mathrm{G}_1+(\beta_1f_{j2}h_{2i}+\beta_2f_{n2}h_{2m})\mathrm{G}_2=\mathrm{0}$
.\normalsize

$\mathrm{G}_1$ and $\mathrm{G}_2$ are arbitrary generic diagonal
matrices, therefore almost surely the coefficients are zeros, i.e.

\normalsize
$(\beta_1f_{j1}h_{1i}+\beta_2f_{n1}h_{1m})=0,(\beta_1f_{j2}h_{2i}+\beta_2f_{n2}h_{2m})=0$.\normalsize

Forming the ratio $-\frac{\beta_1}{\beta_1}$, so we have
$\frac{f_{j1}h_{1i}}{f_{n1}h_{1m}}=\frac{f_{j2}h_{2i}}{f_{n2}h_{2m}}$.

Since $f_{j1}, h_{1i}, f_{n1}, h_{1m}, f_{j2}, h_{2i}, f_{n2},
h_{2m}$ are all random independent scalars, the equality fails
almost surely. So that $\beta_1$, $\beta_2$ do not exist almost
surely. Then $\Theta_{ji}$ and $\Theta_{nm}$ are linearly
independent almost surely.
\end{proof}

Notice `almost surely' cases are recognized as successful
interference alignment \cite{IA-DOF-Kuser-Interference}. Lemma
\ref{comb} shows that
$\Theta_{12}\Theta_{21}^{-1}$,$\Theta_{23}\Theta_{32}^{-1}$, and
$\Theta_{31}\Theta_{13}^{-1}$ are not scaled identity matrices,
therefore their total product, $\mathrm A$, is not a scaled identity
matrix almost surely considering channel \textit{relativity} of
these three terms. As a result the asymptotic design in
(\ref{time-extension-IA}) still works here, rather than degenerating
to the form (\ref{time-extension-IA-SR}). In summary, Lemma
\ref{comb} reveals that two or more relays can generate
\textsl{channel randomness} or \textsl{channel relativity} in the
equivalent channels with problems in (\ref{Theta-SR}) and
(\ref{time-extension-IA-SR}) successfully prevented. Then the
asymptotic design in (\ref{time-extension-IA}) can be applied
successfully.

\section{Role of Relay in Precoding Scheme in Constant Channels}

The above 3-user case reveals a critical issue for the constant
channel to implement interference alignment. Therefore it is
important to look into general $K$-user networks, as well as
different class of IA schemes. In this section, for the general
scenarios of relay networks, the model is shown in Fig.
\ref{relay-IA-pic2} : define nodes ${I}_1, {I}_2, \cdots, {I}_K$ as
sources and ${J}_1, \cdots, {J}_{K-1}, {J}_K$ as destinations,
whereas each $I_k$ communicates with each ${J}_k$ via relays ${R}_1,
{R}_2, \cdots, {R}_L$. Relays are half-duplex so that the
transmission procedure consists of two stages where relays forward
data in the second stage. Interference alignment is designed with
the following strategy: relays construct equivalent channels; source
and destination nodes proceed precoding and zero-forcing, i.e.
\textit{coding at the edge} as \cite{NC-Three-Unicast-IA} claimed.
The objective is still to approach 1/2 due DoF (exclude duplex
factor) for every user, which is quite difficult and non-trivial for
a relay network under quasi-static channel condition. The work is
published in \cite{single-antenna-IA-finite-relay} and submitted in
\cite{achieve-IA-constant-finite-AF-relay}.

\subsection{Time-extended Signal Vector Alignment}

The most common approach to implement IA in this network is the
time-extended MIMO scheme as in \cite{DOF-Multisource-Relay}, where
$T$ consecutive symbols form one signal vector and IA is designed in
the vector space. The above example of 3-user network as in
equations (\ref{G-diag}) and (\ref{Theta-diag}) exactly constructs
the relay gain matrix $\mathrm{G}_{l}$ and equivalent channels
$\Theta_{ji}$ with this time-extended MIMO approach. Therefore, it
is important to extend the design to arbitrary $K$-user networks,
and look into the same issue caused by constant channels with single
relay in the network.


$\mathrm{G}_{l}$ and $\Theta_{ji}$ could be diagonal or generic. If
they are generic, it is equivalent to MIMO channel constrained by
feasibility conditions \cite{feasibility-IA-MIMO-IFC}, so that each
user could only approach $2/(K+1)$ DoF. If they are diagonal, the
channel is equivalent to a frequency selective channel in
\cite{IA-DOF-Kuser-Interference,beamforming-efficient-IA}, then the
asymptotic design of Cadambe-Jafar scheme could be applied. Notice
\cite{beamforming-efficient-IA} has the same core design structure
as \cite{IA-DOF-Kuser-Interference}, so that the design of
\cite{beamforming-efficient-IA} is illustrated here for a general
$K$-pair relay network as in the following form:

\normalsize
\begin{equation}\label{time-extension-IA-Kuser-a}
\begin{aligned}
\mathrm{{\Phi}}_{ji}&=\mathrm{\Theta}_{j1}^{-1}\mathrm{\Theta}_{ji}\mathrm{\Theta}_{1i}^{-1}\mathrm{\Theta}_{13}\hspace{3mm}\forall j,i\in\mathcal{K}\setminus\{1\}\\
\end{aligned}
\end{equation}
\begin{equation}\label{time-extension-IA-Kuser-b}
\begin{aligned}
\mathrm{{V}}_{3}&=\left\{\mathrm{{\Phi}}_{23}^{-1}\prod_{(j,i)\in\mathcal{A}}\left(\mathrm{{\Phi}}_{23}^{-1}\mathrm{{\Phi}}_{ji}\right)^{n_{ji}}\cdot\mathbf{1}_T\ \Big|\sum_{(j,i)\in\mathcal{A}}n_{ji}\leq n^*\right\}\\
\mathrm{{V}}_{1}&=\left\{\prod_{(j,i)\in\mathcal{A}}\left(\mathrm{{\Phi}}_{23}^{-1}\mathrm{{\Phi}}_{ji}\right)^{n_{ji}}\cdot\mathbf{1}_T\ \Big|\sum_{(j,i)\in\mathcal{A}}n_{ji}\leq n^*+1\right\}\\
\mathrm{{V}}_{i}&=\mathrm{\Theta}_{1i}^{-1}\mathrm{\Theta}_{13}\mathrm{V_3}\hspace{3mm}\forall i\in\mathcal{K}\setminus\{1,3\}\\
\end{aligned}
\end{equation}
\normalsize

where $\mathrm{V}_{i}\in\mathbb{C}^{T\times d_i}$ is the precoding
matrix for arbitrary source $I_i, i\neq 1,3$, which contains $d_i$
input streams. $\mathcal{A}=\left\{(j,i)\ |\
j,i\in\mathcal{K}\setminus\{1\},j\neq i,(j,i)\neq(2,3)\right\}$.
$\mathbf{1}_T\in\mathbb{C}^{T\times1}$ is the all one vector, $n^*$
is integer. Notice $T=\binom {n^*+N+1} N + \binom {n^*+N} N$ and
\small$N=(K-1)(K-2)-1$\normalsize. $\mathrm{{V}}_{1}$ has a
dimension of $T\times\binom {n^*+N+1} N$; $\mathrm{{V}}_{i}$ has a
dimension of $T\times\binom {n^*+N} N$, $\forall i\neq 1$. So that
each user could approach $1/2$ DoF when $n^*$ is large for arbitrary
$K$.

\subsubsection{The case when the network has single relay, $L=1$}

However, in the special case when $L=1$, this scheme does not work.
The reason is easily shown in the equations (\ref{Theta}) and
(\ref{time-extension-IA-Kuser-a}) as following:

\normalsize
\begin{equation}
\begin{aligned}\label{Phi-1R}
&\mathrm{{\Phi}}_{ji}\cdot\mathbf{1}_T=(\mathrm{h}_{J_jR_1}\mathrm{G}_{1}\mathrm{h}_{R_1I_1})^{-1}(\mathrm{h}_{J_jR_1}\mathrm{G}_{1}\mathrm{h}_{R_1I_i})\\
&\hspace{14mm}\cdot(\mathrm{h}_{J_1R_1}\mathrm{G}_{1}\mathrm{h}_{R_1I_1})^{-1}(\mathrm{h}_{J_1R_1}\mathrm{G}_{1}\mathrm{h}_{R_1I_3})\cdot\mathbf{1}_T\\
&\hspace{5mm}=\mathrm{h}_{J_jR_1}^{-1}\mathrm{h}_{R_1I_1}^{-1}\mathrm{h}_{J_jR_1}\mathrm{h}_{R_1I_i}\mathrm{h}_{J_1R_1}^{-1}\mathrm{h}_{R_1I_1}^{-1}\mathrm{h}_{J_1R_1}\mathrm{h}_{R_1I_3}\cdot\mathbf{1}_T
\end{aligned}
\end{equation}
\normalsize

Apply (\ref{Phi-1R}) to the precoding solutions of
(\ref{time-extension-IA-Kuser-b}), then we could observe column
subspace of $\mathrm{{V}}_{i}, \forall i\in\mathcal{K}$ in
(\ref{time-extension-IA-Kuser-b}) collapses to one dimensional space
parallel to $\mathbf{1}_T$.
It means the signal subspace of $\mathrm{{V}}_{i}$ degenerates so
that all the transmissions fail to obtain DoF. The following remark
is summarized.

\begin{Remark}
For the equivalent channels generated by single relay, interference
alignment is not feasible to approach 1/2 DoF for every user by
using existing asymptotic designs of
\cite{beamforming-efficient-IA,IA-DOF-Kuser-Interference}.
\end{Remark}

\subsubsection{The case when the network has at least two relays, $L\geq2$}

While for the case of $L\geq2$, asymptotic IA scheme could be much
likely applied to obtain $K/2$ DoF of this network for arbitrary
$K$. As a primitive investigation, set $L=2$. The key is to prove
that the column subspace of $\mathrm{{V}}_{i}, \forall
i\in\mathcal{K}$ in (\ref{time-extension-IA-Kuser-b}) {almost
surely} maintains its rank. First, we need to look at the core
elements $\mathrm{{\Phi}}_{23}^{-1}\mathrm{{\Phi}}_{ji}$ which
constitute $\mathrm{{V}}_{i}$. The objective is to prove all the
$\mathrm{{\Phi}}_{23}^{-1}\mathrm{{\Phi}}_{ji}$ terms are linear
independent. Then the expanded form is as in equation (\ref{Phi-2R})
and proved through the following three lemmas:

\normalsize
\begin{equation}
\begin{aligned}\label{Phi-2R}
&\mathrm{{\Phi}}_{23}^{-1}\mathrm{{\Phi}}_{ji}=\frac{(\mathrm{h}_{J_2R_1}\mathrm{G}_{1}\mathrm{h}_{R_1I_1}+\mathrm{h}_{J_2R_2}\mathrm{G}_{2}\mathrm{h}_{R_2I_1})}{(\mathrm{h}_{J_jR_1}\mathrm{G}_{1}\mathrm{h}_{R_1I_1}+\mathrm{h}_{J_jR_2}\mathrm{G}_{2}\mathrm{h}_{R_2I_1})}\bullet\\
&\frac{(\mathrm{h}_{J_jR_1}\mathrm{G}_{1}\mathrm{h}_{R_1I_i}+\mathrm{h}_{J_jR_2}\mathrm{G}_{2}\mathrm{h}_{R_2I_i})(\mathrm{h}_{J_1R_1}\mathrm{G}_{1}\mathrm{h}_{R_1I_3}+\mathrm{h}_{J_1R_2}\mathrm{G}_{2}\mathrm{h}_{R_2I_3})}{(\mathrm{h}_{J_2R_1}\mathrm{G}_{1}\mathrm{h}_{R_1I_3}+\mathrm{h}_{J_2R_2}\mathrm{G}_{2}\mathrm{h}_{R_2I_3})(\mathrm{h}_{J_1R_1}\mathrm{G}_{1}\mathrm{h}_{R_1I_i}+\mathrm{h}_{J_1R_2}\mathrm{G}_{2}\mathrm{h}_{R_2I_i})}\\
\end{aligned}
\end{equation}
\normalsize

\begin{Lemma}
$\mathrm{G}_1^3\mathbf{1}_T$,
$\mathrm{G}_1^2\mathrm{G}_2\mathbf{1}_T$,
$\mathrm{G}_1\mathrm{G}_2^2\mathbf{1}_T$,
$\mathrm{G}_2^3\mathbf{1}_T$ are linear independent almost surely.
\end{Lemma}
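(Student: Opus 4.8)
The plan is to reduce the claim to the non-vanishing of a Vandermonde-type determinant. Since $\mathrm{G}_1$ and $\mathrm{G}_2$ are generic diagonal matrices as in (\ref{G-diag}), write $\mathrm{G}_1=\mathrm{Diag}\{a_1,\dots,a_T\}$ and $\mathrm{G}_2=\mathrm{Diag}\{b_1,\dots,b_T\}$, where the $a_t$ and $b_t$ are mutually independent continuous random scalars. The four vectors in question then have $t$-th entries $a_t^3$, $a_t^2 b_t$, $a_t b_t^2$, and $b_t^3$ respectively. Stacking them as the columns of a $T\times 4$ matrix $M$, linear independence is equivalent to $\mathrm{rank}(M)=4$, which (for $T\ge 4$) holds as soon as one $4\times 4$ minor is nonzero. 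Because each such minor is a fixed polynomial in the $2T$ channel variables, it suffices to exhibit a single minor that is not identically zero; the zero set of a nontrivial polynomial has Lebesgue measure zero, so the minor --- and hence the independence --- holds almost surely, exactly as in the almost-sure argument of Lemma \ref{comb}.

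Next I would compute the minor on rows $1,2,3,4$, whose $t$-th row is $(a_t^3,\ a_t^2 b_t,\ a_t b_t^2,\ b_t^3)$. Factoring $a_t^3$ out of the $t$-th row exposes the row $(1,\ r_t,\ r_t^2,\ r_t^3)$ with $r_t:=b_t/a_t$, so the submatrix equals $\mathrm{Diag}\{a_1^3,a_2^3,a_3^3,a_4^3\}$ times a genuine $4\times 4$ Vandermonde matrix in the nodes $r_1,r_2,r_3,r_4$. Its determinant is therefore
\begin{equation}
\Big(\prod_{t=1}^{4}a_t^3\Big)\prod_{1\le s<t\le 4}(r_t-r_s),
\end{equation}
which is manifestly a nonzero polynomial in the channel coefficients.

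To finish, I would observe that this determinant vanishes only on the union of the events $\{a_t=0\}$ and $\{r_s=r_t\}=\{a_t b_s=a_s b_t\}$, each a proper algebraic subset of measure zero under the assumed continuous, independent channel statistics; hence the minor is nonzero almost surely and the four vectors are linearly independent almost surely. The only genuine obstacle is spotting the hidden Vandermonde structure: the monomials $a^3$, $a^2b$, $ab^2$, $b^3$ are precisely the degree-three homogeneous monomials in two variables, which after dehomogenizing by $a^3$ become the powers $1,r,r^2,r^3$; once the ratio substitution $r_t=b_t/a_t$ is made, non-vanishing as a polynomial and the measure-zero conclusion are both immediate. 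I would also note that the argument requires $T\ge 4$, which is guaranteed since $T$ is taken large in the asymptotic design.
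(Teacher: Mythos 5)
Your proof is correct and rests on the same key reduction as the paper's: after factoring out $\mathrm{G}_1^3$ (the paper multiplies by $\mathrm{G}_1^{-3}$, you factor $a_t^3$ from each row), everything hinges on the powers $1,r_t,r_t^2,r_t^3$ of the ratio $r_t=b_t/a_t$ at $T\ge 4$ almost-surely distinct nodes. Your Vandermonde-determinant computation and the paper's observation that a degree-three polynomial cannot have $T>3$ distinct roots are the same fact in two standard guises, so this is essentially the paper's argument (with the minor bonus that you make the $T\ge 4$ requirement explicit).
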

\begin{proof}
Let
$\mathrm{G}_1^{-1}\mathrm{G}_2=\mathrm{Diag}\{\lambda_1,\cdots,\lambda_T\}$.
$\mathrm{Diag}\{\cdot\}$ denotes the function to generate a diagonal
matrix with all elements in the set. Suppose the lemma is false,
then $(\mathrm{G}_1^{-1}\mathrm{G}_2)^3\mathbf{1}_T$,
$(\mathrm{G}_1^{-1}\mathrm{G}_2)^2\mathbf{1}_T$,
$\mathrm{G}_1^{-1}\mathrm{G}_2\mathbf{1}_T$, $\mathbf{1}_T$ are
linear dependent. So that there exist non-zero
$\alpha_1,\alpha_2,\alpha_3,\alpha_4$ satisfying the element-wise
equations of
$\alpha_1\lambda_t^3+\alpha_2\lambda_t^2+\alpha_3\lambda_t+\alpha_4=0$,
$\forall t=1,\ldots,T$. Since
$\mathrm{G}_1,\mathrm{G}_2,\lambda_1,\cdots,\lambda_T$ are random
generated independent parameters, and rank-3 polynomials could not
have $T$ non-identical roots almost surely, so it proves the lemma
is true almost surely.
\end{proof}

\begin{Lemma}\label{comb-Phi}
$\mathrm{{\Phi}}_{23}^{-1}\mathrm{{\Phi}}_{ji}\mathbf{1}_T$ and
$\mathrm{{\Phi}}_{23}^{-1}\mathrm{{\Phi}}_{nm}\mathbf{1}_T$ are
linear independent almost surely for arbitrary non-identical
$(j,i),(n,m)\in\mathcal{A}$, where $\mathcal{A}=\left\{(j,i)\ |\
j,i\in\mathcal{K}\setminus\{1\},j\neq i,(j,i)\neq(2,3)\right\}$.
\end{Lemma}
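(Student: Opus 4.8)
The plan is to exploit the diagonal structure. Since each $G_l$ is diagonal, every $\Theta_{ji}$ and hence every $\Phi_{ji}=\Theta_{j1}^{-1}\Theta_{ji}\Theta_{1i}^{-1}\Theta_{13}$ is diagonal, so $\Phi_{23}^{-1}\Phi_{ji}\mathbf{1}_T$ is just the vector of diagonal entries of $\Phi_{23}^{-1}\Phi_{ji}$. Writing $g_1(t,t)$, $g_2(t,t)$ for the $t$-th diagonal gains and setting $\lambda_t=g_1(t,t)/g_2(t,t)$, the $t$-th entry read off from (\ref{Phi-2R}) is a ratio $\rho_{ji}(\lambda_t)=n_{ji}(\lambda_t)/d_{ji}(\lambda_t)$ of two cubics in the single variable $\lambda$, because each $\Theta_{lk}$ contributes the linear form $f_{l1}h_{1k}\lambda+f_{l2}h_{2k}$ (after pulling out a factor $g_2(t,t)$, which cancels between the three numerator and three denominator terms). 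First I would record that the $\lambda_t$ are $T$ distinct values almost surely, exactly as in the preceding lemma.

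Next I would reduce linear dependence to a polynomial identity. A relation $\beta_1\rho_{ji}(\lambda_t)+\beta_2\rho_{nm}(\lambda_t)=0$ holding at every $t$, after clearing the almost-surely nonzero denominators, becomes $R(\lambda_t)=0$ for all $t$, where $R=\beta_1\,n_{ji}d_{nm}+\beta_2\,n_{nm}d_{ji}$ has degree at most $6$ in $\lambda$. A nonzero polynomial of degree at most $6$ has at most $6$ roots, so once $T>6$ the vanishing at all $T$ distinct $\lambda_t$ forces $R\equiv 0$ almost surely — the same root-counting argument used in the preceding lemma. Hence a nontrivial dependence can occur only if $n_{ji}d_{nm}$ and $n_{nm}d_{ji}$ are proportional as polynomials.

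The heart of the argument, and the step I expect to be the main obstacle, is ruling out this proportionality for non-identical $(j,i)\neq(n,m)$. Each factor is a linear form in $\lambda$ whose root equals $-1/(p_lq_k)$, where $p_l=f_{l1}/f_{l2}$ and $q_k=h_{1k}/h_{2k}$; two such roots coincide precisely when the products $p_lq_k$ coincide. Since $\mathbb{C}[\lambda]$ is a UFD, proportionality of the two degree-$6$ products is equality of their root multisets; cancelling the three factors common to both (those arising from $\Theta_{21},\Theta_{13},\Theta_{23}$) leaves the requirement $\{p_jq_i,p_nq_1,p_1q_m\}=\{p_nq_m,p_jq_1,p_1q_i\}$ as multisets of values. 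Because the $p_l$ and $q_k$ are independent random scalars, two monomials $p_aq_b$ and $p_cq_d$ agree almost surely only when $(a,b)=(c,d)$, so this value equality forces the index-pair equality $\{(j,i),(n,1),(1,m)\}=\{(n,m),(j,1),(1,i)\}$.

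Finally I would close the combinatorics. Since $j,i,n,m\in\mathcal{K}\setminus\{1\}$, exactly one pair on each side contains no index equal to $1$, namely $(j,i)$ on the left and $(n,m)$ on the right, and a pair containing a $1$ cannot equal one that does not; the matching therefore forces $(j,i)=(n,m)$, contradicting non-identity. Consequently the proportionality never holds almost surely, $R\not\equiv 0$ for every nonzero $(\beta_1,\beta_2)$, and with $T>6$ no nontrivial dependence survives — which gives the claimed almost-sure linear independence. The delicate points are the genericity claims (distinctness of the $\lambda_t$ and the almost-sure passage from value-equality to index-equality), but these are precisely the ``almost surely'' hypotheses already invoked throughout.
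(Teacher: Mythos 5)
Your proof is correct, and while it shares the paper's opening reduction, it finishes the argument by a genuinely different route. The paper also clears denominators to reach $\beta_a\Theta_{ji}\Theta_{n1}\Theta_{1m}\mathbf{1}_T+\beta_b\Theta_{nm}\Theta_{j1}\Theta_{1i}\mathbf{1}_T=\mathbf{0}_T$ and expands in the four a.s.\ independent directions $\mathrm{G}_1^3\mathbf{1}_T,\mathrm{G}_1^2\mathrm{G}_2\mathbf{1}_T,\mathrm{G}_1\mathrm{G}_2^2\mathbf{1}_T,\mathrm{G}_2^3\mathbf{1}_T$, but then simply asserts that the resulting four linear equations in $(\beta_a,\beta_b)$ have only the trivial solution because their coefficients are ``random independent scalars.'' That assertion is the delicate point: the coefficients are polynomial functions of the channel gains, not independent random variables, and in fact the $\mathrm{G}_1^3$ and $\mathrm{G}_2^3$ rows of the paper's system (\ref{expand-2}) are each proportional to $(1,1)$, so the whole burden falls on the two cross rows --- exactly the non-proportionality of the two cubics that your unique-factorization and root-matching argument establishes explicitly. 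Your reformulation in the single variable $\lambda_t=\mathrm{g}_1(t,t)/\mathrm{g}_2(t,t)$, the reduction to equality of root multisets $\{p_jq_i,p_nq_1,p_1q_m\}=\{p_nq_m,p_jq_1,p_1q_i\}$, and the observation that only $(j,i)$ and $(n,m)$ lack the index $1$ together pin down precisely why no nontrivial $(\beta_1,\beta_2)$ can exist. What the paper's finish buys is brevity and a uniform template reused in Lemma \ref{comb-Phi-multi}; what yours buys is an actual verification of the genericity claim, at the modest extra cost of requiring $T>6$ (automatic for the asymptotic construction, where $T=\binom{n^*+N+1}{N}+\binom{n^*+N}{N}$) and of the a.s.\ distinctness of the $\lambda_t$, which is the same hypothesis already implicit in the preceding lemma.
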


\begin{proof}

Suppose $\exists \text{ non-zero } \beta_a, \beta_b$ satisfying
$\beta_a\mathrm{{\Phi}}_{23}^{-1}\mathrm{{\Phi}}_{ji}\mathbf{1}_T+\beta_b\mathrm{{\Phi}}_{23}^{-1}\mathrm{{\Phi}}_{nm}\mathbf{1}_T=\mathbf{0}_{T}$,
where $\mathbf{0}_T\in\mathbb{C}^{T\times 1}$ denotes an all-zero
vector. Derived from (\ref{time-extension-IA-Kuser-a}), the
following equation is obtained:

\normalsize
\begin{equation}
\begin{aligned}\label{expand-1}
\beta_a\mathrm{\Theta}_{ji}\mathrm{\Theta}_{n1}\mathrm{\Theta}_{1m}\mathbf{1}_T+\beta_b\mathrm{\Theta}_{nm}\mathrm{\Theta}_{j1}\mathrm{\Theta}_{1i}\mathbf{1}_T=\mathbf{0}_{T}
\end{aligned}
\end{equation}
\normalsize

By using (\ref{Theta}) with (\ref{expand-1}), further we have
equation (\ref{expand-2}).

\begin{figure*}
\small
\begin{equation}
\begin{aligned}\label{expand-2}
&(\beta_a\mathrm{h}_{J_jR_1}\mathrm{h}_{R_1I_i}\mathrm{h}_{J_nR_1}\mathrm{h}_{R_1I_1}\mathrm{h}_{J_1R_1}\mathrm{h}_{R_1I_m}+\beta_b\mathrm{h}_{J_nR_1}\mathrm{h}_{R_1I_m}\mathrm{h}_{J_jR_1}\mathrm{h}_{R_1I_1}\mathrm{h}_{J_1R_1}\mathrm{h}_{R_1I_i})\cdot\mathrm{G}_1^3\mathbf{1}_T\\
&+[\beta_a(\mathrm{h}_{J_jR_1}\mathrm{h}_{R_1I_i}\mathrm{h}_{J_nR_1}\mathrm{h}_{R_1I_1}\mathrm{h}_{J_1R_2}\mathrm{h}_{R_2I_m}+\mathrm{h}_{J_jR_1}\mathrm{h}_{R_1I_i}\mathrm{h}_{J_nR_2}\mathrm{h}_{R_2I_1}\mathrm{h}_{J_1R_1}\mathrm{h}_{R_1I_m}+\\
&\hspace{9mm}\mathrm{h}_{J_jR_2}\mathrm{h}_{R_2I_i}\mathrm{h}_{J_nR_1}\mathrm{h}_{R_1I_1}\mathrm{h}_{J_1R_1}\mathrm{h}_{R_1I_m})+\beta_b(\mathrm{h}_{J_nR_1}\mathrm{h}_{R_1I_m}\mathrm{h}_{J_jR_1}\mathrm{h}_{R_1I_1}\mathrm{h}_{J_1R_2}\mathrm{h}_{R_2I_i}\\
&\hspace{5mm}+\mathrm{h}_{J_nR_1}\mathrm{h}_{R_1I_m}\mathrm{h}_{J_jR_2}\mathrm{h}_{R_2I_1}\mathrm{h}_{J_1R_1}\mathrm{h}_{R_1I_i}+\mathrm{h}_{J_nR_2}\mathrm{h}_{R_2I_m}\mathrm{h}_{J_jR_1}\mathrm{h}_{R_1I_1}\mathrm{h}_{J_1R_1}\mathrm{h}_{R_1I_i})]\cdot\mathrm{G}_1^2\mathrm{G}_2\mathbf{1}_T\\
&+[\beta_a(\mathrm{h}_{J_jR_1}\mathrm{h}_{R_1I_i}\mathrm{h}_{J_nR_2}\mathrm{h}_{R_2I_1}\mathrm{h}_{J_1R_2}\mathrm{h}_{R_2I_m}+\mathrm{h}_{J_jR_2}\mathrm{h}_{R_2I_i}\mathrm{h}_{J_nR_1}\mathrm{h}_{R_1I_1}\mathrm{h}_{J_1R_2}\mathrm{h}_{R_2I_m}+\\
&\hspace{9mm}\mathrm{h}_{J_jR_2}\mathrm{h}_{R_2I_i}\mathrm{h}_{J_nR_2}\mathrm{h}_{R_2I_1}\mathrm{h}_{J_1R_1}\mathrm{h}_{R_1I_m})+\beta_b(\mathrm{h}_{J_nR_1}\mathrm{h}_{R_1I_m}\mathrm{h}_{J_jR_2}\mathrm{h}_{R_2I_1}\mathrm{h}_{J_1R_2}\mathrm{h}_{R_2I_i}\\
&\hspace{5mm}+\mathrm{h}_{J_nR_2}\mathrm{h}_{R_2I_m}\mathrm{h}_{J_jR_1}\mathrm{h}_{R_1I_1}\mathrm{h}_{J_1R_2}\mathrm{h}_{R_2I_i}+\mathrm{h}_{J_nR_2}\mathrm{h}_{R_2I_m}\mathrm{h}_{J_jR_2}\mathrm{h}_{R_2I_1}\mathrm{h}_{J_1R_1}\mathrm{h}_{R_1I_i})]\cdot\mathrm{G}_1\mathrm{G}_2^2\mathbf{1}_T\\
&+(\beta_a\mathrm{h}_{J_jR_2}\mathrm{h}_{R_2I_i}\mathrm{h}_{J_nR_2}\mathrm{h}_{R_2I_1}\mathrm{h}_{J_1R_2}\mathrm{h}_{R_2I_m}+\beta_b\mathrm{h}_{J_nR_2}\mathrm{h}_{R_2I_m}\mathrm{h}_{J_jR_2}\mathrm{h}_{R_2I_1}\mathrm{h}_{J_1R_2}\mathrm{h}_{R_2I_i})\cdot\mathrm{G}_2^3\mathbf{1}_T\\
&=\mathbf{0}_{T}
\end{aligned}
\end{equation}
\normalsize \hrulefill \vspace{-0.5cm}
\end{figure*}


In (\ref{expand-2}), since $\mathrm{G}_1^3\mathbf{1}_T$,
$\mathrm{G}_1^2\mathrm{G}_2\mathbf{1}_T$,
$\mathrm{G}_1\mathrm{G}_2^2\mathbf{1}_T$,
$\mathrm{G}_2^3\mathbf{1}_T$ are actually four independent
$T$-dimensional bases, then all their coefficients must be zero. We
obtain another four linear equations about $\beta_a$, $\beta_b$,
with coefficients composed of $\mathrm{h}_{J_jR_1},
\mathrm{h}_{R_1I_i}, \mathrm{h}_{J_jR_2}, \mathrm{h}_{R_2I_i}$,
which are all random independent scalar values of the channels.
Therefore the solutions of $\beta_a,\beta_b$ are zero almost surely,
which contradicts the initial non-zero assumption. So that it proves
the lemma.


\end{proof}

\begin{Lemma}
All vectors in
$\{\mathrm{{\Phi}}_{23}^{-1}\mathrm{{\Phi}}_{ji}\mathbf{1}_T,
(j,i)\in\mathcal{A}\}$ are linear independent almost surely, where
$\mathcal{A}=\left\{(j,i)\ |\ j,i\in\mathcal{K}\setminus\{1\},j\neq
i,(j,i)\neq(2,3)\right\}$.\label{comb-Phi-multi}
\end{Lemma}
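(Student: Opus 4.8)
The plan is to follow the logic of Lemma \ref{comb-Phi} but to replace the brute-force monomial expansion (equation (\ref{expand-2})), which becomes hopeless for many indices, by an argument organized around the \emph{poles} of the diagonal entries. Working entrywise, the $t$-th diagonal entry of $\mathrm{{\Phi}}_{23}^{-1}\mathrm{{\Phi}}_{ji}$ is a ratio of three binomials in the two relay gains, so writing $\lambda_t=\mathrm{g}_2(t,t)/\mathrm{g}_1(t,t)$ and letting $a_{ji},b_{ji}$ denote the two channel products multiplying $\mathrm{G}_1,\mathrm{G}_2$ in $\mathrm{\Theta}_{ji}$, that entry equals a fixed rational function $r_{ji}(\lambda_t)$ of type $(3,3)$. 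A relation $\sum_{(j,i)\in\mathcal{A}}\gamma_{ji}\mathrm{{\Phi}}_{23}^{-1}\mathrm{{\Phi}}_{ji}\mathbf{1}_T=\mathbf{0}_T$ then says $\sum_{(j,i)}\gamma_{ji}r_{ji}(\lambda)$ vanishes at the $T$ sample points $\lambda_1,\dots,\lambda_T$. Since the gains are drawn generically these points are distinct, and clearing denominators gives a numerator of degree $O(K)$, which is far below $T$ once $n^*$ is large; hence the rational function vanishes identically and the vector claim reduces to \emph{linear independence of the functions} $r_{ji}$.

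Next I would strip the common factor. Every $r_{ji}$ carries the same factor $C(\lambda)=(a_{21}+b_{21}\lambda)(a_{13}+b_{13}\lambda)/(a_{23}+b_{23}\lambda)$ coming from the $\mathrm{{\Phi}}_{23}^{-1}$, $\mathrm{\Theta}_{21}$ and $\mathrm{\Theta}_{13}$ terms, so dividing by this nonzero common factor reduces the question to the simpler family $s_{ji}(\lambda)=(a_{ji}+b_{ji}\lambda)/[(a_{j1}+b_{j1}\lambda)(a_{1i}+b_{1i}\lambda)]$. Each $s_{ji}$ vanishes at infinity and has only the two simple poles $\lambda^*_j=-a_{j1}/b_{j1}$ and $\mu^*_i=-a_{1i}/b_{1i}$; because the channel scalars are independent, all the $\lambda^*_j$ and $\mu^*_i$ are distinct almost surely, giving a partial-fraction form $s_{ji}=A_{ji}/(\lambda-\lambda^*_j)+B_{ji}/(\lambda-\mu^*_i)$ with generically nonzero residues.

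With this decomposition I would read off the residue of $\sum\gamma_{ji}s_{ji}$ at each of the at most $2(K-1)$ poles and set it to zero, obtaining for every $j$ the equation $\sum_{i:(j,i)\in\mathcal{A}}\gamma_{ji}A_{ji}=0$ and for every $i$ the equation $\sum_{j:(j,i)\in\mathcal{A}}\gamma_{ji}B_{ji}=0$. These two blocks form a homogeneous linear system in the unknowns $\gamma_{ji}$ whose incidence pattern is exactly the bipartite row/column structure of $\mathcal{A}$, and the target becomes showing this system is nonsingular almost surely (its entries are rational in independent scalars, so it would suffice to exhibit one specialization making the relevant minor nonzero, as in the base case Lemma \ref{comb-Phi}).

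The main obstacle is precisely this final rank step, and I expect it to be genuinely delicate rather than routine. The residue conditions supply only $2(K-1)$ equations, whereas $|\mathcal{A}|=(K-1)(K-2)-1$, so for large $K$ the system is badly underdetermined and no residue-counting argument can force $\gamma=\mathbf{0}$; indeed the same count shows the $r_{ji}$ span a space of dimension at most $2(K-1)$, so full independence can hold only when $(K-1)(K-2)-1\le 2(K-1)$, i.e. for small $K$. In that regime the claim is a finite nonsingularity check and the above program closes it, but for general $K$ one would have to either shrink the index set or argue independence at the level of the complete precoder monomials $\prod_{(j,i)\in\mathcal{A}}(\mathrm{{\Phi}}_{23}^{-1}\mathrm{{\Phi}}_{ji})^{n_{ji}}\mathbf{1}_T$ of (\ref{time-extension-IA-Kuser-b}) rather than their linear part; pinning down exactly what extra structure rescues the statement is the crux.
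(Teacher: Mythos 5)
Your proposal takes a genuinely different route from the paper's, and the ``obstacle'' you flag at the end is not a weakness of your write-up: it is a refutation of the lemma as stated once $K\geq 5$. The paper's own proof simply transplants the mechanism of Lemma \ref{comb-Phi}: clear denominators, expand onto the $2N$ independent monomial vectors $\mathrm{G}_1^{a}\mathrm{G}_2^{2N-1-a}\mathbf{1}_T$ (the analogue of (\ref{expand-2})), and then assert that the resulting $2N$ homogeneous linear equations in the $N$ unknowns ``are forced to be zero.'' That one sentence is the entire rank step and it is never justified --- an overdetermined homogeneous system can still have a nonzero kernel. Your reformulation actually computes that rank: writing the $t$-th diagonal entry of $\Theta_{ji}$ as $\mathrm{g}_1(t,t)\,(a_{ji}+b_{ji}\lambda_t)$ with $\lambda_t=\mathrm{g}_2(t,t)/\mathrm{g}_1(t,t)$, the kernel of the paper's $2N\times N$ system is precisely $\{\gamma:\sum_{(j,i)}\gamma_{ji}s_{ji}\equiv 0\}$, so its rank equals $\dim\mathrm{span}\{s_{ji}\}$, and your partial-fraction decomposition caps that dimension by the number of available simple poles, $2(K-1)$. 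Since $|\mathcal{A}|=(K-1)(K-2)-1>2(K-1)$ for every $K\geq5$, a nonzero dependency $\gamma$ exists; and because $\sum\gamma_{ji}r_{ji}$ is then the identically zero rational function, the combination $\sum\gamma_{ji}\Phi_{23}^{-1}\Phi_{ji}\mathbf{1}_T$ vanishes for (almost) every realization of the relay gains. The vectors are therefore linearly \emph{dependent} with probability one for $K\geq 5$, and the paper's proof fails exactly at its unproved rank assertion.

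For $K=3$ the index set is a singleton and there is nothing to prove; for $K=4$ one has $N=5\leq 6=2(K-1)$, your residue system is the weighted incidence matrix of a connected bipartite graph on $6$ vertices with $5$ edges (a tree), which has full column rank for generic weights, so the only remaining work is the nonsingularity check you describe (e.g.\ one explicit specialization showing the channel-induced residues $A_{ji},B_{ji}$ avoid the degenerate variety). Your closing remark is also the right diagnosis of what (\ref{time-extension-IA-Kuser-b}) actually needs, namely independence of the full products $\prod(\Phi_{23}^{-1}\Phi_{ji})^{n_{ji}}\mathbf{1}_T$; but the same pole counting applies there --- all products with $\sum n_{ji}\leq n^*$ are values of rational functions in $\lambda$ with poles confined to $2K-1$ fixed points (plus infinity) of order at most $n^*$, a space of dimension $O(Kn^*)$, while the number of products grows like $(n^*)^{N}$ --- so the degeneracy is not rescued at higher degree and the two-relay asymptotic claim would need either more relays or a genuinely different argument.
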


\begin{proof}

The procedure is similar to Lemma \ref{comb-Phi}. Only notice that
the set contains $N$ vectors. The new equation corresponding to
(\ref{expand-1}) has degree of $2N-1$ over $\Theta_{ji}$. Expand to
a new equation corresponding to (\ref{expand-2}), similarly there
are totally $2N$ $T$-dimensional bases such as
$\mathrm{G}_1^{2N}\mathbf{1}_T,
\mathrm{G}_1^{2N-1}\mathrm{G}_2^{1}\mathbf{1}_T$ etc. Then there are
$2N$ linear equations about $N$ variables $\beta_1,\ldots,\beta_N$,
which are forced to be zero. So that the independency is proved.
\end{proof}

\hspace{1mm}

Finally, in addition, consider the linear independency of the
exponentiation terms of
$\mathrm{{\Phi}}_{23}^{-1}\mathrm{{\Phi}}_{ji}\mathbf{1}_T$ in the
precoders $\mathrm{V}_i, \forall i\in\mathcal{K}$ in
(\ref{time-extension-IA-Kuser-b}). Because the signal dimension $T$
is large enough to afford the number of bases in space, so that all
$\mathrm{V}_i, \forall i\in\mathcal{K}$ are possibly
well-conditioned to implement interference alignment. However, it
needs further rigorous proof in our future work.

In summary, the comparison of the cases $L=1$ and $L\geq2$ in signal
vector space illustrates that in single antenna constant relay
channels, single relay is not able to provide channel
\textit{randomness} or \textit{relativity} which is the key to
interference alignment, while at least two relays are necessary to
provide this feature to approach 1/2 DoF (excluding duplex factor)
for every user.

\subsection{Number-Based Signal Level Alignment}

Besides signal vector alignment for single antenna constant
channels, there is another major class of schemes called signal
level interference alignment. There are several approaches to
investigate signal level alignment, including lattice coding,
deterministic models, and number theory. Among these approaches, IA
on the number domain is a very novel and canonical method. While our
focus in the following work is to show, in relay connected constant
channels, in a similar manner to vector alignment scheme, how level
alignment would also face the applicability issue of existing
designs when there is only one relay and the issue solved when there
are at least two relays.

\subsubsection{Basic Concept and Design Procedure}

At first, signal level could be viewed on the rational number scale
which represents infinite fractional DoF
\cite{real-interference-alignment-exploit-single-antenna}. Then
Khintchine-Groshev theorem also reveals that the field of real
numbers is rich enough to be equivalent to vector space to design
IA. Furthermore, \cite[Theorem 7]{DoF-compound-BC-finite-states}
uses a generalized version of Khintchin-Groshev theorem to extend
the designs to complex channels. In real channels, DoF is defined as
$d_{\mathrm{Real}}=\lim_{P\rightarrow
\infty}\frac{C_{\mathrm{sum}}}{1/2\log_2(P)}$ while in complex
channel DoF is defined as $d_{\mathrm{Complex}}=\lim_{P\rightarrow
\infty}\frac{C_{\mathrm{sum}}}{\log_2(P)}$ where $C_{\mathrm{sum}}$
is sum capacity and $P$ is transmission power.

By using this Motahari-Khandani scheme in
\cite{real-interference-alignment-exploit-single-antenna,DoF-compound-BC-finite-states},
$K$-user single antenna constant channels could approach $K/2$ DoF.
Since the objective in this work is to study the role of relays, the
core structure and procedure of IA are briefly described in a
complex channel setting.



Define $\mathbf{u}_k=[u_k^{(1)} u_k^{(2)} \cdots u_k^{(d_k)}]$ as
the number of $d_k$ datastreams sent from node $I_k$ to node $J_k$,
$k\in\mathcal{K}$. $u_k^{(m)}\in(-Q,Q)_{\mathbb Z},1\leq m\leq d_k$,
i.e. belongs to an integer constellation. Each datastream is
multiplied by a number $\nu_k^{(m)}\in\mathbb{C}$, which is called a
modulation pseudo-vector serving as distinct directions. In order to
satisfy the power constraint and control the minimum distance of the
received constellation, transmission signals should be scaled with a
constant $\lambda$.

In the meanwhile, each relay  node $R_l$ generates a random gain to
be a rational number $g_l\in\mathbb{Q}$ and the equivalent channel
from $I_i$ to $J_k$ is as following:

\normalsize
\begin{equation}\label{theta-number}
\begin{aligned}
\theta_{ki}=\sum_{l=1}^{L}\mathrm{h}_{J_kR_l}\cdot{g}_{l}\cdot\mathrm{h}_{R_lI_i}
\end{aligned}
\end{equation}
\normalsize




Then the received signal at destination node $J_k$ is:


\begin{equation}
\begin{aligned}\label{number-y-j}
y_k=\lambda\theta_{kk}\sum_{m=1}^{d_k}\nu_{k}^{(m)}u_{k}^{(m)}+\sum_{i=1,i\neq
k}^{K}\lambda\theta_{ki}\sum_{m=1}^{d_i}\nu_{i}^{(m)}u_{i}^{(m)}
\end{aligned}
\end{equation}

According to
\cite{real-interference-alignment-exploit-single-antenna,DoF-compound-BC-finite-states},
 the structure design of pseudo-vectors $\nu_k^{(m)}$ on number domain is similar
to the design of $\mathrm{V}_i$ in vector space of
\cite{beamforming-efficient-IA}, however there exists notable
difference as well. In this design, all $\nu_k^{(m)}, 1\leq m\leq
d_k$ belong to a set $\mathcal{B}_{\nu_k}$:

\normalsize
\begin{equation}\label{number-base-1}
\begin{aligned}
\mathcal{B}_{\nu_k}&=\left\{\prod_{j=1}^{K}\prod_{i=1,j\neq i}^{K}\theta_{ji}^{\alpha_{ji}}\Big|\begin{aligned}&0\leq\alpha_{ji}\leq n-1 \ \ \ i=k,j\neq k \\&0\leq\alpha_{ji}\leq n\ \ \ \ \ \ \ \ \text{Otherwise}\end{aligned}\right\}\\
\end{aligned}
\end{equation}
\normalsize

\vspace{0mm}

$n$ is integer, so that the number of streams
$d_k=|\mathcal{B}_{\nu_k}|=n^{K-1}(n+1)^{(K-1)^2}$.

For destination node $J_k$, the received signal space of
(\ref{number-y-j}) contains desired signal subspace formed by
$\theta_{kk}\mathcal{B}_{\nu_k}$ and interference subspace formed by
$\theta_{ki}\mathcal{B}_{\nu_i},i\neq k$. Observe all the
interference subspaces overlap to the same set of
$\mathcal{B}_{\nu_k}'$:

\normalsize
\begin{equation}\label{number-base-Intf}
\begin{aligned}
\mathcal{B}_{\nu_k}'&=\left\{\prod_{j=1}^{K}\prod_{i=1,j\neq i}^{K}\theta_{ji}^{\alpha_{ji}}\Big|0\leq\alpha_{ji}\leq n\right\}\\
\end{aligned}
\end{equation}
\normalsize

\vspace{0mm}

So that $|\mathcal{B}_{\nu_k}'|=(n+1)^{K(K-1)}$. Meanwhile the
desired signal subspace $\theta_{kk}\mathcal{B}_{\nu_k}$ and
interference subspace $\mathcal{B}_{\nu_k}'$ are distinct. According
to \cite[Theorem
6]{real-interference-alignment-exploit-single-antenna}, the total
DoF is
$\frac{Kn^{K-1}(n+1)^{(K-1)^2}}{n^{K-1}(n+1)^{(K-1)^2}+(n+1)^{K(K-1)}+1}$,
which approaches $K/2$ when $n$ is large.

\subsubsection{Issue in the Three-User Special Case}

In the three-user case, there is a special form of design in
\cite[Definition
1]{real-interference-alignment-exploit-single-antenna} presented as
following:

\normalsize
\begin{equation}\label{number-base-3user-standard}
\begin{aligned}
&\hspace{8mm}y_1=\vartheta_1\cdot x_1+x_2+x_3+z_1\\
&\hspace{8mm}y_2=\vartheta_2\cdot x_2+x_1+x_3+z_2\\
&\hspace{8mm}y_3=\vartheta_3\cdot x_3+x_1+\vartheta_0\cdot
x_2+z_3\\
&\vartheta_0=\frac{\theta_{13}\theta_{21}\theta_{32}}{\theta_{12}\theta_{23}\theta_{31}},
\vartheta_1=\frac{\theta_{11}\theta_{12}\theta_{23}}{\theta_{12}\theta_{21}\theta_{13}},
\vartheta_2=\frac{\theta_{22}\theta_{13}}{\theta_{12}\theta_{23}},
\vartheta_3=\frac{\theta_{33}\theta_{12}\theta_{21}}{\theta_{12}\theta_{23}\theta_{31}}
\end{aligned}
\end{equation}
\normalsize

So that user 1 has the transmit directions
$\{1,\vartheta_0,\vartheta_0^2,\ldots,\vartheta_0^n\}$, user 2 has
the directions of $\{1,\vartheta_0,\ldots,\vartheta_0^{n-1}\}$, and
user 3 has the directions of
$\{1,\vartheta_0,\ldots,\vartheta_0^{n-1}\}$.

$\bullet$\textit{ If the network has only one relay, $L=1$},
according to (\ref{theta-number}) and
(\ref{number-base-3user-standard}):

\normalsize
\begin{equation}\label{number-base-3user-vartheta-single}
\begin{aligned}
&\vartheta_0=\frac{\mathrm{h}_{J_1R_1}\mathrm{h}_{R_1I_3}\mathrm{h}_{J_2R_1}\mathrm{h}_{R_1I_1}\mathrm{h}_{J_3R_1}\mathrm{h}_{R_1I_2}\cdot{g}_{1}^3}{\mathrm{h}_{J_1R_1}\mathrm{h}_{R_1I_2}\mathrm{h}_{J_2R_1}\mathrm{h}_{R_1I_3}\mathrm{h}_{J_3R_1}\mathrm{h}_{R_1I_1}\cdot{g}_{1}^3}=1
\end{aligned}
\end{equation}
\normalsize

Then the encoding set
$\{1,\vartheta_0,\vartheta_0^2,\ldots,\vartheta_0^n\}$ used by all
three users collapses. The scheme of
\cite{real-interference-alignment-exploit-single-antenna,DoF-compound-BC-finite-states}
is not applicable to this case to support normal transmission with
interference alignment.



$\bullet$\textit{If the network has two relays, $L=2$}, IA scheme is
applicable as proved by the following lemma:

\begin{Lemma}\label{number-base-3user-L2}
$\vartheta_0\neq 1$ holds almost surely, for random independent
parameters $\mathrm{h}_{J_jR_1}$, $\mathrm{h}_{J_jR_2}$,
$\mathrm{h}_{R_1I_i}$, $\mathrm{h}_{R_2I_i}$, $g_1$, $g_2$, where
$i,j\in\mathcal{K}$.
\end{Lemma}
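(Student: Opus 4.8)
The plan is to reduce the claim to the nonvanishing of a single polynomial, exactly in the spirit of Lemma \ref{comb}. By (\ref{number-base-3user-standard}) we have $\vartheta_0=\theta_{13}\theta_{21}\theta_{32}/(\theta_{12}\theta_{23}\theta_{31})$, and since each $\theta_{ki}$ is itself a nontrivial polynomial in the channel and relay gains, the denominator $\theta_{12}\theta_{23}\theta_{31}$ is nonzero almost surely, so $\vartheta_0$ is well defined and $\vartheta_0\neq1$ is equivalent to the nonvanishing of
\[
D:=\theta_{13}\theta_{21}\theta_{32}-\theta_{12}\theta_{23}\theta_{31}.
\]
It therefore suffices to show that $D$, viewed as a polynomial in the fourteen independent variables $\mathrm{h}_{J_jR_l},\mathrm{h}_{R_lI_i},g_l$, is not identically zero; the almost-sure claim then follows from the standard fact (used in Lemma \ref{comb}) that a nonzero polynomial in continuously distributed independent variables vanishes only on a set of measure zero.

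First I would substitute the $L=2$ form
\[
\theta_{ki}=\mathrm{h}_{J_kR_1}g_1\mathrm{h}_{R_1I_i}+\mathrm{h}_{J_kR_2}g_2\mathrm{h}_{R_2I_i}
\]
into $D$, expand each triple product into $2^3=8$ terms, and group all terms by the monomial they carry in the relay gains, namely $g_1^3,\,g_1^2g_2,\,g_1g_2^2,\,g_2^3$. The key preliminary observation is that the two \emph{pure} groups $g_1^3$ and $g_2^3$ are identical on the numerator and denominator sides and hence cancel in $D$: selecting relay $1$ (resp. relay $2$) from every factor yields the same product of channel coefficients on both sides. This is precisely the degeneration that forces $\vartheta_0=1$ in the single-relay case (\ref{number-base-3user-vartheta-single}); with two relays, however, it only kills the pure part, and $D$ collapses to its mixed part carried by $g_1^2g_2$ and $g_1g_2^2$.

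The decisive step is then to exhibit one surviving monomial in the mixed part. Collecting the coefficient of $g_1^2g_2$ (obtained by taking the relay-$2$ term from exactly one of the three factors and relay-$1$ from the other two), I would track the resulting products of $\mathrm{h}$-coefficients and check that at least one monomial from the numerator matches none of the three terms coming from the denominator. Concretely, the monomial
\[
\mathrm{h}_{J_1R_2}\mathrm{h}_{R_2I_3}\cdot\mathrm{h}_{J_2R_1}\mathrm{h}_{R_1I_1}\cdot\mathrm{h}_{J_3R_1}\mathrm{h}_{R_1I_2}
\]
arising from $\theta_{13}\theta_{21}\theta_{32}$ has a combination of source- and destination-side indices that cannot be reproduced by any relay-$2$-once selection from $\theta_{12}\theta_{23}\theta_{31}$, so it persists in $D$ with coefficient $\pm1$. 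Hence $D\not\equiv0$, and $\vartheta_0\neq1$ almost surely.

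The main obstacle I anticipate is exactly this bookkeeping in the mixed group: because of the cyclic symmetry of the two index patterns $(13,21,32)$ and $(12,23,31)$, it is tempting to expect total cancellation (as genuinely happens in the pure $g_1^3,g_2^3$ groups), so one must verify carefully that the $\mathrm{h}$-monomials do not pair up. The cleanest way to discharge this is to argue by distinctness of index patterns rather than by summing all six terms: since every channel coefficient is an independent variable, two monomials coincide only if their full index patterns coincide, and a short comparison shows the chosen monomial is unmatched. Once $D\not\equiv0$ is established, the measure-zero conclusion, and hence the lemma, follows immediately.
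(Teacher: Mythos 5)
Your proposal is correct and follows essentially the same route as the paper's proof: both reduce the claim to the nonvanishing of $\theta_{13}\theta_{21}\theta_{32}-\theta_{12}\theta_{23}\theta_{31}$, expand it over the monomial basis $\{g_1^3,g_1^2g_2,g_1g_2^2,g_2^3\}$ as in (\ref{number-base-3user-vartheta-double}), observe that the pure $g_1^3$ and $g_2^3$ coefficients cancel identically, and conclude from the almost-sure nonvanishing of the mixed coefficients. Your explicit exhibition of an unmatched monomial in the $g_1^2g_2$ group is a slightly more careful justification of the step the paper asserts directly, but it is the same argument.
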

\begin{proof}
%

Suppose $\vartheta_0=1$, from (\ref{theta-number}) and
(\ref{number-base-3user-standard}), we have
${\theta_{13}\theta_{21}\theta_{32}}-{\theta_{12}\theta_{23}\theta_{31}}=0$,
and expand it to the equation
(\ref{number-base-3user-vartheta-double}).

\begin{figure*}
\small
\begin{equation}
\begin{aligned}\label{number-base-3user-vartheta-double}
&(\mathrm{h}_{J_1R_1}\mathrm{h}_{R_1I_3}\mathrm{h}_{J_2R_1}\mathrm{h}_{R_1I_1}\mathrm{h}_{J_3R_1}\mathrm{h}_{R_1I_2}-\mathrm{h}_{J_1R_1}\mathrm{h}_{R_1I_2}\mathrm{h}_{J_2R_1}\mathrm{h}_{R_1I_3}\mathrm{h}_{J_3R_1}\mathrm{h}_{R_1I_1})\cdot g_1^3\\
&+[(\mathrm{h}_{J_1R_1}\mathrm{h}_{R_1I_3}\mathrm{h}_{J_2R_1}\mathrm{h}_{R_1I_1}\mathrm{h}_{J_3R_2}\mathrm{h}_{R_2I_2}+\mathrm{h}_{J_1R_1}\mathrm{h}_{R_1I_3}\mathrm{h}_{J_2R_2}\mathrm{h}_{R_2I_1}\mathrm{h}_{J_3R_1}\mathrm{h}_{R_1I_2}+\\
&\hspace{9mm}\mathrm{h}_{J_1R_2}\mathrm{h}_{R_2I_3}\mathrm{h}_{J_2R_1}\mathrm{h}_{R_1I_1}\mathrm{h}_{J_3R_1}\mathrm{h}_{R_1I_2})-(\mathrm{h}_{J_1R_1}\mathrm{h}_{R_1I_2}\mathrm{h}_{J_2R_1}\mathrm{h}_{R_1I_3}\mathrm{h}_{J_3R_2}\mathrm{h}_{R_2I_1}\\
&\hspace{5mm}+\mathrm{h}_{J_1R_1}\mathrm{h}_{R_1I_2}\mathrm{h}_{J_2R_2}\mathrm{h}_{R_2I_3}\mathrm{h}_{J_3R_1}\mathrm{h}_{R_1I_1}+\mathrm{h}_{J_1R_2}\mathrm{h}_{R_2I_2}\mathrm{h}_{J_2R_1}\mathrm{h}_{R_1I_3}\mathrm{h}_{J_3R_1}\mathrm{h}_{R_1I_1})]\cdot g_1^2g_2\\
&+[(\mathrm{h}_{J_1R_1}\mathrm{h}_{R_1I_3}\mathrm{h}_{J_2R_2}\mathrm{h}_{R_2I_1}\mathrm{h}_{J_3R_2}\mathrm{h}_{R_2I_2}+\mathrm{h}_{J_1R_2}\mathrm{h}_{R_2I_3}\mathrm{h}_{J_2R_1}\mathrm{h}_{R_1I_1}\mathrm{h}_{J_3R_2}\mathrm{h}_{R_2I_2}+\\
&\hspace{9mm}\mathrm{h}_{J_1R_2}\mathrm{h}_{R_2I_3}\mathrm{h}_{J_2R_2}\mathrm{h}_{R_2I_1}\mathrm{h}_{J_3R_1}\mathrm{h}_{R_1I_2})-(\mathrm{h}_{J_1R_1}\mathrm{h}_{R_1I_2}\mathrm{h}_{J_2R_2}\mathrm{h}_{R_2I_3}\mathrm{h}_{J_3R_2}\mathrm{h}_{R_2I_1}\\
&\hspace{5mm}+\mathrm{h}_{J_1R_2}\mathrm{h}_{R_2I_2}\mathrm{h}_{J_2R_1}\mathrm{h}_{R_1I_3}\mathrm{h}_{J_3R_2}\mathrm{h}_{R_2I_1}+\mathrm{h}_{J_1R_2}\mathrm{h}_{R_2I_2}\mathrm{h}_{J_2R_2}\mathrm{h}_{R_2I_3}\mathrm{h}_{J_3R_1}\mathrm{h}_{R_1I_1})]\cdot g_1g_2^2\\
&+(\mathrm{h}_{J_1R_2}\mathrm{h}_{R_2I_3}\mathrm{h}_{J_2R_2}\mathrm{h}_{R_2I_1}\mathrm{h}_{J_3R_2}\mathrm{h}_{R_2I_2}-\mathrm{h}_{J_1R_2}\mathrm{h}_{R_2I_2}\mathrm{h}_{J_2R_2}\mathrm{h}_{R_2I_3}\mathrm{h}_{J_3R_2}\mathrm{h}_{R_2I_1})\cdot g_2^3\\
&=0
\end{aligned}
\end{equation}
\normalsize \hrulefill \vspace{-0.5cm}
\end{figure*}

In equation (\ref{number-base-3user-vartheta-double}), since
$\{g_1^3 , g_1^2g_2 , g_1g_2^2 , g_2^3\}$ are independent number
bases with measure of one, the corresponding coefficients must be
all zeros. Observe $g_1^3$ and $g_2^3$ in
(\ref{number-base-3user-vartheta-double}), their coefficients are
indeed 0; while observe $g_1^2g_2$ and $g_1g_2^2$, their
coefficients are non-zero with measure one, because of random
selected parameters of $\mathrm{h}_{J_jR_1}$, $\mathrm{h}_{J_jR_2}$,
$\mathrm{h}_{R_1I_i}$, $\mathrm{h}_{R_2I_i}$. As a result, it proves
that $\vartheta_0\neq 1$.


\end{proof}

In summary, the comparison of the cases $L=1$ and $L=2$ shows that
single relay is not able to provide channel \textit{randomness} or
\textit{relativity} in the signal level alignment on number domain,
while two relays are capable of provide this feature to approach 1/2
due DoF for every user.

\subsubsection{Issue in the $K$-User General Case}
Extend the study of the three-user case to the general $K$-user
case.

$\bullet$\textit{If the network has only one relay, $L=1$},
according to (\ref{theta-number}) and (\ref{number-base-1}), the set
$\mathcal{B}_{\nu_k}$ is presented in an equal form of set
$\mathcal{B}_{\nu_k}^{L1}$:

\vspace{0mm}

\normalsize
\begin{equation}\label{number-base-1-single}
\begin{aligned}
\mathcal{B}_{\nu_k}^{L1}&=\left\{\prod_{j=1}^{K}\prod_{i=1,j\neq i}^{K}(\mathrm{h}_{J_jR_1}\cdot{g}_{1}\cdot\mathrm{h}_{R_1I_i})^{\alpha_{ji}}\Big|\begin{aligned}&0\leq\alpha_{ji}\leq n-1 \ i=k,j\neq k \\&0\leq\alpha_{ji}\leq n\ \ \ \ \ \ \text{Otherwise}\end{aligned}\right\}\\
\end{aligned}
\end{equation}
\normalsize

Counting the cardinality $|\mathcal{B}_{\nu_k}^{L1}|$ is complicated
so that it is natural to think of attain an upperbound by relaxing
and expanding the set $\mathcal{B}_{\nu_k}^{L1}$ to a new set
$\bar{\mathcal{B}}_{\nu_k}^{L1}$.
$\mathcal{B}_{\nu_k}^{L1}\subset\bar{\mathcal{B}}_{\nu_k}^{L1}$:

\normalsize
\begin{equation}\label{number-base-1-single-upper}
\begin{aligned}
\bar{\mathcal{B}}_{\nu_k}^{L1}&=\left\{\big(\prod_{j\in\mathcal{K}}\mathrm{h}_{J_jR_1}^{\tau_{j}}\big)\cdot\big({g}_{1}^{\gamma}\big)\cdot\big(\prod_{i\in\mathcal{K}}\mathrm{h}_{R_1I_i}^{\varsigma_i}\big)\Big|\begin{aligned}&0\leq\gamma\leq nK^2,0\leq\tau_j,\varsigma_i\leq nK\\&\sum_{j\in\mathcal{K}}\tau_j= \gamma, {\sum_{i\in\mathcal{K}}\varsigma_i}= \gamma\end{aligned}\right\}\\
\end{aligned}
\end{equation}
\normalsize

Notice that
$\mathcal{B}_{\nu_k}^{L1}\subset\bar{\mathcal{B}}_{\nu_k}^{L1}$, so
that $|\mathcal{B}_{\nu_k}^{L1}|<|\bar{\mathcal{B}}_{\nu_k}^{L1}|$.
Further loose the set by:

\vspace{0mm}

\normalsize
\begin{equation}
\begin{aligned}\label{number-loosebound}
&\mathcal{C}_{\gamma}=\{(\tau_1,\tau_2,\ldots,\tau_K)|0\leq\tau_j\leq
nK,\forall j\in\mathcal{K},\sum_{j\in\mathcal{K}}\tau_j=\gamma\}\\
&\mathcal{D}_{\gamma}=\{(\tau_1,\tau_2,\ldots,\varsigma_K)|0\leq\varsigma_i\leq
nK,\forall i\in\mathcal{K},\sum_{i\in\mathcal{K}}\varsigma_i=\gamma\}\\
&|\bar{\mathcal{B}}_{\nu_k}^{L1}|\leq\sum_{\gamma=0}^{nK^2}{|\mathcal{C}_{\gamma}|\cdot|\mathcal{D}_{\gamma}|}<\sum_{\gamma=0}^{nK^2}(nK+1)^K\cdot(nK+1)^K\\
\end{aligned}
\end{equation}
\normalsize

\vspace{0mm}

Compare with the original due cardinality of (\ref{number-base-1}):

%
\normalsize
\begin{equation}
\begin{aligned}\label{number-card-compare}
\frac{|\mathcal{B}_{\nu_k}^{L1}|}{n^{K-1}(n+1)^{(K-1)^2}}<\frac{(nK^2+1)(nK+1)^{2K}}{n^{K-1}(n+1)^{(K-1)^2}}\xrightarrow[]{n\rightarrow
\infty, K\rightarrow \infty}0
\end{aligned}
\end{equation}
\normalsize

It reveals that in equivalent channels generated by single relay,
$L=1$, the encoding set $\mathcal{B}_{\nu_k}$ degenerates to
negligible size for large $n$ and $K$, i.e. Motahari-Khandani scheme
\cite{real-interference-alignment-exploit-single-antenna,DoF-compound-BC-finite-states}
is not applicable in this case to support normal transmission with
interference alignment.

$\bullet$\textit{If the network has at least two relays, $L\geq2$},
we have similar conclusions as lemma \ref{number-base-3user-L2} such
that the precoding set are greatly expanded with independent
elements to support normal transmission with interference alignment.
It also needs further comprehensive validation and rigorous proof
with exact requisite conditions.

\subsection{Asymmetric Complex Signaling Alignment}

Asymmetric complex signaling alignment is categorized into a class
of phase alignment. It is proposed by Jafar as an effective measure
to solve the problem of obtaining high DoF in constant interference
channels
\cite{IA-asymmetric-complex-signaling-settling-nosratinia-conjecture-trans}.
However, the work in this thesis investigates if it is applicable to
the relay generated equivalent interference channel.

It has been conjectured by H$\o$st, Madsen and Nosratinia that
complex Gaussian interference channels with single antenna nodes and
constant channel coefficients have only one degree-of-freedom
regardless of the number of users \cite{multiplexing-gain-wireless}.
Intuitively, this conjecture indicates the optimality of orthogonal
medium access such as TDMA where each user is assigned a fraction of
DoF and the sum DoF is equal to one. Original interference alignment
\cite{IA-DOF-Kuser-Interference,IA-Spacial-DOF-Kuser-Interference}
provided a powerful tool to achieve $K/2$ DoF for $K$-user
interference channels, only under the condition of time-varying or
frequency-selective channel coefficients. However, in single antenna
constant channels, it was not known if H$\o$st-Madsen-Nosratinia
conjecture is right until the appearance of asymmetric complex
signaling method.

By using asymmetric complex signal inputs, Jafar shows that at least
1.2 DoF are achievable on the complex 3-user interference channel
with constant channel coefficients
\cite{IA-asymmetric-complex-signaling-settling-nosratinia-conjecture-trans},
without any assumption of time-variations/frequency-selectivity used
in prior work. In conventional wireless systems only circularly
symmetric complex Gaussian random variables are typically used in
order to maximize entropy, while in asymmetric complex signaling the
inputs are chosen to be complex but not circularly symmetric. For
example, in an specific case of phase alignment, each transmitter
uses only real valued Gaussian inputs, and ensures that interference
at each receiver aligns in the imaginary dimension when the desired
signal is in the real dimension of the complex space. This idea
could be extended to general values of channel coefficients
thereafter.

Detailed introduction is omitted here scheme, while the basic model
and conclusion is briefly presented. First, each received complex
signal is expressed in real values as following (noise is ignored
for simplicity):

\begin{equation}\label{asymmetri-complex-equation}
\begin{aligned}
\left[\begin{aligned}\mathrm{Re}\{\mathrm{Y}_i\}\\\mathrm{Im}\{\mathrm{Y}_i\}\end{aligned}\right]=\sum_{j=1}^{K}|\mathrm{H}_{ij}|\underbrace{\left[\begin{aligned}&\cos(\phi_{ij})\
-&\sin(\phi_{ij})\\&\sin(\phi_{ij})\
&\cos(\phi_{ij})\end{aligned}\right]}_{\text{Defined as
}U(\phi_{ij})}\left[\begin{aligned}\mathrm{Re}\{\mathrm{X}_j\}\\\mathrm{Im}\{\mathrm{X}_j\}\end{aligned}\right]
\end{aligned}
\end{equation}

where the complex output $\mathrm{Y}_i$ and complex input
$\mathrm{X}_j$ are split into scalar counterparts of real values,
and the complex scalar channel $\mathrm{H}_{ij}$ is expressed by its
amplitude $|\mathrm{H}_{ij}|$ and a transformation matrix $U(\phi)$,
which is a rotation matrix defined with the following properties:

\begin{equation}\label{asymmetri-complex-phi}
\begin{aligned}
U(\phi)^{-1}&=U(-\phi)\\
U(\phi)U(\theta)&=U(\theta)U(\phi)\\
U(\phi)U(\theta)&=U(\phi+\theta)
\end{aligned}
\end{equation}

With the new formulation, it is proved that the 3 user complex
Gaussian interference channel with constant coefficients achieves
1.2 degrees-of-freedom if all of the following conditions are
satisfied \cite[Theorem
2]{IA-asymmetric-complex-signaling-settling-nosratinia-conjecture-trans}:

\begin{equation}\label{asymmetri-complex-condition}
\begin{aligned}
\phi_{J_1I_3}+\phi_{J_2I_1}-\phi_{J_2I_3}-\phi_{J_1I_1}\ \neq 0\ \mod{(\pi)}\\
\phi_{J_1I_2}+\phi_{J_3I_1}-\phi_{J_3I_2}-\phi_{J_1I_1}\ \neq 0\ \mod{(\pi)}\\
\phi_{J_2I_1}+\phi_{J_3I_2}-\phi_{J_3I_1}-\phi_{J_2I_2}\ \neq 0\ \mod{(\pi)}\\
\phi_{J_2I_3}+\phi_{J_1I_2}-\phi_{J_1I_3}-\phi_{J_2I_2}\ \neq 0\ \mod{(\pi)}\\
\phi_{J_3I_2}+\phi_{J_1I_3}-\phi_{J_1I_2}-\phi_{J_3I_3}\ \neq 0\ \mod{(\pi)}\\
\phi_{J_3I_1}+\phi_{J_2I_3}-\phi_{J_2I_1}-\phi_{J_3I_3}\ \neq 0\ \mod{(\pi)}\\
\end{aligned}
\end{equation}

where $I_1$, $I_2$, $I_3$ are transmitter nodes and $J_1$, $J_2$,
$J_3$ are receiver nodes. Notice the coditions of
(\ref{asymmetri-complex-condition}) are naturally satisfied almost
surely for general 3-user networks. However, consider the relay
networks as in Fig. \ref{relay-IA-pic2}, if there is only one relay,
then the equivalent channel phase from transmitter $I_i$ to receiver
$J_j$ is equal to the sum of phases from transmitter $I_i$ to relay
$R_1$ and from relay $R_1$ to receiver $J_j$:

\begin{equation}\label{asymmetri-complex-phi-relay}
\begin{aligned}
\phi_{J_jI_i}=\phi_{R_1I_i}+\phi_{J_jR_1}\ \mod{(\pi)}\\
\end{aligned}
\end{equation}

Therefore, check the conditions of
(\ref{asymmetri-complex-condition}) again with
(\ref{asymmetri-complex-phi-relay}), and it is easy to find out the
following new results which make the design no longer available:

\begin{equation}\label{asymmetri-complex-condition-relay}
\begin{aligned}
\phi_{J_1I_3}+\phi_{J_2I_1}-\phi_{J_2I_3}-\phi_{J_1I_1}\ = 0\ \mod{(\pi)}\\
\phi_{J_1I_2}+\phi_{J_3I_1}-\phi_{J_3I_2}-\phi_{J_1I_1}\ = 0\ \mod{(\pi)}\\
\phi_{J_2I_1}+\phi_{J_3I_2}-\phi_{J_3I_1}-\phi_{J_2I_2}\ = 0\ \mod{(\pi)}\\
\phi_{J_2I_3}+\phi_{J_1I_2}-\phi_{J_1I_3}-\phi_{J_2I_2}\ = 0\ \mod{(\pi)}\\
\phi_{J_3I_2}+\phi_{J_1I_3}-\phi_{J_1I_2}-\phi_{J_3I_3}\ = 0\ \mod{(\pi)}\\
\phi_{J_3I_1}+\phi_{J_2I_3}-\phi_{J_2I_1}-\phi_{J_3I_3}\ = 0\ \mod{(\pi)}\\
\end{aligned}
\end{equation}

Similar to the time-extended signal vector scheme and the
number-based signal level scheme, the remark for complex phase
alignment is concluded:

\begin{Remark}
For single antenna constant channels, if the $K$-pair users are
connected by only one relay, then asymmetric complex signaling
scheme would not work to obtain more than one DoF in this relay
generated equivalent channel.
\end{Remark}

\section{Solvement of Single Relay Issue in Precoding Scheme in Constant Channels}

Previous analysis shows that single relay is not capable of
providing channel randomness to support interference alignment
schemes by precoding at transmitters. However, in this section, we
look into possible alternative solutions in the scenarios only with
single relay. In order to address this issue, a natural strategy is
to transform the role of multiple ($L\geq 2$) relays to the role of
multiple ($L\geq 2$) time slots scenario with one relay.

As a primitive investigation, at first the following part only looks
into the case when there are only two separate network statuses. In
particular, the case is equivalent to either having two relays, i.e.
$L=2$, or dividing the whole transmission period into $L=2$ parts.
Suppose the transmission period starts from time slot 1 to time slot
$T$ (set $T$ as even). To divide all the $T$ time slots into $L=2$
parts, just think of the channel statuses at odd time slots as one
virtual network with one virtual relay; then think of the channel
statuses at even time slots as another virtual network with another
virtual relay; add these two classes of statuses correspondingly to
form a virtual network with two relays. Then the equivalent channel
as shown in previous equation (\ref{Theta-diag}) is presented as:

\normalsize
\begin{equation}\label{Theta-diag-doublelayer}
\begin{aligned}
\mathrm{\Theta}_{ji}=\ \ \
&\mathrm{Diag}\{\mathrm{f}_{jl}^{[1]}{\mathrm{g}_{l}}(1,1)\mathrm{h}_{li}^{[1]},
\mathrm{f}_{jl}^{[3]}\mathrm{g}_{l}(3,3)\mathrm{h}_{li}^{[3]},
\cdots,
\mathrm{f}_{jl}^{[T-1]}\mathrm{g}_{l}(T-1,T-1)\mathrm{h}_{li}^{[T-1]}\}\\
+&\mathrm{Diag}\{\mathrm{f}_{jl}^{[2]}{\mathrm{g}_{l}}(2,2)\mathrm{h}_{li}^{[2]},
\mathrm{f}_{jl}^{[4]}\mathrm{g}_{l}(4,4)\mathrm{h}_{li}^{[4]},
\cdots,
\mathrm{f}_{jl}^{[T]}\mathrm{g}_{l}(T,T)\mathrm{h}_{li}^{[T]}\}
\end{aligned}
\end{equation}
\normalsize

In (\ref{Theta-diag-doublelayer}), actually $l=1$ because there is
actually one relay. $\mathrm{h}_{li}^{[t]}$ represents the channel
from source node $i$ to relay node $l$ at the time slot $t$, and
$\mathrm{f}_{jl}^{[t]}$ represents the channel from relay node $l$
to destination node $j$ at the time slot $t$.

Notice that an issue arises here. Because the channels are constant
in all $T$ time slots, so that
$\mathrm{h}_{li}^{[1]}=\mathrm{h}_{li}^{[2]}=\cdots=\mathrm{h}_{li}^{[T]}$
and
$\mathrm{f}_{jl}^{[1]}=\mathrm{f}_{jl}^{[2]}=\cdots=\mathrm{f}_{jl}^{[T]}$.
Then the above equivalent channel (\ref{Theta-diag-doublelayer})
becomes:

\normalsize
\begin{equation}\label{Theta-diag-doublelayer-constant}
\begin{aligned}
\mathrm{\Theta}_{ji}=\ \ \
&\mathrm{f}_{jl}^{[1]}\mathrm{h}_{li}^{[1]}\mathrm{Diag}\{{\mathrm{g}_{l}}(1,1),
\mathrm{g}_{l}(3,3), \cdots,
\mathrm{g}_{l}(T-1,T-1)\}\\
+&\mathrm{f}_{jl}^{[1]}\mathrm{h}_{li}^{[1]}\mathrm{Diag}\{{\mathrm{g}_{l}}(2,2),
\mathrm{g}_{l}(4,4), \cdots, \mathrm{g}_{l}(T,T)\}\\
\triangleq\ \ \
&\mathrm{f}_{jl}^{[1]}\mathrm{h}_{li}^{[1]}\mathrm{Diag}\{g(1),
g(2), \cdots, g(T/2)\}
\end{aligned}
\end{equation}
\normalsize

Observe the final combined result in
(\ref{Theta-diag-doublelayer-constant}) and find that the channel is
equivalent to previous case of (\ref{Theta-SR}) which has only
single relay. In this case, it has been proved interference
alignment is definitely not applicable at all.

Therefore, to artificially emulate the case of two relays where the
channels should show randomness, two solutions are proposed in the
following two parts. One is by using switchable antenna for the
relay node in the middle, so that each time after the antenna is
switched all the corresponding channels change together in a random
manner; the other is to proceed fluctuation precoding scheme for
both end nodes at the edge with double-layered symbol extensions, so
that each time the precoding gains change all the channels change
randomly as well.

\subsection{Antenna-Switching Coding in the Middle}


The solution of antenna-switching coding is inspired by the idea of
reconfigurable antenna and \textit{blind} interference alignment.
Reconfigurable antenna is an emerging technology, changing its
characteristics by switching geometrical-metallic segments to create
pre-determined independent modes in every time slot
\cite{nano-electromech-switch-reconfig-antenna}. Each distinct
configuration is a different mode. This technology is recently
introduced to the research of \textit{blind} interference alignment
\cite{aiming-dark-blind-IA-staggered-antenna}, which utilizes
reconfigurable antenna to manipulate the channel directly to create
channel fluctuation patterns that are exploited by the transmitter.

Therefore, the relay $R_l$ ($l=1$) is set as a reconfigurable
antenna and it has two modes which operate for odd time slots and
even time slots respectively. An important feature is that channels
change between odd time slots and even time slots, i.e.
$\mathrm{h}_{li}^{[1]}\neq \mathrm{h}_{li}^{[2]}$ and
$\mathrm{f}_{jl}^{[1]}\neq \mathrm{f}_{jl}^{[2]}$ while
$\mathrm{h}_{li}^{[1]}=\mathrm{h}_{li}^{[3]}=\cdots=\mathrm{h}_{li}^{[T-1]}$,
$\mathrm{h}_{li}^{[2]}=\mathrm{h}_{li}^{[4]}=\cdots=\mathrm{h}_{li}^{[T]}$
and
$\mathrm{f}_{jl}^{[1]}=\mathrm{f}_{jl}^{[3]}=\cdots=\mathrm{f}_{jl}^{[T-1]}$
and
$\mathrm{f}_{jl}^{[2]}=\mathrm{f}_{jl}^{[4]}=\cdots=\mathrm{f}_{jl}^{[T]}$.
Then the above equivalent channel (\ref{Theta-diag-doublelayer})
becomes:

\normalsize
\begin{equation}\label{Theta-diag-doublelayer-blind}
\begin{aligned}
\mathrm{\Theta}_{ji}=\ \ \
&\mathrm{f}_{jl}^{[1]}\mathrm{h}_{li}^{[1]}\mathrm{Diag}\{{\mathrm{g}_{l}}(1,1),
\mathrm{g}_{l}(3,3), \cdots,
\mathrm{g}_{l}(T-1,T-1)\}\\
+&\mathrm{f}_{jl}^{[2]}\mathrm{h}_{li}^{[2]}\mathrm{Diag}\{{\mathrm{g}_{l}}(2,2),
\mathrm{g}_{l}(4,4), \cdots, \mathrm{g}_{l}(T,T)\}
\end{aligned}
\end{equation}
\normalsize

%

Since it could not be further combined, so that it is equivalent to
the case with two relays in equation (\ref{Theta-DR}). Interference
alignment is much likely to be applied in this case. However,
regarding DoF, there is an additional factor $1/2$ due to the
double-layered symbol extension. Thus the $K$-pair network could
approach $K/4$ (excluding duplex factor) DoF with one single antenna
relay on condition that IA is finally successful. Notice the DoF is
still scalable regarding $K$.

\subsection{Antenna-Fluctuating Coding at the Edge}


The above antenna-switching solution has an extra requirement of the
antenna being physically reconfigured. While in this part, a simple
but effective solution is proposed to directly proceed coding at the
end nodes, instead of the relay. Let each source node $i$ has an
additional random gain of $\alpha_i^{[t]}$ at the time slot $t$; and
each destination node $j$ has an additional random gain of
$\beta_j^{[t]}$ at the time slot $t$. Then the equivalent channel
(\ref{Theta-diag-doublelayer}) becomes:

\normalsize
\begin{equation}\label{Theta-diag-doublelayer-edge}
\begin{aligned}
\mathrm{\Theta}_{ji}=\ \ \
&\mathrm{Diag}\{\beta_j^{[1]}\mathrm{f}_{jl}^{[1]}{\mathrm{g}_{l}}(1,1)\mathrm{h}_{li}^{[1]}\alpha_i^{[1]},
\beta_j^{[3]}\mathrm{f}_{jl}^{[3]}\mathrm{g}_{l}(3,3)\mathrm{h}_{li}^{[3]}\alpha_i^{[3]},
\\
&\hspace{60mm}\cdots,
\beta_j^{[T-1]}\mathrm{f}_{jl}^{[T-1]}\mathrm{g}_{l}(T-1,T-1)\mathrm{h}_{li}^{[T-1]}\alpha_i^{[T-1]}\}\\
+&\mathrm{Diag}\{\beta_j^{[2]}\mathrm{f}_{jl}^{[2]}{\mathrm{g}_{l}}(2,2)\mathrm{h}_{li}^{[2]}\alpha_i^{[2]},
\beta_j^{[4]}\mathrm{f}_{jl}^{[4]}\mathrm{g}_{l}(4,4)\mathrm{h}_{li}^{[4]}\alpha_i^{[4]},
\\
&\hspace{75mm}\cdots,
\beta_j^{[T]}\mathrm{f}_{jl}^{[T]}\mathrm{g}_{l}(T,T)\mathrm{h}_{li}^{[T]}\alpha_i^{[T]}\}
\end{aligned}
\end{equation}
\normalsize

Similarly, let $\alpha_i^{[1]}\neq \alpha_i^{[2]}$ and
$\beta_j^{[1]}\neq \beta_j^{[2]}$, while
$\alpha_i^{[1]}=\alpha_i^{[3]}=\cdots\alpha_i^{[T-1]}$,
$\alpha_i^{[2]}=\alpha_i^{[4]}=\cdots\alpha_i^{[T]}$ and
$\beta_j^{[1]}=\beta_j^{[3]}=\cdots=\beta_j^{[T-1]}$,
$\beta_j^{[2]}=\beta_j^{[4]}=\cdots=\beta_j^{[T]}$. Besides, because
channels are constant,
$\mathrm{h}_{li}^{[1]}=\mathrm{h}_{li}^{[2]}=\cdots=\mathrm{h}_{li}^{[T]}$
and
$\mathrm{f}_{jl}^{[1]}=\mathrm{f}_{jl}^{[2]}=\cdots=\mathrm{f}_{jl}^{[T]}$.
So that (\ref{Theta-diag-doublelayer-edge}) becomes:

\normalsize
\begin{equation}\label{Theta-diag-doublelayer-edge-2}
\begin{aligned}
\mathrm{\Theta}_{ji}=\ \ \
&\beta_j^{[1]}\mathrm{f}_{jl}^{[1]}\mathrm{h}_{li}^{[1]}\alpha_i^{[1]}\mathrm{Diag}\{{\mathrm{g}_{l}}(1,1),
\mathrm{g}_{l}(3,3),\cdots,
\mathrm{g}_{l}(T-1,T-1)\}\\
+&\beta_j^{[2]}\mathrm{f}_{jl}^{[1]}\mathrm{h}_{li}^{[1]}\alpha_i^{[2]}\mathrm{Diag}\{{\mathrm{g}_{l}}(2,2),
\mathrm{g}_{l}(4,4),\cdots, \mathrm{g}_{l}(T,T)\}
\end{aligned}
\end{equation}
\normalsize

Then it is also equivalent to the case with two relays in equation
(\ref{Theta-DR}). Interference alignment is much likely to be
applied in this case. Also, the $K$-pair network could only approach
$K/4$ (excluding duplex factor) DoF which is scalable regarding $K$
on condition IA is finally successful. Notice that in this solution
reconfigurable antenna is not required. Simple coding at the edge
could generate channel randomness to apply IA schemes.

\section{Numerical Relay Coding Scheme aiming at DoF}

Although the analytical design could approach $K/2$ DoF, it is very
complicated to implement and requires very large dimensions in
signal subspaces so that it is not quite practical. Thus, in the
following part, a numerical method is proposed for the relay coding,
corresponding to the strategy of \textit{coding in the middle}
\cite{NC-Three-Unicast-IA}.

Relay coding and optimization methods have been studied in a lot of
works. However, all the methods take rates as direct objectives.
While concerning the high SNR region, in term of DoF, a novel idea
is naturally considered to optimize the achievable DoF directly with
relay coding. However, it is necessary to find an appropriate tool
to support the design. Therefore, in this work, it is proposed to
utilize a powerful tool called rank constrained rank minimization
(RCRM) \cite{IA-RCRM,IA-RCRM-Globcom}. While RCRM method is
originally used to design the precoding scheme to achieve
interference alignment in general $K$-user interference channels. So
the previous solution is not applicable to the relay coding problem
in our scenario.

In this work, the contributions are two-fold: first, the RCRM
formulation \cite{IA-RCRM} is transformed to a novel problem
(\ref{convx-approx-1and2}) regarding new decision variables of
$\mathrm{G}_l$, and it is proved that its convexity is maintained to
proceed numerical algorithms; second, along with a variety of
extensions of this scenario, a general relay coding algorithm is
presented to solve such kind of problems in a universal practicable
way.

\subsection{Interference Alignment by Rank Constrained Rank Minimization (RCRM)}

Consider the model in Fig. \ref{relay-IA} with all nodes equipped
with single antenna. $K$ sources communicate with $K$ destinations
via $L$ relays. Time-extended MIMO scheme is used with a period time
of $T$. The relay coding matrix is $\mathrm{G}_l$ as in (\ref{G}).
Then the equivalent channel $\Theta_{ji}$ is shown in (\ref{Theta}),
composed of $\mathrm{G}_l$. Each user has $d$ streams in the scheme,
so that $\mathrm{V}_{i}\in\mathbb{C}^{T\times d}$ is the precoding
matrix at source $i$; $\mathrm{G}_{l}\in\mathbb{C}^{T\times T}$;
$\mathrm{U}_{j}\in\mathbb{C}^{T\times d}$ is the receive filter
matrix at destination $j$. Then each receiver $j$ linearly processes
the received signal by zero-forcing. The received interference and
signal should satisfy the following conditions:

\begin{equation}\label{DoF-cond-1}
\begin{aligned}
\mathrm{U}_j^{\dag}\mathrm{\Theta}_{ji}\mathrm{V}_i&=\mathbf{0}_{d\times
d} \qquad i\neq j \\
\mathrm{rank}(\mathrm{U}_j^{\dag}\mathrm{\Theta}_{jj}\mathrm{V}_j)&=d
\end{aligned}
\end{equation}

Define interference and signal subspace respectively for receiver
$j$:

\normalsize
\begin{equation}\label{Def-JS}
\begin{aligned}
\mathcal{J}_j(\{\mathrm{V}_i\}_{i\neq j}^{K},\{\mathrm{G}_{l}\}_{l=1}^L,\mathrm{U}_j)&\triangleq \mathrm{U}_j^{\dag}[\{\mathrm{\Theta}_{ji}\mathrm{V}_i\}_{i\neq j}^{K}]\\
\mathcal{S}_j(\mathrm{V}_j,\{\mathrm{G}_{l}\}_{l=1}^L,\mathrm{U}_j)&\triangleq\mathrm{U}_j^{\dag}\mathrm{\Theta}_{jj}\mathrm{V}_j
\end{aligned}
\end{equation}
\normalsize

where $\mathrm{J}_j(\{\mathrm{V}_i\}_{i\neq
j}^{K},\{\mathrm{G}_{l}\}_{l=1}^L,\mathrm{U}_j)$ gathers all
interference observed at the $j$-th destination and
$\mathrm{S}_j(\mathrm{V}_j,\{\mathrm{G}_{l}\}_{l=1}^L,\mathrm{U}_j)$
gathers the desired signal from the $j$-th source. However, compared
with original problem in \cite{IA-RCRM}, notice in our work, the
formulation (\ref{Def-JS}) contains $\mathrm{G}_{l}$ as new
variables through the equivalent channel $\Theta_{ji}$ as in
(\ref{Theta}), in addition to prior variables $\mathrm{V}_i$ and
$\mathrm{U}_j$.

The conventional method of IA to approach zero-forcing conditions in
(\ref{DoF-cond-1}) is the interference leakage minimization
algorithm \cite{approaching-capacity-IA}. While the idea of
\cite{IA-RCRM,IA-RCRM-Globcom} indicates that, instead of leakage
minimization, the conditions of (\ref{DoF-cond-1}) could be dealt
with directly as a rank problem as following:

\normalsize
\begin{equation}\label{DoF-cond-2}
\begin{aligned}
&\mathrm{rank}(\mathcal{J}_j(\{\mathrm{V}_i\}_{i\neq j}^{M},\{\mathrm{G}_{l}\}_{l=1}^L,\mathrm{U}_j))=0  \\
&\mathrm{rank}(\mathcal{S}_j(\mathrm{V}_j,\{\mathrm{G}_{l}\}_{l=1}^L,\mathrm{U}_j))=d
\end{aligned}
\end{equation}
\normalsize

Then the problem (\ref{DoF-cond-2}) is further formulated as a rank
constrained rank minimization as following:

\normalsize
\begin{equation}\label{RCRM}
\begin{aligned}
&
{\text{minimize}}
& & \sum_{j=1}^{K}\mathrm{rank}(\mathcal{J}_j) \\
& \text{s.t. :} & & \mathrm{rank}(\mathcal{S}_j)=d, \quad \forall
j\in\mathcal{K}
\end{aligned}
\end{equation}
\normalsize

The problem (\ref{RCRM}) guarantees that the interference subspaces
collapse to the smallest possible dimension, under the constraints
of the desired signal subspaces to have full ranks. While minimizing
the sum of ranks of the interference matrices is equivalent to
maximizing the sum desired DoF of the network. 


%

Since the objection function in (\ref{RCRM}) is nonconvex,
\cite{IA-RCRM} proposes a tight convex approximation regarding
$\mathcal{J}_j$:

\normalsize
\begin{equation}\label{convx-approx-1}
\begin{aligned}
\overline{\mathrm{conv}}(\sum_{j=1}^{K}\mathrm{rank}(\mathcal{J}_j))&=\frac{1}{\mu}\sum_{j=1}^{K}\|\mathcal{J}_j\|_* \\
&=\frac{1}{\mu}\sum_{j=1}^{K}\sum_{n=1}^{d}\sigma_n(\mathcal{J}_j)
\end{aligned}
\end{equation}
\normalsize

where $\overline{\mathrm{conv}}(\cdot)$ denotes convex envelope of a
function, and $\|\cdot\|_*=\sum_{n=1}^d\sigma_n(\cdot)$ is the
nuclear form of a matrix, which accounts for the sum of the largest
$d$ singular values. $\mu$ is a constant satisfying $\mu\geq
\max_{j\in\mathcal{K}}\sigma_1(\mathcal{J}_j)$.


Since the constraints in (\ref{RCRM}) is also non-convex feasible
set, \cite{IA-RCRM} provides an approximation of convex set
regarding $\mathcal{S}_j$:

\small
\begin{equation}\label{convx-approx-2}
\begin{aligned}
&\lambda_{\mathrm{min}}(\mathcal{S}_j)>\epsilon\\
&\mathcal{S}_j \succeq 0_{d\times d}
\end{aligned}
\end{equation}
\normalsize

where $\lambda_{\mathrm{min}}(\mathcal{S}_j)$ is the minimum
eigenvalue of $\mathcal{S}_j$ and $\epsilon$ is a small positive
constant. and $\mathcal{S}_j\succeq 0_{d\times d}$ denotes the
matrix is hermitian positive semidefinite.

With the above convex approximations (\ref{convx-approx-1}) and
(\ref{convx-approx-2}), the non-convex problem in (\ref{RCRM}) is
relaxed to a total convex optimization \cite{convex-boyd}.
(\ref{convx-approx-1}) suppresses interference spaces to the
smallest dimension, while (\ref{convx-approx-2}) guarantees desired
signal subspaces. It is equivalent to maximizing DoF of the network.
Then the original coding design problem is formulated as a rank
constrained rank minimization problem as following:


\normalsize
\begin{equation}\label{convx-approx-1and2}
\begin{aligned}
&
{\text{minimize}}
& & \sum_{j=1}^{K}\|\mathrm{\mathcal {J}}_j\|_* \\
\hspace{12mm}& \text{s.t. :}
& & \lambda_{\mathrm{min}}(\mathcal{S}_j)>\epsilon, \\
&&& \mathcal{S}_j \succeq 0_{d\times d}, \qquad \forall
j\in\mathcal{K}
\end{aligned}
\end{equation}
\normalsize


Notice that the optimization problem of (\ref{convx-approx-1and2})
is convex regarding the variables $\mathrm{\mathcal {J}}_j$ and
$\mathcal{S}_j$. In the general IA problems as in \cite{IA-RCRM}, it
is also convex regarding variables $\mathrm{V}_i$ and
$\mathrm{U}_j$. While for relay networks, it is not still guaranteed
if it is convex for the relay coding variables $\Theta_{ji}$ and
$\mathrm{G}_l$.


\subsection{Design of Novel Relay Coding with RCRM in Time-Extended Single Antenna Channels}

Compared with \cite{IA-RCRM}, hereby we specify three groups of
decision variables $\mathrm{G}_l$, $\mathrm{V}_i$, $\mathrm{U}_i$ in
the optimization problem (\ref{convx-approx-1and2}). In the
following part, (\ref{convx-approx-1and2}) is to be proved to be
convex regarding decision variables $\mathrm{G}_{l}$, as well as
original convexity of $\mathrm{V}_i$ and $\mathrm{U}_i$ already
indicated in \cite{IA-RCRM}. The key of the proof relies on the
affine relationship between new variables and the objective function
and constraints, which are shown in lemma \ref{lemma-convex1-0} and
lemma \ref{lemma-convex2-0} respectively as following.

\begin{Lemma}
The objective function (\ref{convx-approx-1}) is convex regarding
variables $\{\mathrm{G}_l\}_{l=1}^L$ .\label{lemma-convex1-0}
\end{Lemma}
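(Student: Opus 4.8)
The plan is to exploit the standard fact that the composition of a convex function with an affine map is convex, combined with the observation that the nuclear norm is itself convex (being a matrix norm). First I would establish that, with the receive filters $\mathrm{U}_j$ and the precoders $\mathrm{V}_i$ held fixed (as is appropriate within the alternating-optimization framework inherited from \cite{IA-RCRM}), the interference matrix $\mathcal{J}_j$ depends \emph{linearly} on the collection of relay coding matrices $\{\mathrm{G}_l\}_{l=1}^L$. This follows by substituting the equivalent channel (\ref{Theta}), namely $\Theta_{ji}=\sum_{l=1}^{L}\mathrm{f}_{jl}\mathrm{G}_{l}\mathrm{h}_{li}$, into the definition (\ref{Def-JS}). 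Since the $\mathrm{f}_{jl}$ and $\mathrm{h}_{li}$ are fixed scalars, each block $\mathrm{U}_j^{\dag}\Theta_{ji}\mathrm{V}_i$ is a finite sum of terms $\mathrm{f}_{jl}\mathrm{h}_{li}\,\mathrm{U}_j^{\dag}\mathrm{G}_{l}\mathrm{V}_i$, each linear in the matrix variable $\mathrm{G}_l$; horizontal concatenation over $i\neq j$ preserves linearity, so $\mathcal{J}_j$ is a linear, hence affine, matrix-valued function of $\{\mathrm{G}_l\}_{l=1}^L$.

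Next I would record that the nuclear norm $\|\cdot\|_*=\sum_{n}\sigma_n(\cdot)$ is convex. This is immediate because it is a matrix norm: it is positively homogeneous and obeys the triangle inequality, which together imply convexity. Composing this convex function with the affine map $\{\mathrm{G}_l\}\mapsto\mathcal{J}_j$ therefore yields, for each fixed $j$, a convex function of $\{\mathrm{G}_l\}_{l=1}^L$, by the composition rule for convexity.

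Finally I would assemble the objective. The expression in (\ref{convx-approx-1}) is $\frac{1}{\mu}\sum_{j=1}^{K}\|\mathcal{J}_j\|_*$, a nonnegative-weighted sum of the convex functions just obtained. Since any nonnegative combination of convex functions is convex, the objective is convex in $\{\mathrm{G}_l\}_{l=1}^L$, which completes the argument.

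The only delicate point — and the step I would treat most carefully — is the first one: verifying that the block concatenation defining $\mathcal{J}_j$ is genuinely affine in the \emph{matrix} variables $\mathrm{G}_l$ (and not merely in their scalar entries), and confirming that freezing $\mathrm{U}_j,\mathrm{V}_i$ is legitimate so that the claim is a per-block convexity statement rather than a joint one. Once this affine structure is pinned down, the convexity-under-affine-composition argument is entirely routine, and notably no spectral or eigenvalue estimates are needed here; those enter only through the signal-subspace constraint treated in the companion lemma \ref{lemma-convex2-0}.
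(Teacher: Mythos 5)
Your proof is correct and follows essentially the same route as the paper: both arguments hinge on the observation that, with $\mathrm{U}_j$ and $\mathrm{V}_i$ fixed, substituting (\ref{Theta}) into (\ref{Def-JS}) makes $\mathcal{J}_j$ an affine function of $\{\mathrm{G}_l\}_{l=1}^L$, after which convexity follows from composition of a convex function with an affine map. The only cosmetic difference is that you prove convexity of the nuclear norm directly from its norm axioms and keep the matrix-variable viewpoint, whereas the paper vectorizes the entries of $\mathrm{G}_l$ and $\mathcal{J}_j$ (passing to the real domain) and cites \cite{IA-RCRM} for convexity in the resulting vector variable.
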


\begin{proof}
\cite[\S3.2.2]{convex-boyd} indicates the operation of
\textit{composition with an affine mapping} preserves convexity or
concavity of functions. Suppose a function $f:
\mathbb{R}^n\rightarrow\mathbb{R}$, and a matrix
$A\in\mathbb{R}^{n\times m}$, and a vector $b\in\mathbb{R}^n$.
Define a function $g: \mathbb{R}^m\rightarrow\mathbb{R}$ by
$g(x)=f(Ax+b)$, with $\mathbf{dom}\,g=\{x|Ax+b\in\mathbf{dom}\,f\}$,
where $x\in\mathbb{R}^m$ and $\mathbf{dom}$ denotes the domain. Then
if $f$ is convex,
so is $g$. 

Extract all $T^2L$ elements from all $\mathrm{G}_l$ as
$\mathcal{G}\in\mathbb{C}^{T^2L\times 1}$, and extract all
$d(K-1)dK$ elements from all $\mathcal{J}_j$ as one vector
$\mathcal{E}\in\mathbb{C}^{d^2(K-1)K\times 1}$. According to
(\ref{Theta}) and (\ref{Def-JS}), we obtain that
$\mathcal{E}=\mathcal{A}\mathcal{G}$, where
$\mathcal{A}\in\mathbb{C}^{d^2(K-1)K\times T^2L}$ is composed of
elements from $\{\mathrm{U}_j\}$, $\{\mathrm{V}_i\}$, $\{f_{jl}\}$,
$\{h_{li}\}$. Since $\mathcal{E}$, $\mathcal{A}$ and $\mathcal{G}$
are in the complex domain, they could be transformed into
$\mathcal{E}_{\mathbb{R}}$, $\mathcal{A}_{\mathbb{R}}$ and
$\mathcal{G}_{\mathbb{R}}$ in the real domain by simple matrix
operations. Then
$\mathcal{E}_{\mathbb{R}}=\mathcal{A}_{\mathbb{R}}\mathcal{G}_{\mathbb{R}}$,
where $\mathcal{G}_{\mathbb{R}}\in\mathbb{C}^{2T^2L\times 1}$,
$\mathcal{E}_{\mathbb{R}}\in\mathbb{C}^{2d^2(K-1)K\times 1}$,
$\mathcal{A}_{\mathbb{R}}\in\mathbb{C}^{2d^2(K-1)K\times 2T^2L}$.
Thus, $\mathcal{E}_{\mathbb{R}}$ is an affine mapping from
$\mathcal{G}_{\mathbb{R}}$. \cite{IA-RCRM} concludes
(\ref{convx-approx-1}) is convex regarding
$\mathcal{E}_{\mathbb{R}}$, therefore (\ref{convx-approx-1}) is also
convex regarding $\mathcal{G}_{\mathbb{R}}$, i.e. regarding
$\{\mathrm{G}_l\}_{l=1}^L$.
%
\end{proof}

\begin{Lemma}
The feasible set (\ref{convx-approx-2}) is convex regarding
variables $\{\mathrm{G}_l\}_{l=1}^L$.\label{lemma-convex2-0}
\end{Lemma}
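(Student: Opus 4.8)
The plan is to mirror the structure of Lemma \ref{lemma-convex1-0}: first establish that the signal subspace matrix $\mathcal{S}_j$ depends \emph{affinely} on the relay coding variables $\{\mathrm{G}_l\}_{l=1}^L$, and then invoke the fact that the preimage of a convex set under an affine map is convex (\cite[\S3.2.2]{convex-boyd}). The convexity of the feasible set (\ref{convx-approx-2}) in the matrix variable $\mathcal{S}_j$ is already available from \cite{IA-RCRM}, so the only new content is transporting it through the relay-coding parametrization.

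First I would record the affine dependence. By (\ref{Theta}) the equivalent channel $\Theta_{jj}=\sum_{l=1}^{L}\mathrm{f}_{jl}\mathrm{G}_l\mathrm{h}_{lj}$ is linear in $\{\mathrm{G}_l\}$, and by (\ref{Def-JS}) we have $\mathcal{S}_j=\mathrm{U}_j^{\dag}\Theta_{jj}\mathrm{V}_j$, which is linear in $\Theta_{jj}$ and hence affine (indeed linear) in $\{\mathrm{G}_l\}$. As in the previous lemma, I would vectorize: stacking the entries of all $\mathrm{G}_l$ into $\mathcal{G}$ and the entries of $\mathcal{S}_j$ into a vector yields a relation of the form $\mathrm{vec}(\mathcal{S}_j)=\mathcal{B}_j\mathcal{G}$ with $\mathcal{B}_j$ built from $\mathrm{U}_j$, $\mathrm{V}_j$, $\{\mathrm{f}_{jl}\}$, $\{\mathrm{h}_{lj}\}$, and then pass to the real embedding $\mathcal{G}_{\mathbb{R}}$ exactly as before so that the map is a genuine affine map over the reals.

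Second I would argue that the target set in $\mathcal{S}_j$-space is convex. The constraint $\mathcal{S}_j\succeq 0_{d\times d}$ confines $\mathcal{S}_j$ to the Hermitian positive semidefinite cone, which is convex; the constraint $\lambda_{\mathrm{min}}(\mathcal{S}_j)>\epsilon$ is the strict superlevel set of the minimum-eigenvalue functional, and since $\lambda_{\mathrm{min}}$ is concave on Hermitian matrices this superlevel set is convex (equivalently it is the shifted cone $\mathcal{S}_j-\epsilon\mathrm{I}_d\succ 0$). I would simply quote this convexity from \cite{IA-RCRM} rather than reprove it.

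Finally I combine the two observations: the feasible set in the $\{\mathrm{G}_l\}$ variables is the preimage, under the affine map $\mathcal{G}_{\mathbb{R}}\mapsto\mathcal{S}_j$, of a convex set, and affine preimages preserve convexity; intersecting over $j\in\mathcal{K}$ keeps the set convex, giving the claim. The main obstacle — and essentially the only delicate point — is the well-definedness of $\lambda_{\mathrm{min}}(\mathcal{S}_j)$ and of the ordering $\succeq$, i.e. ensuring $\mathcal{S}_j$ is handled as a Hermitian object (via Hermiticity of the construction or a Hermitian-part convention consistent with \cite{IA-RCRM}) so that the eigenvalue constraint is meaningful. Once that convention is fixed, the affineness argument carries over verbatim and the remainder is routine.
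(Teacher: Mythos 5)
Your proposal is correct and follows essentially the same route as the paper: express $\mathcal{S}_j$ as an affine (in fact linear) function of the vectorized, real-embedded relay gains via (\ref{Theta}) and (\ref{Def-JS}), quote the convexity of the constraint set in $\mathcal{S}_j$-space from \cite{IA-RCRM}, and conclude by the affine-preimage property (the paper cites \cite[\S2.3]{convex-boyd} for this set-level fact rather than \S3.2.2, which concerns functions). Your added remarks on the concavity of $\lambda_{\mathrm{min}}$, the intersection over $j$, and the Hermiticity convention are refinements the paper leaves implicit, not a different argument.
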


\begin{proof}
\cite[\S2.3]{convex-boyd} indicates that \textit{affine function}
preserves convexity of sets, or allows us to construct convex sets
from others. $S\subseteq \mathbb{R}^n$ is a convex set. If the
function $f: \mathbb{R}^k\rightarrow\mathbb{R}^n$ is affine, i.e.
$f(x)=Ax+b$, where $A\in\mathbb{R}^{n\times k}$,
$x\in\mathbb{R}^{k\times 1}$ and $b\in\mathbb{R}^n$, then the
inverse image of $S$ under $f$: $f^{-1}(S)=\{x|f(x)\in S\}$, is
convex.

Similar to \textit{lemma} \ref{lemma-convex1-0}, extract elements
from $\{\mathrm{S}_j\}$, $\{\mathrm{G}_l\}$ as
$\mathcal{S}_{\mathbb{R}}$, $\mathcal{G}_{\mathbb{R}}$ in the real
domain. According to (\ref{Theta}) and (\ref{Def-JS}),
$\mathcal{S}_{\mathbb{R}}=\mathcal{B}_{\mathbb{R}}\mathcal{G}_{\mathbb{R}}$,
where $\mathcal{B}_{\mathbb{R}}$ is composed of elements from
$\{\mathrm{U}_j\}$, $\{\mathrm{V}_i\}$, $\{f_{jl}\}$, $\{h_{li}\}$.
\cite{IA-RCRM} concludes that feasible set of
$\mathcal{S}_{\mathbb{R}}$ under constraint (\ref{convx-approx-2})
is convex, so that $\mathcal{G}_{\mathbb{R}}$ feasible set as the
inverse image of $\mathcal{S}_{\mathbb{R}}$ is also convex, i.e.
$\{\mathrm{G}_l\}_{l=1}^L$ feasible set is convex.
\end{proof}

%
%


\subsubsection{Numerical Results for the Single Antenna Network}

Thus \textit{lemma} \ref{lemma-convex1} and \textit{lemma}
\ref{lemma-convex2} prove that (\ref{convx-approx-1and2}) is a
typical convex optimization regarding $\{\mathrm{G}_l\}_{l=1}^{L}$.
So that it is convenient to use the software tool CVX to proceed the
numerical results \cite{cvx}. To compare the effects of precoding at
the edge and relay coding in the middle, 8 different cases are
simulated with configurations shown in Table \ref{table-config}.

\normalsize
\begin{table}[htpb]
\centering
\begin{tabular}{|c|c|cccc|c|c|c|} \hline
Case &  1  &  2  &  3  &  4  &  5  &  6  &  7  &  8   \\
\hline
  K  &  3  &  3  &  3  &  3  &  3  &  4  &  4  &  4   \\
\hline
  L  &  5  &  5  &  5  &  5  &  5  & 5  &  5  &  10    \\
\hline
  T  &  10  &  10  &  10  &  10  &  10  &  10  &  15  &  10   \\
\hline
  d  &  4  &  4  &  4  &  4  &  4  &  2  &  4  &  4    \\
\hline
Process  &  1  &  2  &  3  &  4  &  5  &  2  &  2  &  2  \\
\hline
\end{tabular}\caption{Configurations of different simulation cases for relay
coding}\label{table-config}
\end{table}

In Table \ref{table-config}, five different processes are provided
in the comparison as well. Process 1 is set as a reference, where
the relay coding matrices $\{\mathrm{G}_l\}$, precoding and
zero-forcing matrices $\{\mathrm{V}_i\}$ and $\{\mathrm{U}_j\}$ are
randomly generated. While in process 2, only $\{\mathrm{V}_i\}$ and
$\{\mathrm{U}_j\}$ are randomly generated, relay coding matrices
$\mathrm{G}_l$ are designed by solving the RCRM problem
(\ref{convx-approx-1and2}) with CVX \cite{cvx}. Process 3 is on the
contrary, where $\{\mathrm{G}_l\}$ are randomly generated, while
$\mathrm{V}_i$ and $\mathrm{U}_j$ are solved by using the
conventional leakage minimization algorithm of interference
alignment \cite{approaching-capacity-IA}, on the equivalent channels
$\Theta_{ji}$ as in (\ref{Theta}). Process 4 performs the leakage
minimization algorithm to optimize $\mathrm{V}_i$ and $\mathrm{U}_j$
additionally on the basis of Process 2. Process 5 applies the RCRM
method to optimize relay coding $\{\mathrm{G}_l\}$ additionally on
the basis of Process 3.

\begin{figure}[htpb]
  \centering
    \includegraphics[width=4.5in]{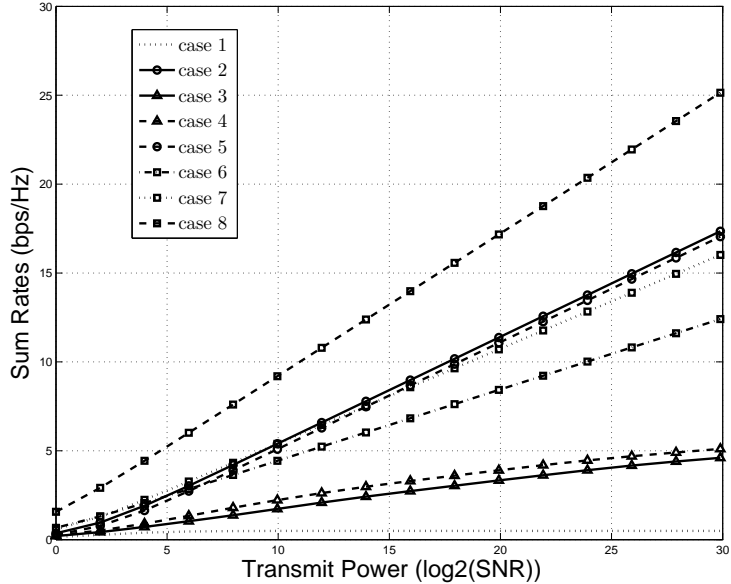}\\
  \caption[Numerical Results of the RCRM method in Time-Extended Single Antenna Channels]{Numerical Results of the RCRM method in Time-Extended Single Antenna Channels (8 cases configured as in Table \ref{table-config})}\label{IA-Relay-Nuclear}
\end{figure}

Numerical results for the above 8 cases in Table \ref{table-config}
are illustrated in Fig. \ref{IA-Relay-Nuclear}. Case 1 shows an
equivalent interference channel as in (\ref{output-HD-bf}) without
any interference alignment design, so that the sum rate deteriorates
in high SNR and the achieved DoF is 0. Case 2 shows that process 2
uses the rank minimization method to design $\{\mathrm{G}_l\}$ to
achieve IA and obtain DoF. Case 3 shows that process 3 uses the
leakage minimization method to design $\mathrm{V}_i$, $\mathrm{U}_j$
to achieves IA as well, but obtaining lower DoF. Case 4 attempts to
additionally optimize $\mathrm{V}_i$, $\mathrm{U}_j$ by leakage
minimization on the basis of process 2, however the DoF is reduced
to the same level of process 3. The reason is the IA structure
formed by relay coding using rank minimization is destroyed, due to
the \textit{inseparable} inter-relationships of relay coding
variables as shown in (\ref{Def-JS}). Case 5 shows that process 5
takes an additional step to optimize $\{\mathrm{G}_l\}$ using rank
minimization method on the basis of process 3. Notice the DoF rises
to the same level of process 2, which means the step of relay coding
plays the key role in achieving interference alignment. Finally,
compare the cases 6, 7 and 8 with case 2. Case 6 shows that when the
number of users increase from 3 to 4, interference alignment is no
longer feasible. So that less concurrent streams are allowed, i.e.
$d$ is reduced from $4$ to $d=2$ to resume IA. Case 7 shows that
when the number of users increase to 4, while maintaining the same
number of streams as case 2, IA could be still feasible if we
increase the time extension length from $T=10$ to $T=15$. However
the DoF is yet not increased along with the number of streams,
because time extension should be considered resulting in
$\text{DoF}=d/T$. So that in case 8, in order to obtain the same
number of DoF per user as in case 2, the issue is settled by
increasing the number of relays from $L=5$ to $L=10$.


%
%
%
%

\subsection{Applying Novel Relay Coding to Diverse General MIMO Relay Channels}

The novel application of rank constrained rank minimization (RCRM)
method is proved to be very effective in the relay coding of the
$K$-pair time-extended single antenna networks. Furthermore, it is
natural to think of extending this idea to more general and
realistic scenarios. Network with all nodes equipped with multiple
antennas (i.e. MIMO networks) are usually investigated in other
works
\cite{IA-MIMO-AF-IFC,relay-beamforming-Interference-Pricing-two-hop}.
It is in general hard to design beamforming to mitigate and align
interference with conventional optimization techniques. For example,
the end-to-end sum-rate maximization problem is non-convex and
NP-hard, and implicit MSE(minimum square error) metric and the
interference pricing method are considered to obtain sub-optimal
solutions
\cite{IA-MIMO-AF-IFC,relay-beamforming-Interference-Pricing-two-hop}.
RCRM method is a promising efficient approach to deal with such MIMO
interference channels \cite{IA-RCRM,IA-RCRM-Globcom}. It is yet
non-trivial to apply the RCRM method to MIMO interference channels
with relays, where the convexity for relay coding is an issue.
Anyhow, the novel approach to apply RCRM in MIMO relay networks
provides an simple accessible way.

The following work focuses on general MIMO relay networks without
any time symbol extensions. Although it is usually infeasible for
MIMO networks to obtain 1/2 DoF for every user by interference
alignment \cite{feasibility-IA-MIMO-IFC}, the idea of suppressing
interference subspace dimension could be still used to obtain high
DoF, which captures the key value of interference alignment at high
SNR. While RCRM method provides such a measure to minimize the
subspace dimension to obtain suboptimal DoF through convex
approximation \cite{IA-RCRM,IA-RCRM-Globcom}. The RCRM method is
non-trivially extended in this work to apply to relay-connected MIMO
networks. So that relay coding is involved and designed along with
precoding at source and destination nodes, which reflects the
strategy of \textit{coding in the middle}
\cite{NC-Three-Unicast-IA}.

Moreover, this novel design is further found to be naturally
applicable to a wide range of typical relay networks, which are
originally hard to analyze and solve. These scenarios include MIMO
relay channel, MIMO two-way relay channel, MIMO Y channel, and MIMO
multi-hop relay channel.

\subsubsection{MIMO Relay Channel}

MIMO relay channel exactly means a two-hop interference channel with
half-duplex relays and a two-stage transmission procedure. Consider
the model in Fig. \ref{relay-IA-pic2}. There are $K$ sources denoted
from $I_1$ to $I_K$; $K$ destinations denoted from $J_1$ to $J_K$;
$L$ relays denoted from $R_1$ to $R_L$. All nodes are set to be
equipped with multiple antennas. $I_i$ has $M_{\xi}$ antennas,
$\forall i\in\mathcal{K}$; $R_l$ has $M_{\chi}$ antennas, $\forall
l\in\mathcal{L}$; $J_j$ has $M_{\omega}$ antennas, $\forall
j\in\mathcal{K}$. All the channels are constant.
$\mathrm{H}_{R_lI_i}$ and $\mathrm{H}_{J_jR_l}$ as channels from
$I_i$ to $R_l$ and from $R_l$ to $J_j$ respectively.
$\mathrm{H}_{R_lI_i}\in\mathbb{C}^{M_{\chi}\times M_{\xi}}$ and
$\mathrm{H}_{J_jR_l}\in\mathbb{C}^{M_{\omega}\times M_{\chi}}$. Then
the equivalent channel $\Theta_{ji}\in\mathbb{C}^{M_{\omega}\times
M_{\xi}}$ is expressed as:

\normalsize
\begin{equation}\label{Theta-RCRM}
\begin{aligned}
\Theta_{ji}=\sum_{l=1}^{L}\mathrm{H}_{J_jR_l}\mathrm{G}_{l}\mathrm{H}_{R_lI_i}
\end{aligned}
\end{equation}
\normalsize

For each user $k$, $I_k$ transmits $d$ streams to $J_k$. So that set
$\mathrm{V}_{i}\in\mathbb{C}^{M_{\xi}\times d}$ as the precoding
matrix at source $I_i$; $\mathrm{G}_{l}\in\mathbb{C}^{M_{\chi}\times
M_{\chi}}$ as the network coding matrix at relay $R_l$;
$\mathrm{U}_{j}\in\mathbb{C}^{M_{\omega}\times d}$ as the receive
filter matrix at destination $J_j$.

The procedure to apply RCRM method in this MIMO relay network is the
same as above in the time-extended single antenna network. The
objective function and feasible set are in the same expressions of
(\ref{convx-approx-1}) and (\ref{convx-approx-2}). The final problem
is also in the expression of (\ref{convx-approx-1and2}). It is
necessary to prove the convexity of the problem in MIMO relay
networks by following lemma \ref{lemma-convex1} and lemma
\ref{lemma-convex2}.

\begin{Lemma}
The objective function (\ref{convx-approx-1}) for MIMO relay
networks is convex regarding
$\{\mathrm{G}_l\}_{l=1}^L$.\label{lemma-convex1}
\end{Lemma}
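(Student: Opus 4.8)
The plan is to reuse, almost verbatim, the affine--composition argument that established Lemma \ref{lemma-convex1-0} in the single-antenna setting; the only genuinely new ingredient is that the channel coefficients in the equivalent channel (\ref{Theta-RCRM}) are now matrices rather than scalars. Recall the principle invoked there, namely \cite[\S3.2.2]{convex-boyd}: composition of a convex function with an affine mapping is convex. Since the nuclear norm $\|\cdot\|_*$ is a norm and hence convex, and since \cite{IA-RCRM} has already shown that the objective (\ref{convx-approx-1}) is convex when viewed as a function of the entries of the matrices $\mathcal{J}_j$, it suffices to prove that each $\mathcal{J}_j$ depends \emph{affinely} (in fact linearly) on the relay coding matrices $\{\mathrm{G}_l\}_{l=1}^L$.

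First I would establish that linear dependence. The critical step is the triple matrix product $\mathrm{H}_{J_jR_l}\mathrm{G}_l\mathrm{H}_{R_lI_i}$ appearing in (\ref{Theta-RCRM}). Using the vectorization identity $\mathrm{vec}(\mathrm{H}_{J_jR_l}\mathrm{G}_l\mathrm{H}_{R_lI_i}) = \bigl(\mathrm{H}_{R_lI_i}^{\top}\otimes\mathrm{H}_{J_jR_l}\bigr)\,\mathrm{vec}(\mathrm{G}_l)$, each summand is a fixed linear map applied to the entries of $\mathrm{G}_l$, so the equivalent channel $\Theta_{ji}$ is linear in $\{\mathrm{G}_l\}_{l=1}^L$. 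Consequently each block $\mathrm{U}_j^{\dag}\Theta_{ji}\mathrm{V}_i$ of $\mathcal{J}_j$ in (\ref{Def-JS}), being left- and right-multiplied by the \emph{fixed} matrices $\mathrm{U}_j^{\dag}$ and $\mathrm{V}_i$, inherits this linearity.

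Next I would vectorize exactly as in Lemma \ref{lemma-convex1-0}: stack all $L M_{\chi}^2$ entries of $\{\mathrm{G}_l\}$ into a column $\mathcal{G}$ and all $d^2(K-1)K$ entries of $\{\mathcal{J}_j\}$ into a column $\mathcal{E}$, obtaining $\mathcal{E}=\mathcal{A}\mathcal{G}$, where $\mathcal{A}$ is assembled from the fixed quantities $\{\mathrm{U}_j\}$, $\{\mathrm{V}_i\}$, $\{\mathrm{H}_{J_jR_l}\}$, $\{\mathrm{H}_{R_lI_i}\}$ through the Kronecker blocks above. Splitting into real and imaginary parts yields a genuine real affine map $\mathcal{E}_{\mathbb{R}}=\mathcal{A}_{\mathbb{R}}\mathcal{G}_{\mathbb{R}}$. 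Because (\ref{convx-approx-1}) is convex in $\mathcal{E}_{\mathbb{R}}$ by \cite{IA-RCRM}, and $\mathcal{E}_{\mathbb{R}}$ is affine in $\mathcal{G}_{\mathbb{R}}$, the composition rule delivers convexity in $\mathcal{G}_{\mathbb{R}}$, that is, in $\{\mathrm{G}_l\}_{l=1}^L$, which is the claim.

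I expect the only real obstacle, relative to the single-antenna case, to be the verification that the matrix sandwich $\mathrm{H}_{J_jR_l}\mathrm{G}_l\mathrm{H}_{R_lI_i}$ is still affine in $\mathrm{G}_l$: in the scalar model this was transparent, whereas here one must explicitly invoke the Kronecker/$\mathrm{vec}$ identity and carefully track the distinct antenna dimensions $M_{\xi}$, $M_{\chi}$, $M_{\omega}$ so that the assembled matrix $\mathcal{A}_{\mathbb{R}}$ has the right shape. Once that linearity is secured, every remaining step transfers unchanged from Lemma \ref{lemma-convex1-0}.
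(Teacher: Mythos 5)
Your proposal follows essentially the same route as the paper's proof: vectorize the relay matrices $\{\mathrm{G}_l\}$ and the interference blocks $\mathcal{J}_j$, observe that (\ref{Theta-RCRM}) and (\ref{Def-JS}) make the latter a linear (hence affine) function of the former, pass to the real domain, and invoke the affine-composition rule of \cite[\S3.2.2]{convex-boyd} together with the convexity in $\mathcal{J}_j$ established in \cite{IA-RCRM}. The only difference is that you make explicit, via the Kronecker/$\mathrm{vec}$ identity, the linearity of the matrix sandwich $\mathrm{H}_{J_jR_l}\mathrm{G}_l\mathrm{H}_{R_lI_i}$, a step the paper asserts without elaboration; this is a welcome clarification but not a different argument.
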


\begin{proof}
\cite[\S3.2.2]{convex-boyd} indicates the operation of
\textit{composition with an affine mapping} preserves convexity or
concavity of functions. Suppose $f:
\mathbb{R}^n\rightarrow\mathbb{R}$, $A\in\mathbb{R}^{n\times m}$,
and $b\in\mathbb{R}^n$. Define $g:
\mathbb{R}^m\rightarrow\mathbb{R}$ as $g(x)=f(Ax+b)$, with
$\mathbf{dom}g=\{x|Ax+b\in\mathbf{dom}f\}$. Then if $f$ is convex,
so is $g$. 

Extract all ${M_{\chi}}^2L$ elements from all $\mathrm{G}_l$ as
$\mathcal{G}\in\mathbb{C}^{{M_{\chi}}^2L\times 1}$; extract all
$d(K-1)dK$ elements from all $\mathcal{J}_j$ as
$\mathcal{F}\in\mathbb{C}^{d^2(K-1)K\times 1}$. According to
(\ref{Theta-RCRM}) and (\ref{Def-JS}), obtain that
$\mathcal{F}=\mathcal{W}\mathcal{G}$, where
$\mathcal{W}\in\mathbb{C}^{d^2(K-1)K\times {M_{\chi}}^2L}$ is
composed of elements from $\{\mathrm{U}_j\}$, $\{\mathrm{V}_i\}$,
$\{\mathrm{H}_{J_jR_l}\}$, $\{\mathrm{H}_{R_lI_i}\}$. Express
$\mathcal{F}$, $\mathcal{W}$, $\mathcal{G}$ in complex domain to
equivalent $\mathcal{F}_{\mathbb{R}}$, $\mathcal{W}_{\mathbb{R}}$
(transformation via matrix operations), $\mathcal{G}_{\mathbb{R}}$
in real domain, so that
$\mathcal{F}_{\mathbb{R}}=\mathcal{W}_{\mathbb{R}}\mathcal{G}_{\mathbb{R}}$,
where $\mathcal{G}_{\mathbb{R}}\in\mathbb{C}^{2{M_{\chi}}^2L\times
1}$, $\mathcal{F}_{\mathbb{R}}\in\mathbb{C}^{2d^2(K-1)K\times 1}$,
$\mathcal{W}_{\mathbb{R}}\in\mathbb{C}^{2d^2(K-1)K\times
2{M_{\chi}}^2L}$. So $\mathcal{F}_{\mathbb{R}}$ is an affine mapping
from $\mathcal{G}_{\mathbb{R}}$. \cite{IA-RCRM} concludes
(\ref{convx-approx-1}) is convex regarding
$\mathcal{F}_{\mathbb{R}}$, therefore (\ref{convx-approx-1}) is also
convex regarding $\mathcal{G}_{\mathbb{R}}$, i.e. regarding
$\{\mathrm{G}_l\}_{l=1}^L$.
\end{proof}

\begin{Lemma}
The feasible set (\ref{convx-approx-2}) for MIMO relay networks is
convex regarding $\{\mathrm{G}_l\}_{l=1}^L$.\label{lemma-convex2}
\end{Lemma}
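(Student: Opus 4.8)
The plan is to mirror the argument of Lemma \ref{lemma-convex2-0} almost verbatim, replacing the time-extended equivalent channel (\ref{Theta}) by its MIMO counterpart (\ref{Theta-RCRM}), and to lean on the set-preservation analogue of the tool used for the objective function: the fact from \cite[\S2.3]{convex-boyd} that the inverse image of a convex set under an affine map is convex. Concretely, I would show that the map $\{\mathrm{G}_l\}\mapsto\{\mathcal{S}_j\}$ is affine, invoke \cite{IA-RCRM} for the convexity of the constraint set in the $\mathcal{S}$-variables, and conclude that the feasible set in the $\mathrm{G}$-variables is its preimage, hence convex.

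First I would establish the affine (in fact linear) dependence. From (\ref{Theta-RCRM}) each $\Theta_{jj}=\sum_{l=1}^{L}\mathrm{H}_{J_jR_l}\mathrm{G}_{l}\mathrm{H}_{R_lI_j}$ is linear in the entries of $\{\mathrm{G}_l\}$, and since $\mathcal{S}_j=\mathrm{U}_j^{\dag}\Theta_{jj}\mathrm{V}_j$ by (\ref{Def-JS}) with $\mathrm{U}_j,\mathrm{V}_j$ held fixed, $\mathcal{S}_j$ is likewise linear in $\{\mathrm{G}_l\}$. Vectorizing exactly as in Lemma \ref{lemma-convex1}, I would collect the entries of $\{\mathrm{G}_l\}$ into $\mathcal{G}\in\mathbb{C}^{{M_{\chi}}^2 L\times 1}$ and the entries of $\{\mathcal{S}_j\}$ into a single vector, obtaining a real relation $\mathcal{S}_{\mathbb{R}}=\mathcal{B}_{\mathbb{R}}\mathcal{G}_{\mathbb{R}}$, where $\mathcal{B}_{\mathbb{R}}$ is built only from the fixed data $\{\mathrm{U}_j\},\{\mathrm{V}_i\},\{\mathrm{H}_{J_jR_l}\},\{\mathrm{H}_{R_lI_i}\}$. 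This exhibits $\mathcal{G}_{\mathbb{R}}\mapsto\mathcal{S}_{\mathbb{R}}$ as a linear, hence affine, map of the form $f(x)=Ax$.

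Then, since \cite{IA-RCRM} already proves that the set cut out by $\lambda_{\mathrm{min}}(\mathcal{S}_j)>\epsilon$ and $\mathcal{S}_j\succeq 0_{d\times d}$ is convex in the $\mathcal{S}$-variables, the desired feasible set is precisely the inverse image of this convex set under the affine map above, which is convex by \cite[\S2.3]{convex-boyd}. The step I expect to require the most care is the complex-to-real reduction: I must check that the real embedding of the Hermitian matrix $\mathcal{S}_j$ carries the cone constraint $\mathcal{S}_j\succeq 0_{d\times d}$ and the spectral constraint $\lambda_{\mathrm{min}}(\mathcal{S}_j)>\epsilon$ to genuine convex sets in the $\mathcal{S}_{\mathbb{R}}$ coordinates, so that the preimage argument is legitimately applicable. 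This is exactly the content imported from \cite{IA-RCRM} and is the only nontrivial point; the remaining bookkeeping is identical to the single-antenna Lemma \ref{lemma-convex2-0}, the sole change being the dimension ${M_{\chi}}^2 L$ of $\mathcal{G}$ in place of $T^2 L$.
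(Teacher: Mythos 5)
Your proposal matches the paper's own proof essentially verbatim: both establish that $\{\mathrm{G}_l\}\mapsto\{\mathcal{S}_j\}$ is a linear (affine) map via (\ref{Theta-RCRM}) and (\ref{Def-JS}), vectorize into a real relation of the form $\mathcal{S}_{\mathbb{R}}=\mathcal{B}_{\mathbb{R}}\mathcal{G}_{\mathbb{R}}$, import the convexity of the constraint set in the $\mathcal{S}$-variables from \cite{IA-RCRM}, and conclude by the inverse-image-under-affine-map principle of \cite[\S2.3]{convex-boyd}. Your extra attention to the complex-to-real embedding of the Hermitian cone constraint is a reasonable refinement but does not change the argument.
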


\begin{proof}
\cite[\S2.3]{convex-boyd} indicates that \textit{affine functions}
preserve convexity of sets, or allow us to construct convex sets
from others. $S\subseteq \mathbb{R}^n$ is a convex set. If $f:
\mathbb{R}^k\rightarrow\mathbb{R}^n$ is affine, i.e. $f(x)=Ax+b$,
where $A\in\mathbb{R}^{n\times k}$ and $b\in\mathbb{R}^n$, the
inverse image of $S$ under $f$: $f^{-1}(S)=\{x|f(x)\in S\}$, is
convex.

Similar to \textit{Lemma} \ref{lemma-convex1}, extract elements from
$\{\mathcal{S}_j\}$, $\{\mathrm{G}_l\}$ as
$\mathcal{F}_{\mathbb{R}}$, $\mathcal{G}_{\mathbb{R}}$ in real
domain. According to (\ref{Theta-RCRM}) and (\ref{Def-JS}),
$\mathcal{F}_{\mathbb{R}}=\mathcal{W}_{\mathbb{R}}\mathcal{G}_{\mathbb{R}}$,
where $\mathcal{F}_{\mathbb{R}}$ is composed of elements from
$\{\mathrm{U}_j\}$, $\{\mathrm{V}_i\}$, $\{\mathrm{H}_{J_jR_l}\}$,
$\{H_{R_lI_i}\}$. \cite{IA-RCRM} concludes that feasible set of
$\mathcal{F}_{\mathbb{R}}$ under constraint (\ref{convx-approx-2})
is convex, so that the feasible set of $\mathcal{G}_{\mathbb{R}}$ as
the inverse image of $\mathcal{S}_{\mathbb{R}}$ is also convex, i.e.
the feasible set of $\{\mathrm{G}_l\}_{l=1}^L$ is convex.

%
\end{proof}


\textbf{Iterative RCRM algorithm for MIMO relay networks}: Since
\textit{lemma} \ref{lemma-convex1} and \textit{lemma}
\ref{lemma-convex2} prove that (\ref{convx-approx-1and2})
 is a typical convex optimization regarding
$\{\mathrm{G}_l\}$ for MIMO relay network, as well as regarding
$\{\mathrm{U}_j\}$ and $\{\mathrm{V}_i\}$. Then the optimization
problem could be proceeded in an iterative manner, regarding
$\{\mathrm{V}_i\}$, $\{\mathrm{G}_l\}$, $\{\mathrm{U}_j\}$, until
convergence. At every step, the objective function
(\ref{convx-approx-1}) is non-increasing so that convergence is
guaranteed. Compared with the algorithms relating to mean squared
error (MSE) and sum rate as in \cite{IA-MIMO-AF-IFC}, this algorithm
directly optimizes DoF for high SNR for the MIMO relay network.

\subsubsection{MIMO Two-Way Relay Channel}

%
%
%
%

%

Consider the $K$-pair two-way (bidirectional) relay MIMO
interference channel as in \cite{NC-IA-K-Bidirection-Relay}. The
model is also as shown in Fig. \ref{relay-IA-pic2}. All $I_k$ and
$J_k$ nodes are both sources and destinations, i.e. each $I_k$
exchanges data with its partner node $J_k$ via intermediate relaying
nodes $R_1,R_2,\ldots,R_L$. The whole procedure are three steps:
uplink, forward, downlink. In uplink step, all end nodes $I_k$ and
$J_k$ simultaneously transmit data towards relay nodes
$R_1,R_2,\ldots,R_L$; in forward step, relay node $R_l$ filters its
received signal $\mathrm{Y}_{R_l}$ through a forwarding filter
$\mathrm{G}_l$; in downlink step, relay nodes $R_1,R_2,\ldots,R_L$
broadcast network-coded combined signal to all end nodes $I_k$ and
$J_k$. Denote $\mathrm{H}_{R_lI_i}$, $\mathrm{H}_{I_iR_l}$,
$\mathrm{H}_{J_jR_l}$ and $\mathrm{H}_{R_lJ_j}$ as channels from
$I_i$ to $R_l$, $R_l$ to $I_i$, $R_l$ to $J_j$ and $J_j$ to $R_l$,
respectively. $\mathrm{H}_{R_lI_i}\in\mathbb{C}^{M_{\chi}\times
M_{\xi}}$, $\mathrm{H}_{I_iR_l}\in\mathbb{C}^{M_{\xi}\times
M_{\chi}}$, $\mathrm{H}_{J_jR_l}\in\mathbb{C}^{M_{\omega}\times
M_{\chi}}$, $\mathrm{H}_{R_lJ_j}\in\mathbb{C}^{M_{\chi}\times
M_{\omega}}$. Although it does not affect our scheme, notice the
downlink channel is reciprocal to the uplink channel, i.e.
$\mathrm{H}_{R_lI_i}=\mathrm{H}_{I_iR_l}^{\dag}$,
$\mathrm{H}_{J_jR_l}=\mathrm{H}_{R_lJ_j}^{\dag}$.

Then the received signals $\mathrm{Y}_{J_k}$ and $\mathrm{Y}_{I_k}$
at $J_k$ and $I_k$ after self interference cancellation (also ignore
noise term) are presented as:

\normalsize
\begin{equation}\label{Two-Way-Yj}
\begin{aligned}
\mathrm{Y}_{J_k}&=\sum_{l=1}^L\mathrm{H}_{J_kR_l}\mathrm{G}_l\big(\sum_{i=1}^K\mathrm{H}_{R_lI_i}\mathrm{V}_{I_i}\mathrm{X}_{I_i}+\sum_{j=1,j\neq k}^K\mathrm{H}_{R_lJ_j}\mathrm{V}_{J_j}\mathrm{X}_{J_j}\big)\\
&=\Phi_{J_kI_k}\mathrm{V}_{I_k}\mathrm{X}_{I_k}+\sum_{i=1,i\neq k}^K\Phi_{J_kI_i}\mathrm{V}_{I_i}\mathrm{X}_{I_i}+\sum_{j=1,j\neq k}^K\Phi_{J_kJ_j}\mathrm{V}_{J_j}\mathrm{X}_{J_j}\\
\text{where }&\ \ \Phi_{J_kI_i}=\sum_{l=1}^L\mathrm{H}_{J_kR_l}\mathrm{G}_l\mathrm{H}_{R_lI_i}\ \ \text{and}\ \ \Phi_{J_kJ_j}=\sum_{l=1}^L\mathrm{H}_{J_kR_l}\mathrm{G}_l\mathrm{H}_{R_lJ_j}\\
\mathrm{Y}_{I_k}&=\sum_{l=1}^L\mathrm{H}_{I_kR_l}\mathrm{G}_l\big(\sum_{i=1,i\neq k}^K\mathrm{H}_{R_lI_i}\mathrm{V}_{I_i}\mathrm{X}_{I_i}+\sum_{j=1}^K\mathrm{H}_{R_lJ_j}\mathrm{V}_{J_j}\mathrm{X}_{J_j}\big)\\
&=\Phi_{I_kJ_k}\mathrm{V}_{J_k}\mathrm{X}_{J_k}+\sum_{i=1,i\neq k}^K\Phi_{I_kI_i}\mathrm{V}_{I_i}\mathrm{X}_{I_i}+\sum_{j=1,j\neq k}^K\Phi_{I_kJ_j}\mathrm{V}_{J_j}\mathrm{X}_{J_j}\\
\text{where }&\ \
\Phi_{I_kI_i}=\sum_{l=1}^L\mathrm{H}_{I_kR_l}\mathrm{G}_l\mathrm{H}_{R_lI_i}\
\ \text{and}\ \ \
\Phi_{I_kJ_j}=\sum_{l=1}^L\mathrm{H}_{I_kR_l}\mathrm{G}_l\mathrm{H}_{R_lJ_j}
\end{aligned}
\end{equation}
\normalsize

(\ref{Two-Way-Yj}) shows the model is equivalent to $2K$ users
transmitting in pairs, and there are $(2K)^2$ equivalent channels
between transmitters and receivers which are $\{\Phi_{J_kI_i}\}$,
$\{\Phi_{J_kJ_j}\}$, $\{\Phi_{I_kI_i}\}$, $\{\Phi_{I_kJ_j}\}$. The
coding procedure is just the same as the case of MIMO relay
networks. We could obtain the same RCRM problem for MIMO two-way
relay networks as in (\ref{convx-approx-1and2}), which is proved to
be a convex optimization through the following similar lemmas.

\begin{Lemma}\label{lemma-convex1-two-way}
For MIMO two-way relay networks as in (\ref{Two-Way-Yj}), the
corresponding rank minimization objective function
(\ref{convx-approx-1}) is convex regarding
$\{\mathrm{G}_l\}_{l=1}^L$ .
\end{Lemma}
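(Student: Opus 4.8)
The plan is to follow the proof of Lemma \ref{lemma-convex1} essentially verbatim, invoking the rule of \cite[\S3.2.2]{convex-boyd} that composition of a convex function with an affine mapping preserves convexity. Since \cite{IA-RCRM} already establishes that the objective (\ref{convx-approx-1}), namely the sum of nuclear norms $\sum_j\|\mathcal{J}_j\|_*$, is convex in the entries of the interference matrices, it suffices to show that every interference matrix is an affine (indeed linear) function of the relay-coding variables $\{\mathrm{G}_l\}_{l=1}^L$.

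By (\ref{Two-Way-Yj}) the two-way channel is equivalent to a one-way system with $2K$ virtual users, whose equivalent channels form the four families $\{\Phi_{J_kI_i}\}$, $\{\Phi_{J_kJ_j}\}$, $\{\Phi_{I_kI_i}\}$, $\{\Phi_{I_kJ_j}\}$. Each such channel, for instance $\Phi_{J_kI_i}=\sum_{l=1}^L\mathrm{H}_{J_kR_l}\mathrm{G}_l\mathrm{H}_{R_lI_i}$, is linear in $\{\mathrm{G}_l\}$, being a sum of terms of the form $\mathrm{H}\,\mathrm{G}_l\,\mathrm{H}$. First I would extract all ${M_{\chi}}^2L$ entries of $\{\mathrm{G}_l\}$ into a vector $\mathcal{G}$, and all $d^2(2K-1)(2K)$ entries of the stacked interference matrices $\{\mathcal{J}_k\}$ into a vector $\mathcal{F}$; collecting the contributions above then yields $\mathcal{F}=\mathcal{W}\mathcal{G}$, where $\mathcal{W}$ is built entirely from the fixed channel coefficients $\{\mathrm{H}_{J_kR_l}\}$, $\{\mathrm{H}_{I_kR_l}\}$, $\{\mathrm{H}_{R_lI_i}\}$, $\{\mathrm{H}_{R_lJ_j}\}$ and the fixed precoders and receive filters $\{\mathrm{V}\}$, $\{\mathrm{U}\}$. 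Passing to the real representation $\mathcal{F}_{\mathbb{R}}=\mathcal{W}_{\mathbb{R}}\mathcal{G}_{\mathbb{R}}$, exactly as in Lemma \ref{lemma-convex1}, exhibits the interference entries as an affine image of $\mathcal{G}_{\mathbb{R}}$, and the composition rule then transfers the convexity established in \cite{IA-RCRM} to the variables $\{\mathrm{G}_l\}_{l=1}^L$.

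The only substantive difference from Lemma \ref{lemma-convex1} is bookkeeping: the number of equivalent channels grows from $K^2$ to $(2K)^2$, and the interference seen at each virtual receiver now mixes two kinds of cross terms, those from the opposite-end partners $\{\Phi_{J_kI_i}\}_{i\neq k}$ and those from the same-end nodes $\{\Phi_{J_kJ_j}\}_{j\neq k}$. I expect the main point requiring care to be verifying that, after the self-interference cancellation built into (\ref{Two-Way-Yj}), every surviving cross term is a single linear image of $\{\mathrm{G}_l\}$ with no bilinear coupling across distinct relay blocks. This holds because each $\Phi$ is individually linear in $\{\mathrm{G}_l\}$ and $\mathcal{J}_k$ merely stacks such terms; linearity, and hence the affine structure demanded by the composition rule, is therefore preserved, so the argument of \cite{IA-RCRM} applies unchanged.
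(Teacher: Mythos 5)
Your proposal is correct and follows essentially the same route as the paper, whose entire proof of this lemma is the single line ``Proof is similar to lemma \ref{lemma-convex1}''; you have simply made explicit the affine dependence of the stacked interference entries on $\{\mathrm{G}_l\}$ and the appeal to the composition-with-affine-mapping rule, exactly as in Lemma \ref{lemma-convex1}. The only quibble is bookkeeping: after self-interference cancellation each of the $2K$ virtual receivers sees $2K-2$ (not $2K-1$) interfering transmitters, so the stacked vector has $d^2(2K-2)(2K)$ entries, but this does not affect the argument.
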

\begin{proof}
Proof is similar to lemma \ref{lemma-convex1}.
\end{proof}

\begin{Lemma}\label{lemma-convex2-two-way}
For MIMO two-way relay networks as in (\ref{Two-Way-Yj}), the
corresponding rank constrained feasible set (\ref{convx-approx-2})
is convex regarding $\{\mathrm{G}_l\}_{l=1}^L$.
\end{Lemma}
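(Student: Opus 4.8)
The plan is to prove this lemma exactly as Lemma \ref{lemma-convex2} was proved for the one-way MIMO relay channel, since the two-way network of (\ref{Two-Way-Yj}) has already been recast as an ordinary interference channel with $2K$ effective users and $(2K)^2$ equivalent channels $\{\Phi_{J_kI_i}\}$, $\{\Phi_{J_kJ_j}\}$, $\{\Phi_{I_kI_i}\}$, $\{\Phi_{I_kJ_j}\}$. The single observation that drives everything is that each such equivalent channel has the form $\sum_{l=1}^{L}\mathrm{H}\,\mathrm{G}_l\,\mathrm{H}$, which is \emph{linear}, and hence affine, in the stacked variable $\{\mathrm{G}_l\}_{l=1}^L$. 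Consequently the desired-signal matrix $\mathcal{S}_k=\mathrm{U}_k^{\dag}\Phi\,\mathrm{V}_k$ defined through (\ref{Def-JS}) for each of the $2K$ effective users is also affine in $\{\mathrm{G}_l\}$, because the filters $\mathrm{U}_k$ and $\mathrm{V}_k$ are held fixed while the relay gains are optimized.

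First I would vectorise: extract the entries of every $\mathcal{S}_k$ into a single real vector $\mathcal{S}_{\mathbb{R}}$ and the entries of all $\mathrm{G}_l$ into a real vector $\mathcal{G}_{\mathbb{R}}$, exactly as in the proofs of Lemma \ref{lemma-convex1} and Lemma \ref{lemma-convex2}. The linearity just noted yields an affine relation $\mathcal{S}_{\mathbb{R}}=\mathcal{B}_{\mathbb{R}}\,\mathcal{G}_{\mathbb{R}}$, where $\mathcal{B}_{\mathbb{R}}$ collects only the fixed quantities $\{\mathrm{U}\}$, $\{\mathrm{V}\}$ and the uplink/downlink channels $\{\mathrm{H}_{J_kR_l}\},\{\mathrm{H}_{R_lI_i}\},\{\mathrm{H}_{I_kR_l}\},\{\mathrm{H}_{R_lJ_j}\}$. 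Next I would invoke the result of \cite{IA-RCRM} that the set carved out by the constraints (\ref{convx-approx-2}), namely $\lambda_{\mathrm{min}}(\mathcal{S}_k)>\epsilon$ together with $\mathcal{S}_k\succeq 0_{d\times d}$, is convex in the $\mathcal{S}$-domain. Finally, because the inverse image of a convex set under an affine map is convex (\cite[\S2.3]{convex-boyd}), the preimage of this convex $\mathcal{S}$-set under $\mathcal{G}_{\mathbb{R}}\mapsto\mathcal{B}_{\mathbb{R}}\mathcal{G}_{\mathbb{R}}$ is convex; that preimage is precisely the feasible set expressed in the relay variables $\{\mathrm{G}_l\}$, which is the claim.

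The only point needing genuine care — rather than a literal copy of Lemma \ref{lemma-convex2} — is bookkeeping around the two-way structure, and this I expect to be the main (and essentially only) obstacle. I would verify that the self-interference-cancellation step producing (\ref{Two-Way-Yj}) removes the diagonal $\Phi_{J_kJ_j}$ and $\Phi_{I_kI_i}$ terms \emph{without} introducing any nonlinear dependence on $\mathrm{G}_l$, so that the $2K$ signal matrices $\mathcal{S}_k$ remain simultaneously affine in the one shared set of relay filters. Since the feasible set is an intersection over all $2K$ users and each per-user constraint is the inverse image of a convex set under the \emph{same} affine map $\mathcal{B}_{\mathbb{R}}$, their intersection stays convex; confirming this common affine structure across all $2K$ constraints is exactly the step that distinguishes the two-way case from Lemma \ref{lemma-convex2}. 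The reciprocity relations $\mathrm{H}_{R_lI_i}=\mathrm{H}_{I_iR_l}^{\dag}$ and $\mathrm{H}_{J_jR_l}=\mathrm{H}_{R_lJ_j}^{\dag}$ noted after (\ref{Two-Way-Yj}) merely fix the constant matrix $\mathcal{B}_{\mathbb{R}}$ and do not affect the convexity argument.
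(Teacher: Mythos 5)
Your proposal is correct and follows essentially the same route as the paper, which proves this lemma simply by noting it is "similar to Lemma \ref{lemma-convex2}": you vectorise the relay gains, observe that each $\mathcal{S}_k$ is affine in $\mathcal{G}_{\mathbb{R}}$ because the equivalent channels $\Phi$ are linear in $\{\mathrm{G}_l\}$, and invoke the preservation of convexity under inverse images of affine maps from \cite[\S2.3]{convex-boyd} together with the convexity of (\ref{convx-approx-2}) in the $\mathcal{S}$-domain from \cite{IA-RCRM}. Your extra bookkeeping about the $2K$ effective users sharing one affine map and the harmlessness of self-interference cancellation and reciprocity is a sensible elaboration of what the paper leaves implicit.
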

\begin{proof}
Proof is similar to lemma \ref{lemma-convex2}.
\end{proof}

\subsubsection{MIMO Y Channel}

Consider MIMO Y channel as in
\cite{DoF-MIMO-Y-signal-space-alignment-NC}.
The model is also illustrated by Fig. \ref{relay-IA-pic2}. The
network only has three user nodes which are set as $I_1$, $I_2$,
$I_3$, and one relay node set as $R_1$. Each user intends to convey
independent messages for the other two users via the relay, while
receiving two independent messages from the other two users. The
transmission has two phases. First phase is the multiple-access
channel (MAC) phase, when all users transmit to the relay; second
phase is the broadcast (BC) phase, when the relay generates new
transmitting signals and send to all users. Denote
$\mathrm{V}_{I_i}^{[I_k]}$ and $\mathrm{X}_{I_i}^{[I_k]}$ as the
precoding matrix and input data from $I_i$ to $I_k$.




Then the received signals $\mathrm{Y}_{I_k}$ at $I_k$ after self
interference cancellation and $\mathrm{Y}_{R_1}$ at $R_1$ (both
ignore noise terms) are:

\normalsize
\begin{equation}\label{MIMO-Y-YI}
\begin{aligned}
&\mathrm{Y}_{R_1}=\mathrm{H}_{R_1I_1}\big(\mathrm{V}_{I_1}^{[I_2]}\mathrm{X}_{I_1}^{[I_2]}+\mathrm{V}_{I_1}^{[I_3]}\mathrm{X}_{I_1}^{[I_3]}\big)\\
&\hspace{9mm}+\mathrm{H}_{R_1I_2}\big(\mathrm{V}_{I_2}^{[I_3]}\mathrm{X}_{I_2}^{[I_3]}+\mathrm{V}_{I_2}^{[I_1]}\mathrm{X}_{I_2}^{[I_1]}\big)\\
&\hspace{9mm}+\mathrm{H}_{R_1I_3}\big(\mathrm{V}_{I_3}^{[I_1]}\mathrm{X}_{I_3}^{[I_1]}+\mathrm{V}_{I_3}^{[I_1]}\mathrm{X}_{I_3}^{[I_1]}\big)\\
&\mathrm{Y}_{I_k}=\mathrm{H}_{I_kR_1}\mathrm{G}_1\mathrm{Y}_{R_1}\\
&\text{so that: }\\
&\mathrm{Y}_{I_1}=\big(\mathrm{H}_{I_1R_1}\mathrm{G}_1\mathrm{H}_{R_1I_2}\big)\mathrm{V}_{I_2}^{[I_1]}\mathrm{X}_{I_2}^{[I_1]}+\big(\mathrm{H}_{I_1R_1}\mathrm{G}_1\mathrm{H}_{R_1I_3}\big)\mathrm{V}_{I_3}^{[I_1]}\mathrm{X}_{I_3}^{[I_1]}\\
&\mathrm{Y}_{I_2}=\big(\mathrm{H}_{I_2R_1}\mathrm{G}_1\mathrm{H}_{R_1I_3}\big)\mathrm{V}_{I_3}^{[I_2]}\mathrm{X}_{I_3}^{[I_2]}+\big(\mathrm{H}_{I_2R_1}\mathrm{G}_1\mathrm{H}_{R_1I_1}\big)\mathrm{V}_{I_1}^{[I_2]}\mathrm{X}_{I_1}^{[I_2]}\\
&\mathrm{Y}_{I_3}=\big(\mathrm{H}_{I_3R_1}\mathrm{G}_1\mathrm{H}_{R_1I_1}\big)\mathrm{V}_{I_1}^{[I_3]}\mathrm{X}_{I_1}^{[I_3]}+\big(\mathrm{H}_{I_3R_1}\mathrm{G}_1\mathrm{H}_{R_1I_2}\big)\mathrm{V}_{I_2}^{[I_3]}\mathrm{X}_{I_2}^{[I_3]}\\
\end{aligned}
\end{equation}
\normalsize

(\ref{MIMO-Y-YI}) shows the model is equivalent to $6$ pairs of
users transmitting on $36$ equivalent channels. We could obtain the
same RCRM problem as in (\ref{convx-approx-1and2}) for MIMO Y
channel. It is proved to be convex through the following similar
lemmas.

\begin{Lemma}\label{lemma-convex1-MIMO-Y}
For the MIMO Y channel as in (\ref{MIMO-Y-YI}), the corresponding
rank minimization objective function (\ref{convx-approx-1}) is
convex regarding $\{\mathrm{G}_l\}_{l=1}^L$ .
\end{Lemma}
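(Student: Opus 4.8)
The plan is to follow exactly the template of Lemma \ref{lemma-convex1}, since the only structural change for the MIMO Y channel is the explicit list of equivalent channels and the number of virtual transmit--receive pairs. First I would invoke the composition-with-affine-mapping principle of \cite[\S3.2.2]{convex-boyd}: the objective (\ref{convx-approx-1}) is a sum of nuclear norms, which is a convex function of the entries of the interference matrices $\mathcal{J}_j$; hence it suffices to show that each $\mathcal{J}_j$ depends affinely on the single relay coding variable $\mathrm{G}_1$.

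Next I would exploit the bilinear form of the equivalent channels in (\ref{MIMO-Y-YI}). Every equivalent channel there is of the shape $\mathrm{H}_{I_kR_1}\mathrm{G}_1\mathrm{H}_{R_1I_i}$, in which $\mathrm{G}_1$ appears exactly once and linearly. Using the identity $\mathrm{vec}(A\mathrm{G}_1B)=(B^{\!\top}\!\otimes A)\,\mathrm{vec}(\mathrm{G}_1)$, the vectorized equivalent channel is a linear image of $\mathrm{vec}(\mathrm{G}_1)$. Since, by (\ref{Def-JS}), $\mathcal{J}_j$ is obtained from these equivalent channels only through further left-multiplication by the fixed filters $\mathrm{U}_j^{\dag}$ and right-multiplication by the fixed precoders $\mathrm{V}_i$, the entries of $\mathcal{J}_j$ remain linear in $\mathrm{vec}(\mathrm{G}_1)$.

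I would then collect all $36$ equivalent channels of the six virtual user pairs, stack the resulting interference entries into a vector $\mathcal{F}$, and stack the entries of $\mathrm{G}_1$ into $\mathcal{G}$, so that $\mathcal{F}=\mathcal{W}\mathcal{G}$ with $\mathcal{W}$ built from the fixed quantities $\{\mathrm{H}_{I_kR_1}\}$, $\{\mathrm{H}_{R_1I_i}\}$, $\{\mathrm{U}_j\}$, $\{\mathrm{V}_i\}$. Splitting into real and imaginary parts gives $\mathcal{F}_{\mathbb{R}}=\mathcal{W}_{\mathbb{R}}\mathcal{G}_{\mathbb{R}}$, an affine (indeed linear) map over the reals. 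Because \cite{IA-RCRM} establishes that (\ref{convx-approx-1}) is convex in $\mathcal{F}_{\mathbb{R}}$, composition with this affine map preserves convexity, so (\ref{convx-approx-1}) is convex in $\mathcal{G}_{\mathbb{R}}$, i.e. in $\{\mathrm{G}_l\}_{l=1}^{L}$.

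The argument presents no genuine obstacle beyond careful bookkeeping: the one point I would verify is that the self-interference-cancelled received signals in (\ref{MIMO-Y-YI}) leave each equivalent channel with $\mathrm{G}_1$ entering exactly once, so that the dependence is linear rather than of higher order. Since the Y channel has a single relay and $\mathrm{G}_1$ appears once in each product, this linearity---and hence the affine structure the whole argument rests on---is immediate.
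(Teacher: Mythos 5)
Your proposal is correct and follows essentially the same route as the paper, which simply states that the proof is similar to Lemma~\ref{lemma-convex1}: you reproduce that affine-mapping argument (vectorize the single-occurrence $\mathrm{G}_1$ in each bilinear equivalent channel, stack into $\mathcal{F}_{\mathbb{R}}=\mathcal{W}_{\mathbb{R}}\mathcal{G}_{\mathbb{R}}$, and invoke the convexity of the nuclear-norm objective from the RCRM reference together with composition with an affine map). The explicit $\mathrm{vec}(A\mathrm{G}_1B)=(B^{\top}\otimes A)\mathrm{vec}(\mathrm{G}_1)$ identity is a helpful piece of bookkeeping the paper leaves implicit, but it is the same argument.
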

\begin{proof}
Proof is similar to lemma \ref{lemma-convex1}.
\end{proof}

\begin{Lemma}\label{lemma-convex2-MIMO-Y}
For the MIMO Y channel as in (\ref{MIMO-Y-YI}), the corresponding
rank constrained feasible set (\ref{convx-approx-2}) is convex
regarding $\{\mathrm{G}_l\}_{l=1}^L$.
\end{Lemma}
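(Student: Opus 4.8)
The plan is to follow the same affine-preimage argument used in Lemma \ref{lemma-convex2}, adapting only the bookkeeping to the six-pair, thirty-six-channel structure exposed in (\ref{MIMO-Y-YI}). The decisive structural fact is that the MIMO Y channel has a single relay $R_1$, and every equivalent channel appearing in $\mathrm{Y}_{I_k}$ has the form $\mathrm{H}_{I_kR_1}\mathrm{G}_1\mathrm{H}_{R_1I_i}$, which is \emph{linear} in the single forwarding matrix $\mathrm{G}_1$. Consequently each desired-signal subspace $\mathcal{S}_j$ of (\ref{Def-JS}), being $\mathrm{U}_j^{\dag}\Theta_{jj}\mathrm{V}_j$ with $\Theta_{jj}$ linear in $\mathrm{G}_1$ and with $\mathrm{U}_j,\mathrm{V}_j$ held fixed during the $\mathrm{G}$-update, is itself an affine (indeed linear) function of $\mathrm{G}_1$.

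First I would stack the entries of $\{\mathcal{S}_j\}$ into a real vector $\mathcal{F}_{\mathbb{R}}$ and the entries of $\mathrm{G}_1$ into $\mathcal{G}_{\mathbb{R}}$ via the standard complex-to-real splitting, exactly as in Lemma \ref{lemma-convex2}. Combining (\ref{MIMO-Y-YI}) with (\ref{Def-JS}) then yields a relation $\mathcal{F}_{\mathbb{R}}=\mathcal{W}_{\mathbb{R}}\mathcal{G}_{\mathbb{R}}$, where $\mathcal{W}_{\mathbb{R}}$ is assembled from the fixed quantities $\{\mathrm{U}_j\}$, $\{\mathrm{V}_{I_i}^{[I_k]}\}$, $\{\mathrm{H}_{I_kR_1}\}$ and $\{\mathrm{H}_{R_1I_i}\}$. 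Thus $\mathcal{F}_{\mathbb{R}}$ is an affine image of $\mathcal{G}_{\mathbb{R}}$.

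Next I would invoke \cite{IA-RCRM}, which establishes that the set of $\mathcal{S}$ satisfying the constraint (\ref{convx-approx-2}), namely $\lambda_{\mathrm{min}}(\mathcal{S}_j)>\epsilon$ together with $\mathcal{S}_j\succeq 0_{d\times d}$, is convex. Then by \cite[\S2.3]{convex-boyd}, the inverse image of a convex set under an affine map is convex; applying this to the affine map $\mathcal{G}_{\mathbb{R}}\mapsto\mathcal{F}_{\mathbb{R}}$ shows that the feasible set of $\{\mathrm{G}_l\}_{l=1}^L$ (here $L=1$) is convex, which completes the proof.

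On obstacles: the argument is essentially mechanical once the template is in place, so the only point requiring genuine care is verifying that the seemingly intricate MIMO Y signalling — the two-phase MAC/BC protocol, the per-destination message splitting $\mathrm{V}_{I_i}^{[I_k]}$, and the self-interference cancellation — introduces no dependence on $\mathrm{G}_1$ beyond the single linear factor $\mathrm{H}_{I_kR_1}\mathrm{G}_1\mathrm{H}_{R_1I_i}$. After one confirms from (\ref{MIMO-Y-YI}) that every retained term is of this single-relay form, linear in $\mathrm{G}_1$, the affinity, and hence the convexity, transfers verbatim from Lemma \ref{lemma-convex2}.
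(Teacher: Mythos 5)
Your proposal is correct and takes essentially the same route as the paper, which simply states that the proof is analogous to Lemma \ref{lemma-convex2}: the equivalent channels in (\ref{MIMO-Y-YI}) are linear in the single forwarding matrix $\mathrm{G}_1$, so the desired-signal matrices are affine images of the stacked relay variables, and convexity of the feasible set follows from the affine-preimage property in \cite[\S2.3]{convex-boyd} together with the convexity result of \cite{IA-RCRM}. Your added check that the MAC/BC protocol and self-interference cancellation introduce no further dependence on $\mathrm{G}_1$ is a sensible verification that the paper leaves implicit.
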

\begin{proof}
Proof is similar to lemma \ref{lemma-convex2}.
\end{proof}

\hspace{1mm} Compared with
\cite{DoF-MIMO-Y-signal-space-alignment-NC}, the analytical signal
space alignment and network-coding-aware interference nulling
beamforming scheme is complicated to be implemented and hard to
adapt to changes of structure. While this RCRM method is a direct
solution to form efficient and robust coding.

\subsubsection{MIMO Multi-hop Relay Channel}

Consider MIMO interference channels with multi-hop relays as in
\cite{DOF-Multisource-Relay,DoF-region-class-multisource-Gaussian-relay}.
The model in Fig. \ref{relay-IA-pic2} needs to be modified. There
are $M$ layers of parallel relaying nodes between sources and
destinations in the network. Denote the $m$-th layer's nodes as
$R_{1,m},R_{2,m},\ldots,R_{L,m}$, and the corresponding filter
matrices are
$\mathrm{G}_{1,m},\mathrm{G}_{2,m},\ldots,\mathrm{G}_{L,m}$. The
channels are denoted as $\mathrm{H}_{J_jR_{l,M}}$ from $M$-th layer
of relays to destination $J_j$, $\mathrm{H}_{R_{l,m}R_{l,m-1}}$ from
$(m-1)$-th layer of relays to $m$-th layer of relays,
$\mathrm{H}_{R_{l,1}I_i}$ from source $I_i$ to $1$-st layer of
relays.


For simplicity, only assume there are two layers of relays.
$\mathrm{G}_{q,1}$ is the filter at $q$-th relay in first layer and
$\mathrm{G}_{p,2}$ is the filter at $p$-th relay in second layer.
$\mathrm{H}_{R_{q,1}I_i}$, $\mathrm{H}_{R_{p,2}R_{q,1}}$, and
$\mathrm{H}_{J_jR_{p,2}}$ are channels from source $I_i$ to relay
$R_{q,1}$, from relay $R_{q,1}$ to relay $R_{p,2}$, and from relay
$R_{p,2}$ to destination $J_j$. Then the received signal
$\mathrm{Y}_{J_j}$ at $J_j$ is:

\normalsize
\begin{equation}\label{Multi-hop-Yj}
\begin{aligned}
\mathrm{Y}_{J_j}&=\sum_{p=1}^L\mathrm{H}_{J_jR_{p,2}}\mathrm{G}_{p,2}\sum_{q=1}^L\mathrm{H}_{R_{p,2}R_{q,1}}\mathrm{G}_{q,1}\sum_{i=1}^K\mathrm{H}_{R_{q,1}I_i}\mathrm{V}_{I_i}\mathrm{X}_{I_i}\\
\end{aligned}
\end{equation}
\normalsize

The equivalent channel from $I_i$ to $J_j$ is:

$\Phi_{J_jI_i}=\sum_{p=1}^L\mathrm{H}_{J_jR_{p,2}}\mathrm{G}_{p,2}\sum_{q=1}^L\mathrm{H}_{R_{p,2}R_{q,1}}\mathrm{G}_{q,1}\mathrm{H}_{R_{q,1}I_i}$

\hspace{1mm}

\textbf{Iterative algorithms layer by layer:} At each step, choose
one layer of relays as the coding variables, then we could obtain
the same RCRM problem as in (\ref{convx-approx-1and2}). The whole
algorithm is convergent, since the objective function maintains the
same all the time. It is proved to be a convex optimization problem
through following similar lemmas.

\begin{Lemma}\label{lemma-convex1-multi-hop}
For the MIMO interference channels with multi-hop relays as in
(\ref{Multi-hop-Yj}), the corresponding rank minimization objective
function (\ref{convx-approx-1}) is convex regarding
$\{\mathrm{G}_{l,m}\}_{(l,m)=(1,1)}^{(L,M)}$.
\end{Lemma}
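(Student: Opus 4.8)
The plan is to carry over the affine-composition argument of Lemma~\ref{lemma-convex1} verbatim: since the nuclear-norm sum $\sum_{j=1}^{K}\|\mathcal{J}_j\|_*$ of (\ref{convx-approx-1}) is convex in its matrix arguments \cite{IA-RCRM}, by \cite[\S3.2.2]{convex-boyd} it remains convex under any affine reparametrization of those arguments. So the whole task reduces to exhibiting, after complex-to-real reshaping, a matrix $\widetilde{\mathcal{W}}_{\mathbb{R}}$ --- built only from the fixed physical channels $\{\mathrm{H}_{J_jR_{p,2}}\}$, $\{\mathrm{H}_{R_{p,2}R_{q,1}}\}$, $\{\mathrm{H}_{R_{q,1}I_i}\}$ and from $\{\mathrm{V}_i\},\{\mathrm{U}_j\}$ --- such that the stacked interference vector $\mathcal{F}_{\mathbb{R}}$ is an affine image of a vector collecting all the relay filters $\{\mathrm{G}_{l,m}\}_{(l,m)=(1,1)}^{(L,M)}$.

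The step I expect to be the main obstacle is that, unlike the single-hop channel (\ref{Theta-RCRM}), the multi-hop equivalent channel $\Phi_{J_jI_i}=\sum_{p,q}\mathrm{H}_{J_jR_{p,2}}\mathrm{G}_{p,2}\mathrm{H}_{R_{p,2}R_{q,1}}\mathrm{G}_{q,1}\mathrm{H}_{R_{q,1}I_i}$ following (\ref{Multi-hop-Yj}) couples the filters of different layers through \emph{products}, so the raw stacked gains do not enter $\mathcal{F}_{\mathbb{R}}$ affinely and the composition rule cannot be invoked on them directly. To restore the affine structure I would lift to the cascaded end-to-end operators $\Gamma_{pq}\triangleq\mathrm{G}_{p,2}\mathrm{H}_{R_{p,2}R_{q,1}}\mathrm{G}_{q,1}$ (and, for general $M$, to the full path products $\mathrm{G}_{p,M}\cdots\mathrm{G}_{q,1}$), collected into $\widetilde{\mathcal{G}}_{\mathbb{R}}$; through the definitions in (\ref{Def-JS}) the interference subspace is then \emph{linear} in these cascade operators, yielding $\mathcal{F}_{\mathbb{R}}=\widetilde{\mathcal{W}}_{\mathbb{R}}\,\widetilde{\mathcal{G}}_{\mathbb{R}}$ exactly as in Lemma~\ref{lemma-convex1}.

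With this reparametrization the objective is the convex nuclear-norm sum composed with the affine map $\widetilde{\mathcal{G}}_{\mathbb{R}}\mapsto\mathcal{F}_{\mathbb{R}}$, hence convex in the cascade operator that the whole collection $\{\mathrm{G}_{l,m}\}_{(l,m)=(1,1)}^{(L,M)}$ generates across all layers jointly; this is the sense in which the single-hop convexity extends to the multi-hop network, and it delivers the same CVX-ready program (\ref{convx-approx-1and2}). I would finish by checking two routine points: that the complex-to-real reshaping of Lemma~\ref{lemma-convex1} applies unchanged to $\widetilde{\mathcal{G}}_{\mathbb{R}}$, and that the lifting is compatible with the layer-by-layer iterate of the algorithm --- when all layers but one are held at their current values the cascade operator is genuinely affine in the remaining free filters, so the lifted and raw descriptions agree and each update solves a bona fide convex program.
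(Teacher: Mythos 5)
Your proposal is correct in its operative content and ends where the paper's one-line proof implicitly ends, but it gets there by a different and more self-aware route. The paper's proof is literally ``similar to Lemma~\ref{lemma-convex1}'' and must be read against the \textbf{iterative algorithms layer by layer} paragraph immediately preceding the lemma: the intended argument is that when one layer, say $\{\mathrm{G}_{q,1}\}_{q}$, is the decision variable and the other layer's filters are held fixed, the composite channels $\Phi_{J_jI_i}$ --- and hence the stacked interference vector --- are affine in the free layer's gains, the fixed layer's filters being simply absorbed into the coefficient matrix alongside $\{\mathrm{U}_j\}$, $\{\mathrm{V}_i\}$ and the physical channels, after which composition of the nuclear-norm sum with an affine map gives convexity exactly as in Lemma~\ref{lemma-convex1}. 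Your closing paragraph is precisely this block-coordinate argument, so on the step that the algorithm actually needs you and the paper coincide. What you add, and the paper does not have, is the lifting to cascade operators $\Gamma_{pq}=\mathrm{G}_{p,2}\mathrm{H}_{R_{p,2}R_{q,1}}\mathrm{G}_{q,1}$: this correctly shows the objective is jointly convex in the lifted variables, but be careful not to oversell it --- convexity in $\{\Gamma_{pq}\}$ does not transfer back to joint convexity in the raw filters, because the lift is bilinear and the set of realizable cascades is not an affine (nor convex) subset of the lifted space, so the phrase ``convex in the cascade operator that the whole collection generates'' is not a proof of the lemma as literally stated. Indeed, as you yourself diagnose, the literal claim of convexity regarding $\{\mathrm{G}_{l,m}\}_{(l,m)=(1,1)}^{(L,M)}$ \emph{jointly} fails for $M\geq 2$ for exactly this bilinearity reason; the only reading under which the statement holds --- and the only one the convergence discussion requires --- is the per-layer one, which both your final step and the paper's iteration use. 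In that sense your proposal is more careful than the source: it isolates a real defect in the lemma's phrasing that the terse ``proof is similar'' glosses over, and supplies the correct per-layer affine-composition justification, at the cost of a lifting detour that is sound but does not strengthen the conclusion.
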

\begin{proof}
Proof is similar to lemma \ref{lemma-convex1}.
\end{proof}

\begin{Lemma}\label{lemma-convex2-multi-hop}
For the MIMO interference channels with multi-hop relays as in
(\ref{Multi-hop-Yj}), the corresponding rank constrained feasible
set (\ref{convx-approx-2}) is convex regarding
$\{\mathrm{G}_{l,m}\}_{(l,m)=(1,1)}^{(L,M)}$.
\end{Lemma}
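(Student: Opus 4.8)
The plan is to reproduce verbatim the inverse-image argument used for the single-layer case in Lemma \ref{lemma-convex2}. For each $j$, the feasible set (\ref{convx-approx-2}) is exactly the set of coding configurations whose induced signal matrix $\mathcal{S}_j$ from (\ref{Def-JS}) lands in $\{\mathcal{S}_j\succeq 0_{d\times d},\ \lambda_{\mathrm{min}}(\mathcal{S}_j)>\epsilon\}$, a set already shown to be convex in the entries of $\mathcal{S}_j$ by \cite{IA-RCRM}. By \cite[\S2.3]{convex-boyd}, the inverse image of a convex set under an affine map is convex, so the entire claim reduces to one fact: that the map carrying the stacked real coordinates $\mathcal{G}_{\mathbb{R}}$ of the full collection $\{\mathrm{G}_{l,m}\}_{(l,m)=(1,1)}^{(L,M)}$ to $\mathcal{S}_{\mathbb{R}}$ is \emph{affine}. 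Concretely I would vectorize the multi-hop equivalent channel $\Phi_{J_jI_i}$ appearing below (\ref{Multi-hop-Yj}), substitute it into $\mathcal{S}_j=\mathrm{U}_j^{\dag}\Phi_{jj}\mathrm{V}_j$, split real and imaginary parts exactly as in Lemma \ref{lemma-convex2}, and try to exhibit $\mathcal{S}_{\mathbb{R}}=\mathcal{W}_{\mathbb{R}}\,\mathcal{G}_{\mathbb{R}}$ with $\mathcal{W}_{\mathbb{R}}$ built only from the fixed channel matrices and from $\mathrm{U}_j,\mathrm{V}_i$.

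Establishing that affineness \emph{jointly in all layers} is the main obstacle, and it is a genuine one. For a single relaying layer ($M=1$) the channel is the sum $\sum_{l}\mathrm{H}_{J_jR_{l,1}}\mathrm{G}_{l,1}\mathrm{H}_{R_{l,1}I_i}$, linear in the collection, so $\mathcal{W}_{\mathbb{R}}$ is a constant matrix and the reduction closes immediately. For $M\geq 2$, however, $\Phi_{J_jI_i}=\sum_{p}\sum_{q}\mathrm{H}_{J_jR_{p,2}}\mathrm{G}_{p,2}\mathrm{H}_{R_{p,2}R_{q,1}}\mathrm{G}_{q,1}\mathrm{H}_{R_{q,1}I_i}$ contains the product $\mathrm{G}_{p,2}\cdots\mathrm{G}_{q,1}$ across layers, so $\mathcal{S}_j$ is \emph{multilinear}, not affine, in the full stacked variable. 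Along a segment $(1-t)\mathcal{G}^{0}_{\mathbb{R}}+t\,\mathcal{G}^{1}_{\mathbb{R}}$ the interlayer product generates a $t(1-t)$ cross term, so $\mathcal{S}_j$ traces a genuinely quadratic curve in $t$ rather than the straight line $(1-t)\mathcal{S}_j^{0}+t\,\mathcal{S}_j^{1}$; the positivity and minimum-eigenvalue constraints can then be violated at interior $t$. This is exactly what the inverse-image argument cannot absorb, and it is where the proof has to do real work: a constant $\mathcal{W}_{\mathbb{R}}$ simply does not exist once two or more layers are varied simultaneously.

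The route by which I would try to push the statement through as worded is to force $\mathcal{W}_{\mathbb{R}}$ to be constant by reparametrization. I would fold the product of every layer except one, together with its adjacent channel matrices, into an effective two-hop channel, so that in the active layer $m$ the equivalent channel reads $\Phi_{J_jI_i}=\sum_{l}\widetilde{\mathrm{H}}_{J_jR_{l,m}}\mathrm{G}_{l,m}\widetilde{\mathrm{H}}_{R_{l,m}I_i}$; in this representation the map is linear in $\{\mathrm{G}_{l,m}\}$ and Lemma \ref{lemma-convex2} applies unchanged, so the feasibility region sliced along that layer is convex. The content that remains to be verified for the lemma as stated is precisely that this constant-$\mathcal{W}_{\mathbb{R}}$ representation can be produced for the decision variables actually being moved; since the iterative scheme announced above (\ref{Multi-hop-Yj}) updates one layer at a time, each subproblem is a legitimate affine inverse image and hence convex, and the objective (\ref{convx-approx-1and2}) is non-increasing so convergence follows. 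I would therefore present the interlayer product as the sole source of non-affineness, note that it is what one must neutralize by the effective-channel reparametrization, and flag that without this step the unrestricted joint claim rests only on the single-layer slices being the ones the algorithm ever traverses.
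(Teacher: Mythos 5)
Your proposal is correct and, in mechanics, it is the paper's argument: the paper's entire proof is the sentence ``Proof is similar to lemma \ref{lemma-convex2},'' which---read together with the algorithm description placed just above the lemmas (``At each step, choose one layer of relays as the coding variables'')---amounts exactly to your reduction: freeze every layer but one, fold the frozen coding matrices and adjacent channel matrices into effective two-hop channels $\widetilde{\mathrm{H}}_{J_jR_{l,m}}$, $\widetilde{\mathrm{H}}_{R_{l,m}I_i}$, note that $\mathcal{S}_j$ is then affine in the stacked real coordinates of the active layer, and invoke the affine inverse-image argument of \cite[\S2.3]{convex-boyd} together with the convexity of the constraint set in $\mathcal{S}_j$ from \cite{IA-RCRM}. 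Where you go beyond the paper is in making explicit that this per-layer reading is the \emph{only} sense in which the lemma can hold: for $M\geq 2$ the equivalent channel $\Phi_{J_jI_i}=\sum_{p}\sum_{q}\mathrm{H}_{J_jR_{p,2}}\mathrm{G}_{p,2}\mathrm{H}_{R_{p,2}R_{q,1}}\mathrm{G}_{q,1}\mathrm{H}_{R_{q,1}I_i}$ is bilinear in the pair of layers, your $t(1-t)$ cross-term along a segment is exactly right, and the feasible set is therefore not convex in $\{\mathrm{G}_{l,m}\}_{(l,m)=(1,1)}^{(L,M)}$ jointly, contrary to the lemma's literal wording. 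The paper never acknowledges this gap between statement and proof; its one-line proof is valid only for the single-layer slices that the iterative layer-by-layer algorithm actually traverses, which is precisely the reading you supply and flag. In short: your proof matches the paper's intended argument, and additionally repairs an imprecision in the statement that the paper leaves silent.
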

\begin{proof}
Proof is similar to lemma \ref{lemma-convex2}.
\end{proof}

\hspace{1mm}

In summary, MIMO interference channels with multi-hop relays are
generally hard to have either analytical or numerical solutions
\cite{DOF-Multisource-Relay,DoF-region-class-multisource-Gaussian-relay}.
While this novel RCRM based algorithm provides a robust and
universal solution.

\subsubsection{Numerical Results for Diverse MIMO Relay Networks}

All the numerical results are shown in Fig.
\ref{relay-IA-4case-ISIT}. The above four scenarios of MIMO relay
channel, MIMO two-way relay channel, MIMO Y channel and MIMO
multi-hop channel are denoted as case 1, case 2, case 3, and case 4,
respectively. Configure the four cases as shown in Table
\ref{table-config-ISIT}. In case 3, '1$\times$2' means 2 separate
streams at one user in the MIMO Y channel. In case 4, '3$\times$2'
means 2 layers of relays (3 relay in each layer) in the multi-hop
relay network.

\footnotesize
\begin{table}[!hbp]
\centering
\begin{tabular}{|c|c|c|c|c|c|c|} \hline
Case &  $K$  &  $L$  &  $M_{\xi}$  &  $M_{\chi}$  &  $M_{\omega}$  &  d    \\
\hline
  1  &  3  &  3  &  10  &  5  &  10  &  2    \\
\hline
  2  &  3  &  3  &  10  &  10  &  10  & 2    \\
\hline
  3  &  3  &  1  &  10  &  10  &  10  &  1$\times$2    \\
\hline
  4  &  3  &  3$\times$2  &  10  &  5  &  10  &  2     \\
\hline
\end{tabular}
\caption{Configurations of 4 cases of MIMO relay
networks}\label{table-config-ISIT}
\end{table} \normalsize

Simulation has been done with the software tool CVX \cite{cvx}. In
Fig. \ref{relay-IA-4case-ISIT}, 'Ref' means random coding as a
reference for the 4 kinds of MIMO relay networks, which obviously
obtain 0 DoF. While 'Opt' means results with the IA coding design,
which shows that DoF is well obtained. So that the novel application
of RCRM method in relay networks successfully implements
interference alignment in typical scenarios. Case 2 has the largest
DoF because it has bidirectional transmissions. In case 3, MIMO Y
channel transmits 2 datastreams for each user, however there is only
1 DoF for each virtual port. In case 4, multi-hop relays relaxe the
requirements for individual relays in each layer to maitain high
DoF. In summary, this novel relay coding design successfully
implements interference alignment in above relay-involved scenarios
as a effective universal solution.

\begin{figure}[htpb]
  \begin{center}
    \includegraphics[width=4.5in]{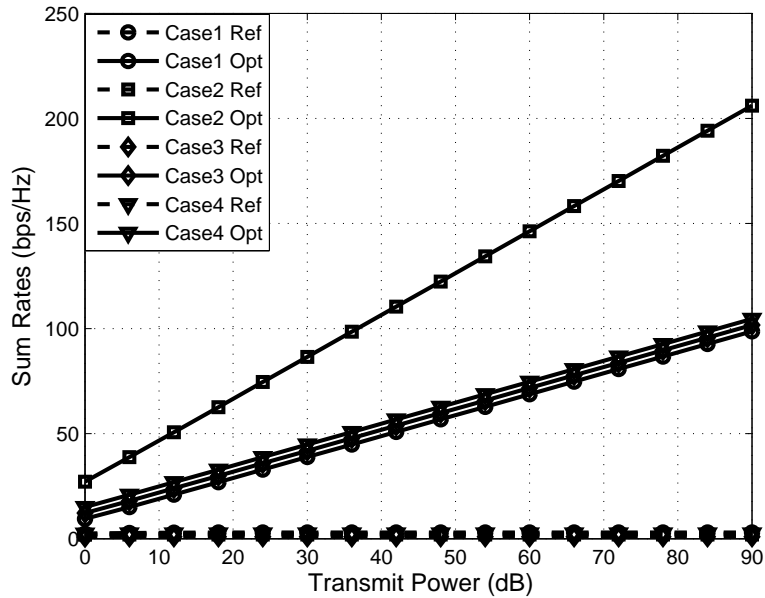}\\
  \end{center}
  \caption[Rates of 4 cases of MIMO relay networks]{Rates of 4 cases of MIMO relay networks (4 cases as configured in Table
\ref{table-config-ISIT} before and after relay coding
optimization)}\label{relay-IA-4case-ISIT}
\end{figure}

\section{Conclusion}
\label{sect:conc} This work considers two strategies in which a
finite number of relays can be employed to achieve IA in a $K$-pair
interference network. In the coding-at-the-edge strategy, in the
single antenna constant channel, relays can be used with time
extension techniques to produce equivalent channels where standard
IA schemes are applicable. However, at least two relays are proved
to be required to produce channel randomness, which is a necessity
for all kinds of IA schemes to be achievable. Novel precoding
schemes with double-layered symbol extensions could also be used to
investigate the issue of constant channels.
In the coding-in-the-middle strategy, as
a universal numerical approach for diverse general MIMO networks
with relays, relay coding can be designed as an optimization problem
to minimize the rank of the interference subspace. Efficient and
robust algorithms are proposed.

\bibliographystyle{IEEEtran}
\bibliography{Thesis}

\end{document}